\documentclass[journal,draftcls,onecolumn,12pt,twoside]{IEEEtran}

\usepackage{amssymb, amsthm, enumerate, array, graphicx, changepage}
\usepackage[utf8]{inputenc} 
\usepackage[T1]{fontenc}
\usepackage{url}
\usepackage{ifthen}
\usepackage{cite}
\usepackage[cmex10]{amsmath} 
\usepackage{tcolorbox}
\usepackage{multirow}

\newtheorem{theorem}{Theorem}
\newtheorem{proposition}{Proposition}
\newtheorem{corollary}{Corollary}
\newtheorem{lemma}{Lemma}
\newtheorem{definition}{Definition}
\newtheorem{example}{Example}

\newtheorem{remark}{Remark}
\newtheorem{conjecture}{Conjecture}

\newcommand{\CAC}{\mathrm{CAC}}
\newcommand{\CACe}{\mathrm{CAC^e}}
\newcommand{\MCCAC}{\text{MC-CAC}}
\newcommand{\HH}{\mathsf{H}}
\newcommand{\C}{\mathcal{C}}
\newcommand{\Z}{\mathbb{Z}}
\newcommand{\LL}{\mathsf{L}}

\title{Multichannel Conflict-Avoiding Codes for Expanded Scenarios}
\author{
{Tsai-Lien Wong, Kangkang Xu, Yuan-Hsun Lo,~\IEEEmembership{IEEE Member},\\ Kenneth W. Shum,~\IEEEmembership{IEEE Senior Member}, Yijin Zhang,~\IEEEmembership{IEEE Senior Member}}
\thanks{
This work was supported in part by the National Science and Technology Council of Taiwan under Grants 113-2115-M-110-003-MY2 and 114-2628-M-153-001-MY3, and in part by the National Natural Science Foundation of China under Grant 62071236.
\emph{(Corresponding author: Kangkang Xu)}
}
\thanks{T.-L. Wong is with the Department of Applied Mathematics, National Sun Yat-sen University, Taiwan. Email: tlwong@math.nsysu.edu.tw}
\thanks{K. Xu and Y.-H. Lo are with the Department of Applied Mathematics, National Pingtung University, Taiwan.  Email: ivykkxu107@gmail.com, yhlo0830@gmail.com}
\thanks{K. W. Shum is with the School of Science and Engineering, The Chinese University of Hong Kong at Shenzhen, Shenzhen 518100, China. Email: wkshum@cuhk.edu.cn}
\thanks{Y. Zhang is with the School of Electronic and Optical Engineering, Nanjing University of Science and Technology, Nanjing 210094, China. Email: yijin.zhang@gmail.com}
}

\begin{document}

\maketitle

\begin{abstract}
A conflict-avoiding code (CAC) of length $L$ and weight $w$ is used for deterministic multiple-access without feedback.
When the number of simultaneous active users is less than or equal to $w$, such a code is able to provide a hard guarantee that each active user has a successful transmission within every consecutive $L$ time slots.
Recently, CACs were extended to multichannel CACs (MC-CACs) over $M>1$ orthogonal channels with the aim of increasing the number of potential users that can be supported.
While most existing results on MC-CAC are derived under the assumption $M\geq w$, this paper focuses on the case $M<w$, which is more relevant to practical application scenarios.
In this paper, we first introduce the concept of exceptional codewords in MC-CACs.
By employing Kneser’s Theorem and some techniques from additive combinatorics, we derive a series of optimal MC-CACs.
Along the way, several previously known optimal CAC results are generalized.
Finally, our results extend naturally to AM-OPPTS MC-CACs and mixed-weight MC-CACs, two classes of relevant codes. 
\end{abstract}

\section{Introduction}\label{sec:intro}
\subsection{Background}
A conflict-avoiding code (CAC)~\cite{LT05} is a deterministic multiple access scheme in a single-channel system that operates without grant-based scheduling, time synchronization, or channel feedback.
Despite this simplicity, a CAC ensures that each active user has at least one conflict-free transmission within a specified time duration by leveraging its strong cross-correlation properties. 
So, compared with probabilistic schemes, CACs are better suited for low-cost distributed networks under sporadic traffic that pursue ultra-high deadline-constrained reliability~\cite{LDENK21,KVSTP25}.
Recently, with the aim of further increasing the maximal number of potential users that can be supported, CACs are extended to multichannel CACs (MC-CACs) in~\cite{CLW19,LSWZ21} by extending their cross-correlation properties in two-dimensional (2D) domains.
Note that protocol sequences~\cite{LWLZCX23} and rendezvous sequences~\cite{CZWZYL22} can also be regarded as variants of CACs designed to satisfy different performance guarantees, which involve different cross-correlation properties.



Consider a $K$-user multiple access system with $M$ orthogonal channels indexed by $\mathcal{I}_M\triangleq\{1,2,..., M\}$ and at most $w$ active users, where the time axis is partitioned into equal-length time slots, whose duration corresponds to the transmission time for one packet.
The relative time offset (in unit of a time slot) of user $k$, denoted by $\tau_k$, can be an arbitrary nonnegative integer.
Following~\cite{MM85}, we assumed that all the time offsets are unknown to all the users, and can never be learned as the users receive no feedback from the channel.
Tailored for this system, the adopted access scheme is defined as an MC-CAC (or a CAC when $M=1$), which is a collection of $K$ codewords, 
each of which consists of $w$ distinct elements in $\mathcal{I}_M\times \Z_L$
such that any two distinct codewords have at most one overlapping element for any relative shift in $\Z_L$ between them.
Here, $\Z_L\triangleq\{0,1,\ldots,L-1\}$ denotes the ring of residue modulo $L$.
Each user $k$ is assigned a unique codeword from this collection,
and sends out a packet over channel $m$ at time slot $t+\tau_k$ if and only if user $k$ is active and the assigned codeword contains an element $(m,t+\tau_k$ (mod $L$)).
As a result, 
each active user has at least one conflict-free transmission in a period of $L$ slots regardless of the integral relative time offsets.
Note that the design of MC-CACs can be easily extended to that for the fully asynchronous model (i.e., $\tau_k$ may be any positive real number for each user $k$)~\cite{MM85,ZLW16}.

\subsection{Preliminaries}


Consider a subset $S\in\mathcal{I}_M\times\Z_L$.
For $m\in\mathcal{I}_M$, define
\begin{align*}
S_m\triangleq\{t:\,(m,t)\in S\}.
\end{align*}
By viewing $S$ as a transmission pattern described in the previous subsection, $S_m$ represents the set of time slots during which packets are transmitted on the $m$-th channel.
Obviously, $|S|=\sum_{m\in\mathcal{I}_M}|S_m|$. 
For two indices $i,j\in\mathcal{I}_M$, define the $(i,j)$ \textit{set of differences} by
\begin{equation}\label{eq:array_difference}
D_S(i,j):=
\begin{cases}
    \{t_1-t_2\ (\text{mod }L):\, t_1\in S_i, t_2\in S_j\} & \text{if }i\neq j, \\
    \{t_1-t_2\ (\text{mod }L):\, t_1\in S_i, t_2\in S_j\}\setminus\{0\} & \text{if }i=j.
\end{cases}
\end{equation}
$D_S$ is called the \textit{array of differences} of $S$.
Two arrays of differences $D_S$ and $D_{S'}$ are said to be disjoint, denoted by $D_S \cap D_{S'}=\emptyset,$ if they are entry-wise disjoint.

The formal definition of MC-CACs is given below.

\begin{definition}\label{def:MCCAC}\rm
Let $M$, $L$ and $w$ be positive integers.
An MC-CAC $\C$ of weight $w$ in $\mathcal{I}_M\times\Z_L$ is a collection of $w$-subsets, also called codewords, of $\mathcal{I}_M\times \mathbb{Z}_L$ such that 
\begin{equation}\label{eq:MCCAC}
D_S \cap D_{S'}=\emptyset \quad \forall S, S'\in \C, S\neq S'.
\end{equation}
The equation~\eqref{eq:MCCAC} is called the \emph{disjoint-difference-array property}.
Let $\MCCAC(M,L,w)$ denote the class of all such MC-CACs.
\end{definition}

The maximum size of a code in $\MCCAC(M,L,w)$ is denoted by $K(M,L,w)$, i.e.,
\begin{align*}
K(M,L,w)\triangleq\max\{|\C|:\,\C\in\MCCAC(M,L,w)\}.
\end{align*}
A code $\C\in\MCCAC(M,L,w)$ is called \textit{optimal} if its code size reaches $K(M,L,w)$.

When $M=1$, an MC-CAC reduces to a (single-channel) CAC, where each codeword is simply a subset of $\Z_L$. 
Accordingly, MC-CAC$(1,L,w)$ and $K(1,L,w)$ reduce to CAC$(L,w)$ and $K(L,w)$, respectively.
In this case, \eqref{eq:MCCAC} can be simplified as
\begin{equation}\label{eq:CAC}
d^*(S)\cap d^*(S')=\emptyset \quad \forall S,S'\in\mathcal{C}, S\neq S',
\end{equation}
by defining $d^*(S)\triangleq\{a-b\, (\text{mod } L):\,a,b\in S, a\neq b\}$, which is called the set of \emph{(nonzero) differences} of $S$. 
The equation~\eqref{eq:CAC} is called the \emph{disjoint-difference-set property}.
Note that, in the identical-index case of~\eqref{eq:array_difference}, one has $D_S(i,i)=d^*(S_i)$.
A $w$-subset $S\subseteq\mathbb{Z}_L$ is said to be \textit{equi-difference} with \textit{generator} $g\in\Z_L^*$ if $S$ is of the form $\{0,g,2g,\ldots,(w-1)g\}$.
A CAC is called equi-difference if it entirely consists of equi-difference codewords.
Let $\CACe(L,w)\subset\CAC(L,w)$ denote the class of all equi-difference codes and $K^e(L,w)$ be the maximum size among $\CACe(L,w)$.
Obviously, $K^e(L,w)\leq K(L,w)$.




\subsection{Related Works}

A common design goal of MC-CACs (or CACs) is to maximize the code size, i.e., to find $K(M,L,w)$.
When $M=1$, it was shown in~\cite{SWC10} that $K(L,w)$ increases approximately with slope $(2w-2)^{-1}$ as a function of length $L$, for any fixed $w$.
Meanwhile, an asymptotically tight upper bound of $K(L,w)$ was given in~\cite{SW10}.
Based on some techniques in finite fields and some properties of cyclotomic cosets, \cite{MMSJ07,ZS22} provided some direct and recursive constructions of optimal CACs for general weights.
Recently, some of the above results were generalized in~\cite{LWXZ25} by the help of some techniques in additive combinatorics.
For small $w$, the exact value of $K(L,3)$ is completely determined by \cite{LT05,JMJTT07,MFU09,FLM10} for even $L$.
As for odd length, $K(L,3)$ is determined for $L$ being some particular prime~\cite{LT05} and some composite number with particular factors~\cite{Levenshtein07,FLS14,MZS14,MM17}.
If only equi-difference codewords are concerned, $K^e(L,w)$ is obtained for some particular $L$ with $w=3$ in \cite{WF13,LMSJ14} and with weight $w=4$ in \cite{LFL15,LMJ16}.
For more constructions and extensions of optimal CACs, one can refer to~\cite{Momihara07,HLS23,HLS24}.

When it comes to MC-CACs, upper bounds of $K(M,L,w)$ of weight $w=3,4$ and $M\geq w$ were derived in~\cite{LSWZ21}, in which some cases were proven to be optimal by direct constructions based on some specific combinatorial designs.
The exact value of $K(M=2,L,w=3)$ was determined in~\cite{LGZ21} for some particular length $L$.
Under the assumption that $M\geq w$ and the prime factors of $L$ are all larger than or equal to $2w-1$, \cite{WFLWG23} derived a general upper bound of $K(M,L,w)$ and obtained some optimal constructions.
Meanwhile, a special class of MC-CACs under the restriction that each user has at most one-packet per time slot (AM-OPPTS) was considered in~\cite{WFLWG23} as well.

\subsection{Contribution}

To date, all known optimal constructions of MC-CACs in the literature~\cite{LSWZ21,LGZ21,WFLWG23} rely on techniques from combinatorial design theory.
On the other hand, a variety of algebraic methods have been developed for constructing CACs (e.g., \cite{MMSJ07,SW10,LWXZ25}).
Motivated by this, in this paper we initiate the use of algebraic tools for constructing MC-CACs.
Meanwhile, most existing results on MC-CACs are derived under the assumption $M\geq w$, which may be undesirable in resource-limited systems. So this paper focuses on the case $M<w$.

The considered length of CACs and MC-CACs in this paper is of the form $L=ap_1^{r_1}\cdots p_n^{r_n}$, where $\gcd(a,p_i)=1$, $1\leq i\leq n$, for some $a$ and primes $p_1,\ldots,p_n$.
As $\gcd(a,p_i)=1$ for all $i$, we have $\Z_{L}\cong\Z_a\times\Z_{p_1^{r_1}}\times\cdots\times\Z_{p_n^{r_n}}$.
A natural bijection between $\Z_L$ and $\Z_a\times\Z_{p_1^{r_1}}\times\cdots\times\Z_{p_n^{r_n}}$ is via the Chinese Remainder Theorem (CRT)~\cite{IR90}, i.e., $\theta:\Z_{L}\to\Z_a\times\Z_{p_1^{r_1}}\times\cdots\times\Z_{p_n^{r_n}}$ by 
\begin{equation}\label{eq:CRT-correspondence}
\theta(x)=(x\,(\bmod\, a), x\,(\bmod\, p_1^{r_1}),\ldots,, x\,(\bmod\, p_n^{r_n})).
\end{equation}
Therefore, a $w$-subset in $\Z_L$ can be simply put as a $w$-subset in $\Z_a\times\Z_{p_1^{r_1}}\times\cdots\times\Z_{p_n^{r_n}}$. 
The remaining algebraic tools will be introduced in Sections~\ref{sec:CAC} -- \ref{sec:MC-CAC_M} as needed.

In the classical study of CACs, upper bounds on $K(L,w)$ are typically derived by characterizing the structure and number of exceptional codewords, leading to tighter bounds~\cite{SWC10,LWXZ25}.
Although this idea has also appeared in the context of MC-CACs~\cite{LSWZ21,LGZ21}, a systematic treatment is still lacking.
In this paper, we introduce a general definition of exceptional codewords for MC-CACs.
By applying Kneser’s theorem, we establish several structural properties and derive upper bounds on $K(M,L,w)$.
These results not only refine those in~\cite{WFLWG23}, which only considered the case $M\geq w$, but also extend the underlying proof techniques.

The technical results and main contributions of this paper are summarized below.
\begin{enumerate}
\item Define the so-called exceptional codewords for MC-CACs, and derive a series of structural properties by means of Kneser's theorem. 
These properties can be seen as generalizations of those of exceptional codewords for traditional CACs.
See Lemmas~\ref{lem:exceptional_new} and \ref{lem:exceptional_2D}. 
\item Derive a general upper bound on the maximum code size of an MC-CAC, $K(M,L,w)$, for $M<w$, where the code length $L$ is under the assumption that the prime factors of $L$ are all larger than or equal to $2w-1$.
This not only refines the previously-known results in the literature, but also extend the underlying proof techniques.
\item Obtain two classes of optimal CACs of lengths $L=p_1^{r_1}\cdots p_n^{r_n}$ and $(w-1)p_1^{r_1}\cdots p_n^{r_n}$ and weight $w$, where $p_i$ are primes.
Based on these two results, we further obtain optimal MC-CACs of length $L=(w-1)p_1^{r_1}\cdots p_n^{r_n}$ and weight $w$ for $M=2$ channels. 
We also derive the lower and upper bounds of $K(M,L,w)$ when $M$ divides $w$, where the code length is $L=(2\frac{w}{M}-1)p_1^{r_1}\cdots p_n^{r_n}$.
\item Extend our results to AM-OPPTS MC-CACs. 
The construction of these codes are extended to mixed-weight MC-CACs, which are defined for the first time in this paper.
\end{enumerate}

The rest of this paper is organized as follows.
We define exceptional codewords for MC-CACs and provide some structural properties about them in Section~\ref{sec:Kneser}.
A general upper bound of $K(M,L,w)$ is derived in Section~\ref{sec:upper-bound}.
Based on two new class of optimal CACs given in Section~\ref{sec:CAC}, we obtain a class of MC-CACs for two channels in Section~\ref{sec:MC-CAC_2}.
Section~\ref{sec:MC-CAC_M} investigates MC-CACs for $M$ channels when $M$ divides $w$.
AM-OPPTS MC-CACs and mixed-weight MC-CACs are discussed in Section~\ref{sec:extension}.
Some concluding remarks including the growth rate of $K(M,L,w)$ are given in Section~\ref{sec:conclusion}.


\section{Exceptional Codewords}\label{sec:Kneser}

In a CAC of weight $w$, a codeword $S$ is called \textit{exceptional} if $|d^*(S)|<2w-2$~\cite{SW10,SWC10,LWXZ25}.
In this section, we will extend the concept of exceptional codewords to MC-CACs, together with some characterizations of structural properties.

\subsection{Exceptional Pairs and Revisit of the Kenser's Theorem}

We first give some notion of Additive Combinatorics~\cite{TV06}.
For two subsets $A,B\subseteq\mathbb{Z}_L$ and an element $x\in\mathbb{Z}_L$, define
\begin{align}
x+A &\triangleq \{x+a:\,a\in A\}, \notag \\
A+B &\triangleq \{a+b:\,a\in A, b\in B\}, \text{ and} \label{eq:set-addition} \\
A-B &\triangleq \{a-b:\,a\in A, b\in B\}. \notag
\end{align}
Moreover, define
\begin{align*}
d(A)\triangleq A-A.
\end{align*}
Note that $0\in d(A)$ and $d(A)\setminus\{0\}=d^*(A)$, the set of nonzero differences of $A$.
In this notion, the definition of exceptional codewords $S$ in a CAC can be represented by $|S-S|\leq 2|S|-2$.
We first extend this concept to a pair of subsets.

\begin{definition}\label{def:exceptional}\rm
A pair of two subsets $\{A,B\}$ is called \textit{exceptional} if 
\begin{equation}\label{eq:exceptional-two}
|A-B|\leq |A|+|B|-2.
\end{equation}
Note that $|A-B|=|B-A|$, so there is no confusion in writing $A-B$ or $B-A$ in~\eqref{eq:exceptional-two}.
\end{definition}

Let $T$ be a non-empty subset in $\mathbb{Z}_L$.
The set of \textit{stabilizers} of $T$ in $\mathbb{Z}_L$ is defined as
\begin{align*}
\HH(T) \triangleq \{h\in\mathbb{Z}_L:\,h+T=T\}.
\end{align*}
It is obvious that $0\in\HH(T)$ and $\HH(T)$ is a subgroup of $\mathbb{Z}_L$.
So, $|\HH(T)|$ divides $L$ by Lagrange's theorem.
$T$ is called \textit{periodic} if $\HH(T)$ is non-trivial, namely, $\HH(T)\neq\{0\}$.
Here we list some well-known properties (e.g., see~\cite{SWC10,LWXZ25}) of the set of stabilizers. 

\begin{proposition}\label{prop:stabilizer}
Let $T\subseteq\mathbb{Z}_L$ be non-empty.
\begin{enumerate}[(i)]
\item $\HH(T)$ is a subgroup of $\mathbb{Z}_L$, and thus $|\HH(T)|$ divides $L$.
\item If $0\in T$, then $\HH(T)\subseteq T$.
\item If $T$ is periodic, then $T=\bigcup_{a\in T}(a+\HH(T))$. Moreover, $|\HH(T)|$ divides $|T|$.
\end{enumerate}
\end{proposition}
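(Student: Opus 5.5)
The three parts only use the definition $\HH(T)=\{h\in\Z_L:\,h+T=T\}$, so I will verify them directly in order: (i) by checking the subgroup axioms inside the finite cyclic group $\Z_L$, and (ii), (iii) by unwinding what $h+T=T$ means elementwise.

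For (i), first $0\in\HH(T)$ because $0+T=T$. If $h_1,h_2\in\HH(T)$, then
\[
(h_1+h_2)+T=h_1+(h_2+T)=h_1+T=T,
\]
so $\HH(T)$ is closed under addition. A nonempty subset of a finite group that is closed under the group operation is a subgroup: the multiples $h,2h,\ldots$ eventually return to $0$, so $-h$ is itself a multiple of $h$. Alternatively, translating both sides of $h+T=T$ by $-h$ gives $T=-h+T$ directly. Lagrange's theorem then shows that $|\HH(T)|$ divides $|\Z_L|=L$.

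For (ii), take $h\in\HH(T)$. Since $0\in T$, we get $h=h+0\in h+T=T$, and hence $\HH(T)\subseteq T$.

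For (iii), the containment $T\subseteq\bigcup_{a\in T}(a+\HH(T))$ is immediate, because $0\in\HH(T)$. Conversely, for $a\in T$ and $h\in\HH(T)$ we have $a+h\in h+T=T$, so every coset $a+\HH(T)$ with $a\in T$ lies in $T$. Thus $T$ is exactly the union of the cosets $a+\HH(T)$, $a\in T$. Distinct cosets of a subgroup are disjoint and all have size $|\HH(T)|$, so $T$ is a disjoint union of some number $k$ of them. This gives $|T|=k\,|\HH(T)|$, which shows that $|\HH(T)|$ divides $|T|$.

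Note that the argument in (iii) does not actually use periodicity; when $\HH(T)=\{0\}$ the statement is trivial. There is no real obstacle here. The only point needing a moment's care is the inverse-closure step in (i), which finiteness of $\Z_L$ (or the translation by $-h$) handles.
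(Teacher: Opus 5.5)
Your proof is correct: each part follows directly from the definition of $\HH(T)$ as you verify it. You are also right that periodicity is never used in (iii). The paper gives no proof of its own for this proposition; it cites it as a well-known fact from the literature, and your direct verification is the standard argument behind that citation.
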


The following lists some results about the number of stabilizers of the difference set of an exceptional subset, which can be seen in~\cite[Corollary 1 and Lemma 1]{LWXZ25}.

\begin{lemma}[\!\!\cite{LWXZ25}]\label{lem:exceptional_old}
Let $A$ be a nonempty subset in $\Z_L$.
If $A$ is exceptional, then
\begin{enumerate}[(i)]
\item $2\leq |\HH(A-A)|\leq 2|A|-2$,
\item $|\HH(A-A)|$ does not divide $2|A|-1$, and
\item $|\HH(A-A)|$ does not divide $|A|-1$.
\end{enumerate}
\end{lemma}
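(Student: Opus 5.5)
Write $H=\HH(A-A)$, $k=|A|$ and $D=A-A$. The main tool is Kneser's theorem: for nonempty $X,Y\subseteq\Z_L$, with $H'=\HH(X+Y)$, we have $|X+Y|\ge |X+H'|+|Y+H'|-|H'|$. We apply it with $X=A$ and $Y=-A$. Then $X+Y=D$, and
\[
|D|\ge 2|A+H|-|H|.
\]
Since $D$ is a union of $H$-cosets, so is $A+H$. Let $r$ be the number of $H$-cosets that $A$ meets. Then $|A+H|=r|H|$, and $|D|=s|H|$ for some integer $s$. Kneser's bound therefore reads $s\ge 2r-1$.

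For (i), suppose $H=\{0\}$. Then Kneser gives $|D|\ge 2k-1$, which contradicts exceptionality, i.e. $|D|\le 2k-2$. Hence $|H|\ge 2$. For the upper bound, observe that $0\in D$, so by Proposition~\ref{prop:stabilizer}(ii) we have $H\subseteq D$. This gives $|H|\le |D|\le 2k-2$.

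For (ii) and (iii), the plan is to combine $s\ge 2r-1$ with the exceptional bound $s|H|\le 2k-2$ and with $k\le r|H|$. Suppose first that $|H|$ divides $2k-1$. Then $|H|$ is odd and $2k-1=m|H|$ for some integer $m$. Since $s|H|\le 2k-2<2k-1=m|H|$, we get $s\le m-1$. On the other side, $2r|H|\ge 2k=m|H|+1$, so $2r\ge m+1$ (strictly above $m$, and $2r>m$ forces this), and thus $s\ge 2r-1\ge m$. This contradicts $s\le m-1$.

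Next suppose $|H|$ divides $k-1$, say $k-1=m|H|$. Since $k>m|H|$, we get $r\ge m+1$. Hence $s|H|\ge(2r-1)|H|\ge (2m+1)|H|=2k-2+|H|>2k-2$. This again contradicts exceptionality.

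The main obstacle is bookkeeping: one must ensure that Kneser's theorem is applied with the stabilizer of the sumset $A+(-A)$ exactly, and that the integrality arguments (rounding $2r\ge m+1/|H|$ up to $m+1$) are handled correctly. Both are routine once $D$ and $A+H$ are written as unions of $H$-cosets.
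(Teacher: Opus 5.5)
Your proof is correct and is essentially the paper's argument. The paper defers this lemma to \cite{LWXZ25}, but its proof of Lemma~\ref{lem:exceptional_new} specialized to $B=A$ runs exactly as yours does: it applies \eqref{eq:Kneser1} to $A$ and $-A$, bounds $|A+H|$ from below by counting $H$-cosets (via a quotient--remainder decomposition of $|A|$ modulo $|H|$ rather than your counts $r,s$), and derives contradictions from the divisibility hypotheses; for the upper bound in (i) you use Proposition~\ref{prop:stabilizer}(ii) via $0\in A-A$, where the paper uses (iii). As a minor simplification, the integrality of $s$ is not needed in (ii), since $|A-A|\ge(2r-1)|H|\ge m|H|=2k-1$ already contradicts exceptionality.
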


The above results were based on Kneser's theorem, in which the formal description is given below.

\begin{theorem}[\!\!\cite{Kneser53,TV06}]\label{thm:Kneser}
Let $A$ and $B$ be two non-empty subsets in $\Z_L$, and let $\HH=\HH(A+B)$.
Then,
\begin{equation}\label{eq:Kneser1}
|A+ B| \ge |A+\HH| +|B+\HH|-|\HH|.
\end{equation}
In particular,
\begin{equation}\label{eq:Kneser2}
|A+B|\ge |A|+|B|-|\HH|.
\end{equation}
\end{theorem}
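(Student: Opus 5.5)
Since the paper cites this result from \cite{Kneser53,TV06}, one may simply invoke it. If a self-contained argument is wanted, I would proceed as follows.

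\textbf{Reduction of \eqref{eq:Kneser2} to \eqref{eq:Kneser1}.} The second inequality follows at once from the first, because $|A+\HH|\ge|A|$ and $|B+\HH|\ge|B|$. So all the work lies in \eqref{eq:Kneser1}.

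\textbf{Reduction to a trivial stabilizer.} Put $\HH=\HH(A+B)$ and pass to the quotient $\Z_L/\HH$. Let $\bar A,\bar B$ be the images of $A,B$ there. Since $A+B$ is a union of $\HH$-cosets by Proposition~\ref{prop:stabilizer}(iii), we have
\[
|A+B|=|\HH|\,|\bar A+\bar B|,\qquad |A+\HH|=|\HH|\,|\bar A|,\qquad |B+\HH|=|\HH|\,|\bar B|.
\]
Moreover, $\bar A+\bar B$ has trivial stabilizer in the quotient. Hence \eqref{eq:Kneser1} is equivalent to the following statement in the cyclic group $G=\Z_L/\HH$:
\[
\text{if }\HH(X+Y)=\{0\},\ \text{then } |X+Y|\ge|X|+|Y|-1. \tag{$\ast$}
\]

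\textbf{Proof of $(\ast)$ by induction on $|Y|$.} The case $|Y|=1$ is trivial. For the inductive step, use the Dyson $e$-transform: for $e\in X-Y$ set
\[
X_e=X\cup(Y+e),\qquad Y_e=Y\cap(X-e).
\]
Then $X_e+Y_e\subseteq X+Y$ and $|X_e|+|Y_e|=|X|+|Y|$, with $Y_e\neq\emptyset$. If for every such $e$ we had $Y+e\subseteq X$, then $X+Y-Y\subseteq X$. Thus $X$, and hence $X+Y$, would be stabilized by every element of $Y-Y$. Triviality of $\HH(X+Y)$ would then force $|Y|=1$. So some $e$ gives $|Y_e|<|Y|$, and we apply the induction hypothesis to $(X_e,Y_e)$.

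\textbf{Main obstacle.} The transformed sum $X_e+Y_e$ may have a nontrivial stabilizer $H'$. This is why the induction must be run on the full statement \eqref{eq:Kneser1}, not on $(\ast)$ alone. Induction then yields
\[
|X_e+Y_e|\ge |X_e+H'|+|Y_e+H'|-|H'|.
\]
Two cases follow.
\begin{enumerate}[(a)]
\item If $X+Y+H'=X+Y$, then $H'$ lies in the trivial stabilizer, so $H'=\{0\}$ and the inequality above is already $(\ast)$.
\item Otherwise, following Kneser's original counting (as in \cite{TV06}), compare $X+Y$ with $(X_e+Y_e)+H'$. Choose $e$ to minimize $|Y_e|$, or argue by a secondary induction on $|X+Y|$. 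Then count the elements of $X+Y$ lying outside $X_e+Y_e+H'$ coset by coset. Show that each $H'$-coset met by $Y\setminus Y_e$ contributes enough new sums to recover the lost term $|H'|$. This coset bookkeeping is the delicate step.
\end{enumerate}
Once $(\ast)$ holds, multiplying by $|\HH|$ gives \eqref{eq:Kneser1}, and hence \eqref{eq:Kneser2}.
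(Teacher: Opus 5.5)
The paper gives no proof of this theorem; it only cites \cite{Kneser53,TV06}. Your opening remark (simply invoke the citation) is therefore exactly what the paper does. The self-contained argument you offer, however, is not a proof.

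Your preliminary steps are correct:
\begin{itemize}
\item \eqref{eq:Kneser2} follows from \eqref{eq:Kneser1}.
\item Passing to $\Z_L/\HH$ correctly reduces everything to the aperiodic statement $(\ast)$.
\item The Dyson transform has the properties you state, and aperiodicity of $X+Y$ does yield an $e$ with $|Y_e|<|Y|$.
\item You rightly note that the induction must carry the full inequality \eqref{eq:Kneser1}, and case (a) is fine.
\end{itemize}
But case (b), where $H'=\HH(X_e+Y_e)\neq\{0\}$, is the entire content of Kneser's theorem. There you only describe an intention, and you hedge between two different strategies. Concretely, $X_e+Y_e+H'=X_e+Y_e$ already. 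So, starting from the inductive bound for $(X_e,Y_e)$, you need at least
\[
|H'|-1-\bigl(|X_e+H'|-|X_e|\bigr)-\bigl(|Y_e+H'|-|Y_e|\bigr)
\]
elements of $X+Y$ outside $X_e+Y_e$. Nothing in your sketch produces them.

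Moreover, the one concrete claim you make about this step is false as stated. Suppose $y\in (Y\setminus Y_e)\cap(Y_e+H')$. Since $X\subseteq X_e$, we get $X+y\subseteq X_e+(Y_e+H')=X_e+Y_e$. So such $H'$-cosets met by $Y\setminus Y_e$ contribute no new sums at all.

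Only elements of $Y\setminus(Y_e+H')$ can contribute. Aperiodicity of $X+Y$ does show this set is nonempty: otherwise $X+Y=X_e+Y_e$, which would be $H'$-periodic. By itself, though, aperiodicity only guarantees $X+Y\neq X_e+Y_e$, i.e.\ one new element. When $X_e$ and $Y_e$ are unions of $H'$-cosets, you need $|H'|-1$ new elements.

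Closing this case requires a genuinely further argument, and this is where the proofs in \cite{Kneser53,TV06} do their real work. Either rely on the citation, as the paper does, or supply that argument in full.
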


We shall study the number of stabilizers of the difference set of an exceptional pair, which can be seen as generalizations of Lemma~\ref{lem:exceptional_old}.

\begin{lemma}\label{lem:exceptional_new}
Let $A,B$ be two nonempty subsets of $\mathbb{Z}_L$.
If $\{A,B\}$ is an exceptional pair, then 
\begin{enumerate}[(i)]
\item $2\leq|\HH(A-B)|\leq|A|+|B|-2$, and
\item $|\HH(A-B)|$ does not divide $|A|+|B|-1$.
\item $|\HH(A-B)|$ does not divide $|A|-1$ when $|A|=|B|$.
\end{enumerate}
\end{lemma}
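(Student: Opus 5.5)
We apply Kneser's theorem (Theorem~\ref{thm:Kneser}) to the sets $A$ and $-B$, so that $A+(-B)=A-B$. Write $\HH=\HH(A-B)$ and $h=|\HH|$. Note that $|-B+\HH|=|B+\HH|$.

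\textbf{Part (i).} By \eqref{eq:Kneser2},
\[|A-B|\ge |A|+|B|-h.\]
Combining this with the exceptional condition $|A-B|\le |A|+|B|-2$ gives $h\ge 2$. For the upper bound, $A-B$ is a union of cosets of $\HH$ by Proposition~\ref{prop:stabilizer}(iii), hence $h\le |A-B|\le |A|+|B|-2$.

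\textbf{Part (ii).} We use the sharper form \eqref{eq:Kneser1}. Put $|A+\HH|=\alpha h$, $|B+\HH|=\beta h$ and $|A-B|=\gamma h$ for positive integers $\alpha,\beta,\gamma$; this is possible because $A+\HH$, $B+\HH$ and $A-B$ are all unions of $\HH$-cosets. Then \eqref{eq:Kneser1} reads $\gamma\ge \alpha+\beta-1$.

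Suppose, for contradiction, that $|A|+|B|-1=kh$ for some integer $k$. Since $|A|\le \alpha h$ and $|B|\le\beta h$, we get $kh+1\le (\alpha+\beta)h$, so $k\le\alpha+\beta-1$. On the other hand, the exceptional condition gives $\gamma h\le kh-1<kh$, so $\gamma\le k-1$. Chaining these,
\[\alpha+\beta-1\le\gamma\le k-1\le \alpha+\beta-2,\]
which is a contradiction.

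\textbf{Part (iii).} Let $|A|=|B|=s$ and suppose $h\mid s-1$, say $s-1=kh$. From $\gamma h\le 2s-2=2kh$ we get $\gamma\le 2k$.

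Each of $A$ and $B$ has size $kh+1$, so it cannot be covered by $k$ cosets of $\HH$. Hence $\alpha,\beta\ge k+1$, and \eqref{eq:Kneser1} gives
\[\gamma\ge\alpha+\beta-1\ge 2k+1,\]
contradicting $\gamma\le 2k$.

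The only step requiring care is the coset-counting bookkeeping in (ii) and (iii): checking that the integrality of $\alpha,\beta,\gamma$ yields the needed strict inequalities. In particular, in (ii) the ceiling must be handled correctly via $|A|+|B|\le(\alpha+\beta)h$, rather than bounding $\alpha$ and $\beta$ separately.
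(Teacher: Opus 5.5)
Your proof is correct and follows essentially the same route as the paper: you apply Kneser's theorem to $A$ and $-B$, using \eqref{eq:Kneser2} for (i) and the coset form \eqref{eq:Kneser1} for (ii) and (iii). The only difference is bookkeeping---you count $\HH$-cosets with integers $\alpha,\beta,\gamma$, whereas the paper writes $|A|=q_1|\HH|+r_1$, $|B|=q_2|\HH|+r_2$ with $0<r_1,r_2\le|\HH|$ and reaches the contradiction via $|\HH|+1\le r_1+r_2-1\le 2|\HH|-1$ (resp.\ $|\HH|/2\le r-1\le|\HH|-1$); the two are equivalent.
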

\begin{proof}
For notational convenience, denote by $\HH=\HH(A-B)$.

(i) By~\eqref{eq:exceptional-two} and \eqref{eq:Kneser2}, we have
\begin{align*}
|A|+|B|-2 &\geq |A-B|=|A+(-B)|\\
&\geq |A|+|-B|- |\HH| = |A|+|B|- |\HH|,
\end{align*}
which implies that $|\HH|\geq 2$.
Therefore, the set $A-B$ is periodic.
By Proposition~\ref{prop:stabilizer}(iii), we get $|\HH|\leq |A-B|$, which is less than or equal to $|A|+|B|-2$ because $\{A,B\}$ is exceptional.

(ii) Suppose to the contrary that $|\HH|$ divides $|A|+|B|-1$. 
Since, by Proposition~\ref{prop:stabilizer}(i), $\HH$ is a subgroup of $\Z_L$, we have $\HH=-\HH$, which implies that $|-B+\HH|=|-(B+\HH)|=|B+\HH|$.
Moreover, as $B+\HH$ is the disjoint union of cosets of $\HH$, the size $|B+\HH|$ is a multiple of $|\HH|$, and thus $|B+\HH|\geq|\HH|\cdot\left\lceil|B|/|\HH|\right\rceil$.
Similarly, $|A+\HH|\geq|\HH|\cdot\left\lceil|A|/|\HH|\right\rceil$.
By plugging $A=A$, $B=-B$ into~\eqref{eq:Kneser1} yields that 
\begin{align*}
|A-B|\ge|A+\HH|+|B+\HH|-|\HH|\geq |\HH|\left(\left\lceil\frac{|A|}{|\HH|}\right\rceil + \left\lceil\frac{|B|}{|\HH|}\right\rceil - 1 \right).
\end{align*}
Since $\{A,B\}$ is an exceptional pair, namely, $|A-B| \leq | A|+|B|-2$, it follows that
\begin{equation}\label{eq:exceptional_new_proof1}
|A|+|B|-2 \geq |\HH|\left(\left\lceil\frac{|A|}{|\HH|}\right\rceil + \left\lceil\frac{|B|}{|\HH|}\right\rceil - 1 \right).
\end{equation}

Let $|A|=q_1|\HH| + r_1$ and $|B|=q_2|\HH| + r_2$, for some integers $q_1,q_2\geq 0$ and $0<r_1,r_2\leq |\HH|$. 
Note that 
\begin{equation}\label{eq:exceptional_new_proof2}
0<r_1+r_2\leq 2|\HH|.
\end{equation}
By~\eqref{eq:exceptional_new_proof1}, we have
\begin{align*}
(q_1+q_2)|\HH| + r_1+r_2 - 2 = |A|+|B|-2 \geq |\HH|\big((q_1+1) + (q_2+1) - 1 \big),
\end{align*}
which implies that
\begin{equation}\label{eq:exceptional_new_proof3}
r_1  + r_2 -2 \geq |\HH|.
\end{equation}
Combining \eqref{eq:exceptional_new_proof2} and \eqref{eq:exceptional_new_proof3} yields
\begin{equation}\label{eq:exceptional_new_proof4}
|\HH|+1\leq r_1+r_2-1\leq 2|\HH|-1.
\end{equation}

By the assumption that $|\HH|$ divides $|A|+|B|-1=|\HH|(q_1+q_2)+r_1+r_2-1$, we get $|\HH|$ divides $r_1+r_2-1$.
This contradicts to \eqref{eq:exceptional_new_proof4}.

(iii) This case can be dealt with in the same argument we provided for (ii).
By letting $|A|=|B|=q|\HH|+r$, i.e., $q=q_1=q_2$ and $r=r_1=r_2$, the two inequalities \eqref{eq:exceptional_new_proof2} and \eqref{eq:exceptional_new_proof3} imply
\begin{equation}\label{eq:exceptional_new_proof5}
\frac{|\HH|}{2} \leq r-1 \leq |\HH|-1.
\end{equation}
Suppose to the contrary that $|\HH|$ divides $|A|-1=q|\HH|+r-1$.
We get $|\HH|$ divides $r-1$, which is a contradiction to \eqref{eq:exceptional_new_proof5}.
Hence the proof is completed.
\end{proof}

\subsection{Exceptional Codewords in MC-CACs}

For a subset $S\in\mathcal{I}_M\times\Z_L$, recall that $S_m$ collects time slots during which packets are transmitted on the $m$-th channel.
We further define
\begin{align*}
e_S\triangleq|\{m\in\mathcal{I}_M:\,S_m\neq\emptyset\}|
\end{align*}
indicating the number of channels that have at least one packet in the transmission pattern $S$.
With the notation given in~\eqref{eq:set-addition}, the $(i,j)$ set of difference of $S$, $D_S(i,j)$, defined in~\eqref{eq:array_difference} can be rewritten as $S_i-S_j$ if $i\neq j$.

We are ready to define exceptional subsets in $\mathcal{I}_M\times\Z_L$.


\begin{definition}\label{def:exceptional_multichannel}\rm
A subset $S\in\mathcal{I}_M \times \mathbb{Z}_L$ is called \textit{exceptional} if at least one of the followings holds.
\begin{enumerate}[(i)]
\item There exists $i\in\mathcal{I}_M$ such that $S_i$ is exceptional, i.e., $|S_i-S_i|\leq 2|S_i|-2$.
\item There exist distinct indices $i,j\in\mathcal{I}_M$ such that $\{S_i,S_j\}$ is exceptional, i.e., $|S_i-S_j|\leq |S_i|+|S_j|-2$.
\end{enumerate}
\end{definition}

\begin{lemma}\label{lem:exceptional_2D}
Let $S$ be a $w$-subset in $\mathcal{I}_M\times \mathbb{Z}_L$. 
If the prime factors of $L$ are all larger than or equal to $2w-1$, then $S$ is not exceptional.
\end{lemma}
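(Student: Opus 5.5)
We argue by contradiction, using Lemma~\ref{lem:exceptional_new} and Lemma~\ref{lem:exceptional_old}. Both lemmas say that an exceptional set, or an exceptional pair, forces the relevant difference set to have a nontrivial stabilizer of size at most $|A|+|B|-2$. The primality hypothesis on $L$ rules out any such stabilizer. Suppose $S$ is exceptional; by Definition~\ref{def:exceptional_multichannel} there are two cases.

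\emph{Case (i).} Some $S_i$ satisfies $|S_i-S_i|\le 2|S_i|-2$; note $S_i$ is then nonempty, since otherwise $0\le -2$ fails. By Lemma~\ref{lem:exceptional_old}(i) (equivalently Lemma~\ref{lem:exceptional_new}(i) with $A=B=S_i$), we get
\[
2\le |\HH(S_i-S_i)|\le 2|S_i|-2\le 2w-2 .
\]

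\emph{Case (ii).} Some pair $\{S_i,S_j\}$ with $i\ne j$ is exceptional. Then both sets are nonempty, because if either were empty then $S_i-S_j=\emptyset$ and $0\le |S_i|+|S_j|-2$ would need $|S_i|+|S_j|\ge 2$. Hence $S_i,S_j$ are nonempty, and Lemma~\ref{lem:exceptional_new}(i) applies. Since $S_i$ and $S_j$ are disjoint parts of the $w$-set $S$, we have $|S_i|+|S_j|\le w$, and therefore
\[
2\le |\HH(S_i-S_j)|\le w-2 .
\]

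In both cases there is a subgroup $\HH$ of $\Z_L$ with $2\le|\HH|\le 2w-2$. By Proposition~\ref{prop:stabilizer}(i), $|\HH|$ divides $L$. Then $|\HH|>1$ has a prime factor $p$ dividing $L$ with $p\le |\HH|\le 2w-2<2w-1$. This contradicts the assumption that every prime factor of $L$ is at least $2w-1$, so $S$ is not exceptional.

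The step needing care is the nonemptiness of the sets before invoking Lemma~\ref{lem:exceptional_new}, since that lemma assumes nonempty $A,B$; the empty cases are dispatched by the trivial size counts above. There is no real obstacle beyond this bookkeeping. Only part (i) of the earlier lemmas is needed, and the bound $2w-2$ of case (i) is exactly what makes the threshold $2w-1$ sharp.
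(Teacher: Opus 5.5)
Your core argument is the paper's: in each case of Definition~\ref{def:exceptional_multichannel}, Lemma~\ref{lem:exceptional_old}(i) or Lemma~\ref{lem:exceptional_new}(i) gives a stabilizer whose order lies between $2$ and $2w-2$. That order divides $L$, which contradicts the hypothesis on the prime factors of $L$. Case~(i), including its nonemptiness remark, and the final divisibility step are fine.

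The nonemptiness argument in Case~(ii) is wrong, however. Suppose $S_i=\emptyset$. Then $|S_i-S_j|=0$, and the condition $0\le |S_j|-2$ holds whenever $|S_j|\ge 2$. Nothing is contradicted, so "hence $S_i,S_j$ are nonempty" does not follow.

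In fact no argument can derive it. Under the literal wording of Definitions~\ref{def:exceptional} and \ref{def:exceptional_multichannel}, take $M\ge 2$, $w\ge 2$, and a codeword with all $w$ elements on channel $1$. Then $\{S_1,S_2\}=\{S_1,\emptyset\}$ is an exceptional pair for every $L$. So the statement is false as literally written.

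What the lemma needs is a convention the paper leaves implicit: condition~(ii) only concerns pairs with $S_i,S_j\neq\emptyset$. This is consistent with the nonemptiness hypothesis of Lemma~\ref{lem:exceptional_new}. It is also consistent with how the lemma is used in Theorem~\ref{thm:UB-w>M}, where only pairs with $S_i,S_j\neq\emptyset$ enter the sums. The paper's own proof applies Lemma~\ref{lem:exceptional_new} directly and so silently makes the same assumption. You should state this convention explicitly rather than claim that a size count excludes the empty case.
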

\begin{proof}
Suppose to the contrary that $S$ is exceptional. 
Let $T$ be any nonempty subset in $\Z_L$. 
Since all prime factors $L$ are all larger than or equal to $2w-1$, by Proposition~\ref{prop:stabilizer}(i), we have
\begin{equation}\label{eq:exceptional_2D-proof}
|\HH(T)|=1\quad \text{or} \quad |\HH(T)|\geq 2w-1.
\end{equation}

If there exists an index $i\in\mathcal{I}_M$ such that $S_i$ is exceptional, by Lemma~\ref{lem:exceptional_old}(i), we have $2\leq |\HH(S_i-S_i)|\leq 2|S_i|-2\leq 2w-2$.
If there exist distinct indices $i,j\in\mathcal{I}_M$ such that $\{S_i,S_j\}$ is exceptional, by Lemma~\ref{lem:exceptional_new}(i), we have $2\leq |\HH(S_i-S_j)|\leq |S_i|+|S_j|-2 \leq w-2$.
Either case contradicts to~\eqref{eq:exceptional_2D-proof}, so the proof is done.
\end{proof}

We end this section with a fundamental result in Group Theory, which will be used in subsequent sections.

\begin{proposition}\label{prop:unique_subgroup}\rm
The subgroup of $\mathbb{Z}_L$ is uniquely determined by its order.
More precisely, for any divisor $d$ of $L$, the unique subgroup of $\mathbb{Z}_L$ with order $d$ is $\{0,L/d,2L/d,\ldots,(d-1)L/d\}$.
\end{proposition}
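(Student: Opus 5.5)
We have to prove Proposition~\ref{prop:unique_subgroup}: for each divisor $d$ of $L$, the only subgroup of $\Z_L$ of order $d$ is $\{0,L/d,2L/d,\ldots,(d-1)L/d\}$. The argument uses only the fact that $\Z_L$ is cyclic, generated by $1$, and splits into existence and uniqueness.

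\emph{Existence.} Let $H_d\triangleq\{kL/d:\,0\le k\le d-1\}$. This set is the cyclic subgroup generated by $L/d$. Since $d\cdot(L/d)=L\equiv 0 \pmod L$, and since $k(L/d)\not\equiv 0$ for $0<k<d$ (because $0<kL/d<L$), the element $L/d$ has order exactly $d$. Hence $|H_d|=d$.

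\emph{Uniqueness.} Let $H$ be any subgroup of $\Z_L$ with $|H|=d$.

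First, $H$ is cyclic. If $H=\{0\}$ then $d=1$ and $H=H_1$, so assume $H\ne\{0\}$. Let $g$ be the least positive integer representative of a nonzero element of $H$. For any $h\in H$, represented by $0\le h<L$, the division algorithm gives $h=qg+r$ with $0\le r<g$. Then $r=h-qg\in H$, so minimality of $g$ forces $r=0$. Thus $H=\langle g\rangle$.

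Second, $g$ divides $L$. Write $L=q'g+r'$ with $0\le r'<g$. Then $r'\equiv -q'g\in H$, so again $r'=0$.

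Third, $g=L/d$. Since $g\mid L$, the multiples of $g$ below $L$ are $0,g,\ldots,(L/g-1)g$, which gives $|H|=L/g$. Setting this equal to $d$ yields $g=L/d$, and therefore $H=H_d$.

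An alternative route to uniqueness avoids the cyclicity step. By Lagrange's theorem, every $h\in H$ satisfies $dh\equiv 0\pmod L$, so $L/d$ divides $h$. This places $H$ inside $H_d$, and since both sets have $d$ elements, $H=H_d$.

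The only step that needs any care is showing that an arbitrary subgroup is generated by a divisor of $L$, and it is settled by the division-algorithm argument in the first two steps of the uniqueness part (or bypassed entirely by the Lagrange argument).
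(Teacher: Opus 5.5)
Your proof is correct. The paper gives no proof to compare it with: the authors quote Proposition~\ref{prop:unique_subgroup} as a standard group-theoretic fact.

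Both of your uniqueness arguments are sound, and each buys something slightly different:
\begin{itemize}
\item The division-algorithm route proves a bit more. It shows that every subgroup of $\Z_L$ is cyclic and generated by a divisor of $L$, and that its order automatically divides $L$.
\item The Lagrange route is shorter. It also yields, at no extra cost, the nesting property the paper actually relies on. If $m\mid L$ and $H$ is any subgroup with $|H|$ dividing $m$, then $mh\equiv 0 \pmod L$ for all $h\in H$, so $H\subseteq\{0,L/m,\ldots,(m-1)L/m\}$.
\end{itemize}

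That nesting property is exactly how the proposition is used later. For example, the proof of Theorem~\ref{thm:2channel_w-1pr_upper} uses it to place $H_A$ inside $G=\{iL':\,i=0,1,\ldots,w-2\}$.
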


\section{General Upper Bounds on the Maximum Code Size}\label{sec:upper-bound}

We first recall an upper bound on the maximum code size of a CAC with some restriction on the code length.

\begin{theorem}[\!\cite{SW10}, Theorem 5]\label{thm:UB-single}
Let $L$ and $w$ be positive integers.
If the prime factors of $L$ are all larger than or equal to $2w-1$, then
\begin{align*}
K(L,w)\leq\frac{L-1}{2w-2}.
\end{align*}
\end{theorem}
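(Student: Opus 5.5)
The plan is a counting argument on the nonzero differences in $\Z_L^*=\Z_L\setminus\{0\}$, which has $L-1$ elements. Let $\C\in\CAC(L,w)$. By the disjoint-difference-set property~\eqref{eq:CAC}, the sets $d^*(S)$, $S\in\C$, are pairwise disjoint subsets of $\Z_L\setminus\{0\}$. Hence
\[
\sum_{S\in\C}|d^*(S)|\leq L-1 .
\]
The whole argument therefore reduces to showing that every codeword has $|d^*(S)|\geq 2w-2$. Once this holds, we get $|\C|(2w-2)\leq L-1$, and maximizing over $\C$ gives the stated bound.

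To get the lower bound on $|d^*(S)|$, I would view $S$ as a $w$-subset of $\mathcal{I}_1\times\Z_L$, that is, the case $M=1$, and apply Lemma~\ref{lem:exceptional_2D}. Since all prime factors of $L$ are at least $2w-1$, that lemma says $S$ is not exceptional. With $M=1$ this means exactly that $|S-S|>2|S|-2$, so $|S-S|\geq 2w-1$. Removing $0$ from $S-S$ gives $|d^*(S)|\geq 2w-2$.

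It is also possible to argue directly from Lemma~\ref{lem:exceptional_old}(i). If $|S-S|\leq 2w-2$, then $2\leq|\HH(S-S)|\leq 2w-2$. But by Proposition~\ref{prop:stabilizer}(i), $|\HH(S-S)|$ divides $L$, so it is either $1$ or at least the smallest prime factor of $L$, which is at least $2w-1$. This is a contradiction.

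No step is a real obstacle, since the non-exceptionality of every codeword is already established. The only point to check is the edge case $w=1$. There $d^*(S)=\emptyset$ and the bound $\frac{L-1}{0}$ is degenerate, so I would take $w\geq2$, as is implicit in the statement.
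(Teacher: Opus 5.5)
Your argument is correct and is essentially the one the paper relies on. The paper only cites this statement from~\cite{SW10}, but its own machinery reproduces your proof exactly: every codeword is non-exceptional by Lemma~\ref{lem:exceptional_old}(i) (equivalently Lemma~\ref{lem:exceptional_2D} with $M=1$) together with $|\HH(S-S)|$ dividing $L$, and then the disjoint sets $d^*(S)$ are counted inside $\Z_L^*$, as in Case~1 of the proof of Theorem~\ref{thm:single_w-1pr_optimal} and the $M=1$ case of Theorem~\ref{thm:UB-w>M}. Your remark that $w\geq 2$ is needed for the bound to make sense is also appropriate.
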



The following result is due to \cite{WFLWG23}.

\begin{theorem}[\!\!\cite{WFLWG23}, Corollary 3.6]\label{thm:UB-any-w}
Let $M,L$ and $w$ be positive integers.
If the prime factors of $L$ are all larger than or equal to $2w-1$, we have
\begin{equation}\label{eq:UB-any-w}
K(M,L,w)\leq \frac{M(M-1)L}{w(w-1)}+\frac{M(L-1)}{2w-2}.
\end{equation}
\end{theorem}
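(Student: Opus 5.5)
We prove the bound by double counting the entries of the difference arrays, using Lemma~\ref{lem:exceptional_2D} to guarantee that every codeword contributes many differences. Fix $\C\in\MCCAC(M,L,w)$.

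\emph{Size of one difference array.} Let $S\in\C$ and write $w_m=|S_m|$, so that $\sum_m w_m=w$. Because all prime factors of $L$ are at least $2w-1$, Lemma~\ref{lem:exceptional_2D} says $S$ is not exceptional. Hence
\begin{align*}
|D_S(i,i)|&=|S_i-S_i|-1\geq 2w_i-2 \quad\text{for each } i \text{ with } w_i\geq 1,\\
|D_S(i,j)|&=|S_i-S_j|\geq w_i+w_j-1 \quad\text{for } i\neq j \text{ with } S_i,S_j\neq\emptyset.
\end{align*}

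\emph{Pooling the entries.} Pool all $M^2$ entries of the arrays into one budget. For each ordered pair $(i,j)$, the sets $D_S(i,j)$ with $S\in\C$ are pairwise disjoint. Each diagonal entry lies in $\Z_L\setminus\{0\}$ and each off-diagonal entry lies in $\Z_L$. The total capacity is therefore $M(L-1)+M(M-1)L$, which gives
\[
\sum_{S\in\C}\Big(\sum_i |D_S(i,i)|+\sum_{i\neq j}|D_S(i,j)|\Big)\leq M(L-1)+M(M-1)L .
\]

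\emph{Lower bound per codeword.} Let $e=e_S$ be the number of nonempty rows of $S$. The diagonal contribution is at least $\sum_{w_i\ge1}(2w_i-2)=2w-2e$. The off-diagonal contribution is at least
\[
\sum_{i\neq j,\ \text{both nonempty}}(w_i+w_j-1)=2(e-1)w-e(e-1).
\]
The total is $f(e)=2w-2e+2(e-1)w-e(e-1)=2ew-e^2-e$. Since $e\le\min(M,w)$, we have $f(e)-f(e-1)=2w-2e\ge0$, so $f$ is nondecreasing on this range. Its minimum is at $e=1$, where $f(1)=2w-2$.

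\emph{Main obstacle.} Dividing the pooled budget by $2w-2$ gives $|\C|\le\frac{M(L-1)+M(M-1)L}{2w-2}$. This is weaker than the claimed bound in the off-diagonal term, since the claim has $\frac{M(M-1)L}{w(w-1)}$. A single pooled budget is too crude, so I would run a weighted count instead. Let $\C_e$ denote the codewords with exactly $e$ nonempty channels. Use two budgets: the diagonal capacity $M(L-1)$ and the off-diagonal capacity $M(M-1)L$. The diagonal cost of a codeword is $2w-2e$ and its off-diagonal cost is $2(e-1)w-e(e-1)$.

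The goal is to find $\lambda,\mu\ge0$ with
\[
\lambda(2w-2e)+\mu\big(2(e-1)w-e(e-1)\big)\geq 1 \quad\text{for all } 1\le e\le\min(M,w).
\]
This is linear programming duality, and it yields $|\C|\le\lambda M(L-1)+\mu M(M-1)L$.

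Take $\lambda=1/(2w-2)$, so $e=1$ is satisfied exactly. Then check whether $\mu=1/(w(w-1))$ suffices for every $e\ge2$. The left side at $e$ is
\[
\frac{w-e}{w-1}+\frac{(e-1)(2w-e)}{w(w-1)},
\]
and we need it to be at least $1$. Multiplying by $w(w-1)$, the requirement becomes
\[
w(w-e)+(e-1)(2w-e)\geq w(w-1).
\]
The difference between the two sides is $(e-1)(w-e)\ge0$. The inequality therefore holds for every $e\le w$, which finishes the proof.

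I expect this dual-weight verification to be the only delicate step. Once $\lambda$ and $\mu$ are fixed, it reduces to the factorization $(e-1)(w-e)\ge0$.
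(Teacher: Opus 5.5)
Your proof is correct and is essentially the paper's argument in Appendix~A: the same diagonal and off-diagonal budgets, combined with exactly the same weights. The paper multiplies the halved diagonal inequality by $w$, adds the off-diagonal one and divides by $w(w-1)$, which amounts to your $\lambda=1/(2w-2)$ and $\mu=1/(w(w-1))$. The only difference is that the paper bounds the concave function $f(x)=-x^2+(w+1)x+w^2-2w$ on $[2,w-1]$ by its endpoint values, whereas your identity $f(e)-w(w-1)=(e-1)(w-e)$ covers all $e$ at once and shows directly that equality forces $e_S\in\{1,w\}$.
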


The upper bounds in Theorem~\ref{thm:UB-any-w} can be improved in the case when $M<w$, which is shown in the following theorem.
For convenience, let $\Z_L^*=\Z_L\setminus\{0\}$ be the set of nonzeros in $\Z_L$.

\begin{theorem}\label{thm:UB-w>M}
Let $M,L$ and $w$ be positive integers with $M<w$. 
If the prime factors of $L$ are all larger than or equal to $2w-1$, we have
\begin{equation}\label{eq:K-UB-w>M}
K(M,L,w)\leq \frac{M(M-1)L}{(2w-M)(w-1)}+\frac{M(L-1)}{2w-2}.
\end{equation}
\end{theorem}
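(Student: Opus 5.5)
The plan is a weighted double-counting of the entries of the difference arrays. Two facts drive it. Distinct codewords have entry-wise disjoint difference arrays. Lemma~\ref{lem:exceptional_2D} guarantees that, under the hypothesis on the prime factors of $L$, no codeword is exceptional. For a codeword $S\in\C$ put $s_m=|S_m|$, so $\sum_m s_m=w$, and define
\begin{align*}
\mathrm{diag}(S)\triangleq\sum_{m\in\mathcal{I}_M}|D_S(m,m)|,\qquad \mathrm{off}(S)\triangleq\sum_{i\neq j}|D_S(i,j)|,
\end{align*}
where the second sum is over ordered pairs.

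\textbf{Step 1: total capacity.} Each diagonal entry $D_S(m,m)$ lies in $\Z_L^*$, and each off-diagonal entry lies in $\Z_L$. By the disjoint-difference-array property~\eqref{eq:MCCAC}, this gives
\begin{align*}
\sum_{S\in\C}\mathrm{diag}(S)\le M(L-1),\qquad \sum_{S\in\C}\mathrm{off}(S)\le M(M-1)L.
\end{align*}

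\textbf{Step 2: lower bounds for a single codeword.} Since $S$ is not exceptional, every nonempty $S_m$ satisfies $|S_m-S_m|\ge 2s_m-1$, so $|D_S(m,m)|\ge 2s_m-2$. This bound also holds trivially, as $0$, when $s_m\le 1$. Similarly, for distinct nonempty channels $i,j$ we have $|S_i-S_j|\ge s_i+s_j-1$. Let $e=e_S\le M$ be the number of nonempty channels. Summing these bounds gives
\begin{align*}
\mathrm{diag}(S)\ge 2w-2e,\qquad \mathrm{off}(S)\ge\sum_{\substack{i\neq j\\ S_i,S_j\neq\emptyset}}(s_i+s_j-1)=2(e-1)w-e(e-1)=(e-1)(2w-e).
\end{align*}

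\textbf{Step 3: choose weights.} Set $\lambda=\frac{1}{2w-2}$ and $\mu=\frac{1}{(2w-M)(w-1)}$. We verify that every codeword satisfies $\lambda\,\mathrm{diag}(S)+\mu\,\mathrm{off}(S)\ge 1$. By Step~2 it suffices to have
\begin{align*}
\frac{2w-2e}{2w-2}+\frac{(e-1)(2w-e)}{(2w-M)(w-1)}\ge 1.
\end{align*}
Subtracting the first term from $1$ leaves $\frac{2e-2}{2w-2}=\frac{e-1}{w-1}$, so the inequality reduces to $\frac{(e-1)(2w-e)}{2w-M}\ge e-1$. This holds because $e\le M$; the case $e=1$ gives equality. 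The assumption $M<w$ ensures $2w-M>0$, so all weights are positive.

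Summing $\lambda\,\mathrm{diag}(S)+\mu\,\mathrm{off}(S)\ge1$ over $\C$ and applying Step~1 yields
\begin{align*}
|\C|\le\lambda M(L-1)+\mu M(M-1)L,
\end{align*}
which is exactly~\eqref{eq:K-UB-w>M}.

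The only delicate point is the choice of weights in Step~3. The natural attempt of bounding diagonal and off-diagonal usage separately fails, because a codeword using many channels consumes few diagonal differences. The key is to pick $\mu$ so that the worst case $e=M$ still balances, and then check that the inequality holds for all $1\le e\le M$.
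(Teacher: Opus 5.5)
Your proof is correct and is essentially the paper's argument: the paper multiplies its diagonal inequality by $(2w-M)$ and adds the off-diagonal one, which is exactly your weighting by $\lambda=\frac{1}{2w-2}$ and $\mu=\frac{1}{(2w-M)(w-1)}$. The only difference is presentation. The paper splits the code into the classes $e_S=1$, $e_S=M$ and $2\le e_S\le M-1$, treats $M=2$ separately, and uses concavity of a quadratic on the middle class. Your per-codeword check replaces all of that uniformly, because $(2w-M)(w-e)+(e-1)(2w-e)-(2w-M)(w-1)=(e-1)(M-e)\ge 0$.
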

\begin{proof}
Let $\C$ be any code in $\MCCAC(M,L,w)$.
The codewords in $\C$ are partitioned, according to the number of channels that are occupied in their transmission patterns, i.e., $e_S$, into the following three classes:
\begin{align*}
   &\C_1=\{S\in \C: e_S=1\},\\
   &\C_2=\{S\in \C:e_S=M\}, \text{ and}\\
   &\C_3=\{S\in \C: 2\leq e_S\leq M-1\}.\\
\end{align*}

Since the prime factors of $L$ are all larger than or equal to $2w-1$, by Lemma~\ref{lem:exceptional_2D}, each codeword in $\C$ is non-exceptional.
Then, we have $|S_i-S_j|\geq |S_i|+|S_j|-1$ for any $i,j\in\mathcal{I}_M$, and in particular, $|d^*(S_i)|=|S_i-S_i|-1\geq 2(|S_i|-1)$.

For any $i\in\mathcal{I}_M$, by the disjoint-difference-array property, we have
\begin{align*}
    L-1 = |\Z^*_L| \geq \sum_{S\in\C}|D_S(i,i)|.
\end{align*}
Hence, 
\begin{align*}
    M(L-1) &\geq \sum_{i\in\mathcal{I}_M}\sum_{S\in \C}|D_S(i,i)| \\
    &= \sum_{S\in \C}\sum_{i\in\mathcal{I}_M, S_i\neq\emptyset}|d^*(S_i)| \geq \sum_{S\in \C}\sum_{i\in\mathcal{I}_M, S_i\neq\emptyset} 2(|S_i|-1)\\
    &= 2(w-1)|\C_1|+2(w-M)|\C_2|+2\sum_{S\in \C_3}(w-e_S),
\end{align*}
which implies that
\begin{equation} \label{eq:UB-w>M-1}
(w-1)|\C_1|+(w-M)|\C_2|+\sum_{S\in \C_3}(w-e_S)\leq \frac{M(L-1)}{2}.
\end{equation}

Now we consider $D_S(i,j)$ for distinct indices $i,j\in\mathcal{I}_M$, by going through all codewords in $\C$.
The sets of differences $D_S(i,j)$ for $S\in\C$ are mutually disjoint, and their union is a subset of $\Z_L$.
As $D_S(i,j)=\emptyset$ for $S\in\C_1$, we have
\begin{equation}\label{eq:UB-w>M-Ds}
    L = |\Z_L| \geq \sum_{S\in\C_2\cup\C_3}|D_S(i,j)|.
\end{equation}
Therefore,
\begin{align*}
    M(M-1)L&\geq \sum_{i,j\in\mathcal{I}_M, i\neq j}\sum _{S\in \C_2\cup \C_3} |D_S(i, j)| \\
    &\geq \sum_{S\in\C_2\cup\C_3}\sum_{\substack{i,j\in\mathcal{I}_M, i\neq j\\ S_i,S_j\neq\emptyset}}(|S_i|+|S_j|-1) \\
    &= \sum_{S\in\C_2\cup\C_3} \left(2(e_S-1)\left(\sum_{i\in\mathcal{I}_M, S_i\neq\emptyset}|S_i|\right)-e_S(e_S-1)\right) \\
    &= \sum_{S\in\C_2\cup\C_3}(e_S-1)(2w-e_S) \\
    &\geq (M-1)(2w-M)|\C_2| + \sum_{S\in\C_3}(e_S-1)(2w-e_S),
\end{align*}
yields that 
\begin{equation}\label{eq:UB-w>M-2}
    (M-1)(2w-M)|\C_2| + \sum_{S\in\C_3}(e_S-1)(2w-e_S) \leq M(M-1)L .
\end{equation}

We multiply the inequality in~\eqref{eq:UB-w>M-1} by $(2w-M)$ and combine the result with~\eqref{eq:UB-w>M-2}.
We then get 
\begin{equation}\label{eq:UB-w>M-3}
\begin{split}
& (w-1)(2w-M)\left(|\C_1|+|\C_2|\right) + \sum_{S\in\C_3}\left(-e_S^2+(M+1)e_S+2w^2-Mw-2w\right)  \\
\leq &\ \frac{M(L-1)}{2}(2w-M)+M(M-1)L. 
\end{split}
\end{equation}
When $M=2$, we have $\C_3=\emptyset$, and thus it follows from~\eqref{eq:UB-w>M-3} that
\begin{align*}
|\C|=|\C_1|+|\C_2| \leq \frac{M(M-1)L}{(2w-M)(w-1)}+\frac{M(L-1)}{2w-2},
\end{align*}
as desired.
So, we assume $M\geq 3$ in what follows.

Define a function $f(x)=-x^2+(M+1)x+2w^2-Mw-2w$.
The derivative of $f(x)$ is $-2x+M+1$, implying that $f(x)$ is a concave function with a unique maximum at $x=\frac{M+1}{2}$.
Let us focus on the interval $[2,M-1]$.
Since $\frac{M+1}{2}\in[2,M-1]$ for $M\geq 3$ and $f(2)=(2w-M)(w-1)+M-2=f(M-1)$, we have
\begin{align*}
f(x) &\geq \min\left\{f(2),f(M-1)\right\}\\ 
&= (2w-M)(w-1)+M-2,
\end{align*}
for $2\leq x\leq M-1$.
Therefore, 
\begin{equation}\label{eq:UB-w>M-4}
    f(e_S)\geq (2w-M)(w-1)+M-2 \geq (2w-M)(w-1)
\end{equation}
since $2\leq e_S\leq M-1$ and $M\geq 3$ for $S\in\C_3$.
Therefore, the left-hand-side of~\eqref{eq:UB-w>M-3} turns to
\begin{align}
& \quad \ (w-1)(2w-M)\left(|\C_1|+|\C_2|\right)+\sum_{S\in\C_3}f(e_S) \notag \\
&\geq (w-1)(2w-M)\left(|\C_1|+|\C_2|+|\C_3|\right) \label{eq:UB-w>M-5} \\
&= (w-1)(2w-M)|\C|. \notag
\end{align}
Then the inequality in~\eqref{eq:K-UB-w>M} follows.
\end{proof}

\begin{remark}\rm
If $M\geq 3$, the equality~\eqref{eq:K-UB-w>M} will hold only when $\C_3=\emptyset$, that is, all involved codewords $S$ has either $e_S=1$ or $e_S=M$.
This fact can be seen from \eqref{eq:UB-w>M-4} -- \eqref{eq:UB-w>M-5}: If $M\geq 3$, the inequality in~\eqref{eq:UB-w>M-4} turns out to be $f(e_S)>(2w-M)(w-1)$, which leads to a strict inequality in~\eqref{eq:UB-w>M-5} whenever $\C_3\neq\emptyset$.
\end{remark}

The upper bound stated in Theorem~\ref{thm:UB-any-w} and Theorem~\ref{thm:UB-w>M} remain valid when $M=1$.  So, both they can be viewed as generalizations of Theorem~\ref{thm:UB-single}.

It is worth mentioning that the proof method used for Theorem~\ref{thm:UB-w>M} is also applicable to Theorem~\ref{thm:UB-any-w} but it differs from the original proof given in~\cite{WFLWG23}. 
For completeness, we provide the new proof of Theorem~\ref{thm:UB-any-w} in Appendix~\ref{Appendix:A}.
Moreover, our proof in Appendix~\ref{Appendix:A} can further reveal the fact that the equality in~\eqref{eq:UB-any-w} will hold only when the codewords $S$ with either $e_S=1$ or $e_S=w$ are considered.
Note that all the optimal MC-CACs obtained in~\cite{LSWZ21,WFLWG23} have this property.

\section{Single-Channel CACs}\label{sec:CAC}

Firstly, we introduce the $p$-ary representation of a positive integer.
Given a positive integer $n$, let 
\begin{align*}
\mathbb{Z}^{\times}_n\triangleq\{x\in\mathbb{Z}_n:\,\gcd(x,n)=1\}.
\end{align*}
$\mathbb{Z}^{\times}_n$ is the set of \emph{units} (i.e., invertible elements) in $\mathbb{Z}^*_n$, and thus is a multiplicative group.
Note that $\mathbb{Z}^{\times}_n=\mathbb{Z}^{*}_n$ when $n$ is a prime.

Let $p$ be an odd prime and $r$ a positive integer.
For $c\in\mathbb{Z}_{p^r}$, consider the $p$-ary representation $c=c_0+c_1p+\cdots+c_{r-1}p^{r-1}$.
For $t=0,1,\ldots,r-1$, let $\mathsf{L}_t$ be the collection of $c\in\mathbb{Z}^{*}_{p^r}$ whose nonzero least significant digit in its $p$-ary representation is $p^t$.
Obviously, $|\mathsf{L}_t|=(p-1)p^{r-t-1}$, and $\mathsf{L_0},\mathsf{L}_1,\ldots,\mathsf{L}_{r-1}$ form a partition of $\mathbb{Z}^{*}_{p^r}$, i.e., $\mathbb{Z}_{p^r}^{*}=\mathsf{L}_0\uplus\mathsf{L}_1\uplus\cdots\uplus\mathsf{L_{r-1}}$.
Integers in $\mathsf{L}_t$ are called in the $t$-th \textit{layer}.

For a non-empty $A\subseteq\mathbb{Z}^{*}_p$, we arise it to a subset in $\mathbb{Z}^{*}_{p^r}$, for any positive integer $r$, by defining
\begin{equation}\label{eq:S_r(A)-def}
\mathcal{S}_r(A) \triangleq A_0\uplus A_1\uplus \cdots \uplus A_{r-1},
\end{equation}
where 
\begin{equation*}
A_t = \{c\in\mathsf{L}_t:\,c_t\in A\}.
\end{equation*}
$\mathcal{S}_r(A)$ is the collection of elements in $\mathbb{Z}^{*}_{p^r}$ whose nonzero least significant digit values in their $p$-ary representation are in $A$.
Obviously, $|A_t|=|A|p^{r-1-t}$ for each $t$, and thus 
\begin{equation}\label{eq:S_r(A)-size}
|\mathcal{S}_r(A)|=|A|\left(1+p+\cdots+p^{r-1}\right) =|A|\frac{p^r-1}{p-1}.
\end{equation}

The following is a useful property of the $p$-ary representation of $c\in\Z_{p^r}^*$, which can also be found in~\cite[Proposition 3]{LWXZ25}.

\begin{proposition}[\!\!\cite{LWXZ25}]\label{prop:p-ary}
Let $p$ be an odd prime and $r$ be a positive integer.
For $j\in\mathsf{L}_0$ and $c\in\mathsf{L}_t$, $0\leq t\leq r-1$, one has $jc\in\mathsf{L}_t$ and
\begin{equation}\label{eq:p-ary}
(jc)_t = j_0\cdot c_t \ (\bmod\, p).
\end{equation}
Note that $\mathsf{L}_0=\mathbb{Z}^{\times}_{p^r}$, which is the set of units in $\mathbb{Z}^*_{p^r}$.
\end{proposition}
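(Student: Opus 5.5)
The plan is to factor out the power of $p$ from $c$ and reduce everything to a congruence modulo $p$. Since $c\in\mathsf{L}_t$, its $p$-ary digits satisfy $c_0=\cdots=c_{t-1}=0$ and $c_t\neq 0$. We can therefore write $c=p^t u$ as integers, with
\begin{align*}
u=c_t+c_{t+1}p+\cdots+c_{r-1}p^{r-1-t},\qquad 0\le u<p^{r-t},\qquad u\equiv c_t\not\equiv 0 \pmod p.
\end{align*}
We also write $j=j_0+j_1p+\cdots$ with $j_0\neq 0$, so that $j\equiv j_0\pmod p$.

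\textbf{Step 1: the product as a multiple of $p^t$.} Working in $\mathbb{Z}_{p^r}$, we have $jc\equiv p^t(ju)\pmod{p^r}$. Let $v\in\{0,\ldots,p^{r-t}-1\}$ be the residue of $ju$ modulo $p^{r-t}$. Multiplying a congruence modulo $p^{r-t}$ by $p^t$ gives a congruence modulo $p^r$, so the representative of $jc$ in $\{0,\ldots,p^r-1\}$ is exactly $p^t v$. Hence the $p$-ary digits of $jc$ are those of $v$ shifted up by $t$ places. In particular:
\begin{itemize}
\item $(jc)_0=\cdots=(jc)_{t-1}=0$, and
\item $(jc)_t=v_0$, the least significant digit of $v$, which is $v \bmod p$.
\end{itemize}

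\textbf{Step 2: the digit $v_0$.} Since $p\mid p^{r-t}$ (as $t\le r-1$), we get $v\equiv ju\equiv j_0c_t\pmod p$. Because $p$ is prime and $j_0,c_t\in\{1,\ldots,p-1\}$, the product $j_0c_t$ is not divisible by $p$. So $v_0=j_0c_t \bmod p\neq 0$. This shows that the lowest nonzero digit of $jc$ sits in position $t$, so $jc\in\mathsf{L}_t$, and that this digit equals $j_0c_t \bmod p$, which is \eqref{eq:p-ary}.

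\textbf{Step 3: the closing remark.} An element $c\in\mathbb{Z}_{p^r}^*$ lies in $\mathsf{L}_0$ iff $c_0\neq 0$, iff $p\nmid c$, iff $\gcd(c,p^r)=1$. Thus $\mathsf{L}_0=\mathbb{Z}^{\times}_{p^r}$.

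None of these steps is a real obstacle. The only point needing care is Step 1: one has to pass correctly from the congruence modulo $p^{r-t}$ to the digit string modulo $p^r$, that is, check that the reduced representative of $jc$ is exactly $p^t v$ with $v<p^{r-t}$, so that no carries disturb digit $t$.
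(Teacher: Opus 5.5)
Your proof is correct and complete. Writing $c=p^t u$ with $p\nmid u$ and reducing $ju$ modulo $p^{r-t}$ gives the reduced representative $p^t v$ of $jc$, with $0\le v<p^{r-t}$. This rules out any carry into positions below $t$. Reading off $v_0\equiv j_0c_t\not\equiv 0 \pmod p$ then yields both $jc\in\mathsf{L}_t$ and \eqref{eq:p-ary}, and Step~3 correctly identifies $\mathsf{L}_0$ with the units.

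The paper does not prove this proposition; it imports it from \cite{LWXZ25}, so there is no in-paper argument to compare against. Your argument never uses that $p$ is odd, and it treats $j$ only as a residue with $p\nmid j$. It therefore also covers the later applications with multipliers such as $-1$ and $j-w+1$, read as residues modulo $p_i^{r_i}$.
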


\subsection{Codeword Length $L=(w-1)p_1^{r_1}p_2^{r_2}\cdots p_n^{r_n}$}

Given a prime $p$, an element $a\in\mathbb{Z}^*_p$ is called a \textit{quadratic residue} if there exists an integer $x\in\mathbb{Z}^*_p$ such that $a=x^2$; otherwise, $a$ is called a \textit{quadratic non-residue}.
For $a\in\Z^*_p$, denote by $\left(\frac{a}{p}\right)=1$ if $a$ is a quadratic residue modulo $p$ and $-1$ otherwise.
The notation $\left(\frac{a}{p}\right)$ is called the \textit{Legendre symbol}. 
It can be shown (e.g.,~\cite{IR90}) that the Legendre symbol is multiplicative:
\begin{equation}\label{eq:Legendre}
\left(\frac{ab}{p}\right) = \left(\frac{a}{p}\right)\left(\frac{b}{p}\right).
\end{equation}
This subsection is devoted to generalizing the following result, which is due to~\cite[Theorem 6]{LWXZ25}.

\begin{theorem}[\!\!\cite{LWXZ25}]\label{thm:single_w-1pr_old}
Let $w$ be a positive integer and $p$ be a prime such that $p\geq w$.
If 
\begin{equation*}
\left(\frac{-1}{p}\right)=-1
\end{equation*}
and
\begin{equation*}
\left(\frac{j}{p}\right)\left(\frac{j-w+1}{p}\right)=-1,\ \forall j=1,2,\ldots,w-2,
\end{equation*}
then for any integer $r\geq 1$,
\begin{align*}
K((w-1)p^r,w)=\frac{p^r-1}{2}.
\end{align*}
\end{theorem}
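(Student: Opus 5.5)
The plan is to prove the two inequalities separately. Throughout, identify $\Z_L$ with $\Z_{w-1}\times\Z_{p^r}$ by the CRT map $\theta$ of~\eqref{eq:CRT-correspondence}. Let $\HH_0=\{x\in\Z_L:\,x\equiv 0 \pmod{p^r}\}$, which is the unique subgroup of order $w-1$ (Proposition~\ref{prop:unique_subgroup}). Note the identity $(2w-2)\cdot\frac{p^r-1}{2}=(w-1)p^r-(w-1)=L-|\HH_0|$. So the natural route to the upper bound is to show that each codeword needs at least $2w-2$ differences lying in $\Z_L\setminus\HH_0$.

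\emph{Upper bound.} Let $S$ be any codeword and let $\pi:\Z_L\to\Z_{p^r}$ be reduction mod $p^r$. I aim to show $|d^*(S)\setminus\HH_0|\geq 2w-2$. Then the disjoint-difference-set property~\eqref{eq:CAC} gives $|\C|(2w-2)\leq L-(w-1)$, which is the required bound.

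\begin{itemize}
\item Since $|\HH_0|=w-1<w$, the set $S$ is not contained in a single coset of $\HH_0$. Hence $\pi(S)$ has at least two elements, and the differences of $S$ outside $\HH_0$ are exactly those whose image under $\pi$ is nonzero.
\item Apply Kneser's theorem (Theorem~\ref{thm:Kneser}) to $S-S$ in $\Z_L$, in the form $|S-S|\geq |S+\HH|+|S+\HH|-|\HH|$ with $\HH=\HH(S-S)$, and separately to $\pi(S)-\pi(S)$ in $\Z_{p^r}$.
\item Every nontrivial subgroup of $\Z_{p^r}$ has order a power of $p$, and $p\geq w$. So $\pi(S)$ behaves like a non-exceptional set. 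This gives $|\pi(S)-\pi(S)|\geq 2|\pi(S)|-1$, except in degenerate cases where $\pi(S)-\pi(S)$ contains a whole subgroup of order $p^s\geq w$. In those cases the count is already large.
\item Each nonzero element of $\pi(S)-\pi(S)$ lifts to at least one element of $d^*(S)\setminus\HH_0$. Lifts from distinct fibres of $\pi$ give additional distinct differences.
\item Combine these counts over the fibre sizes $|S\cap(x+\HH_0)|$ via a ``fibre-sum'' estimate, in the same style as the proof of Lemma~\ref{lem:exceptional_new}(ii), to reach $2w-2$.
\end{itemize}

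\emph{Construction.} Let $Q\subseteq\Z_p^*$ be the set of quadratic residues and take $C=\mathcal{S}_r(Q)$. By~\eqref{eq:S_r(A)-size}, $|C|=(p^r-1)/2$. Because $\left(\frac{-1}{p}\right)=-1$, we have $C\uplus(-C)=\Z_{p^r}^*$.

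For each $c\in C$, take the equi-difference codeword with generator $g_c=\theta^{-1}(1,c)$, namely $S_c=\{0,g_c,\ldots,(w-1)g_c\}$. Its differences are $\pm j g_c=(\pm j \bmod (w-1),\,\pm jc)$ for $1\leq j\leq w-1$. These are $2w-2$ distinct elements: their second coordinates $\pm jc$ are distinct because $p\geq w$ and $c\neq 0$.

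To check disjointness, suppose $j g_c=\epsilon j' g_{c'}$ with $\epsilon=\pm1$. The first coordinates force $j\equiv \epsilon j'\pmod{w-1}$. There are two cases.

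\begin{itemize}
\item $j=j'$ and $\epsilon=1$. Then $jc=jc'$, and since $j$ is a unit mod $p^r$, we get $c=c'$.
\item $\epsilon=-1$ with $j'=w-1-j$ (this includes $j=j'=w-1$, or $j-(w-1)\equiv$ a unit multiple). Then $jc=(j-w+1)c'$. By Proposition~\ref{prop:p-ary}, $c$ and $c'$ lie in the same layer $\mathsf{L}_t$, and $j_0c_t\equiv (j-w+1)c'_t \pmod p$. Taking Legendre symbols with~\eqref{eq:Legendre} gives $\left(\frac{j}{p}\right)\left(\frac{j-w+1}{p}\right)=1$, which contradicts the hypothesis. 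For $j=w-1$ the equation reads $(w-1)c=-(w-1)c'$, so $c=-c'$, which is impossible since $C\cap(-C)=\emptyset$.
\end{itemize}

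Hence the codewords $S_c$, $c\in C$, have pairwise disjoint difference sets, giving the lower bound $(p^r-1)/2$.

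The main obstacle is the upper-bound step $|d^*(S)\setminus\HH_0|\geq 2w-2$ for arbitrary, not necessarily equi-difference, $S$. Specifically, the difficulty is handling codewords whose projection $\pi(S)$ is small, so that many elements of $S$ share a fibre of $\pi$. I would first try a Kneser argument on $\pi(S)-\pi(S)$ weighted by the fibre sizes. If that fails, I would fall back on a direct Cauchy--Davenport-type count in $\Z_{p^r}$, using that its nontrivial subgroups have order at least $p\geq w$.
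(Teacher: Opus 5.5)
Your construction half is correct. It is exactly the paper's Theorem~\ref{thm:single_w-1pr_construction} with $n=1$. One small slip: when $p\le 2w-2$ the second coordinates $\pm jc$ need not be distinct, since $pc=0$ for $c\in\mathsf{L}_{r-1}$. Distinctness is irrelevant to the lower bound, though.

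\textbf{The gap is in the upper bound.} Your whole count rests on the per-codeword inequality $|d^*(S)\setminus\HH_0|\ge 2w-2$, and this is false under the stated hypotheses, because $p\ge w$ still allows $p\le 2w-2$. Take $w=p=3$. It satisfies both conditions: $\left(\frac{-1}{3}\right)=-1$ and $\left(\frac{1}{3}\right)\left(\frac{-1}{3}\right)=-1$. With $r=1$ we have $L=6$ and $\HH_0=\{0,3\}$. For $S=\{0,2,4\}$ we get $d^*(S)=\{2,4\}$, so $|d^*(S)\setminus\HH_0|=2<4$.

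This is precisely your ``degenerate case'', where $\pi(S)-\pi(S)$ is a whole subgroup of order $p\ge w$. Such a subgroup supplies only $p-1$ nonzero differences, which can be below $2w-2$. So the remark that ``the count is already large'' fails, and no per-codeword bound of this type can work. Non-exceptionality in $\Z_{p^r}$ needs $p>2w-2$, not $p\ge w$, so your fallback Cauchy--Davenport count has the same defect.

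\textbf{The missing idea is global, as in the paper's proof of Theorem~\ref{thm:single_w-1pr_optimal} with $n=1$.} Count all nonzero differences, not just those outside $\HH_0$, and let the oddness of $p^r$ supply the slack. The argument runs as follows.

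\begin{enumerate}
\item Let $S$ be exceptional and $H_S=\HH(d(S))$. Lemma~\ref{lem:exceptional_old}(i),(iii) give $2\le |H_S|\le 2w-2$ and $|H_S|\nmid w-1$. Hence $p\mid |H_S|$.
\item So $H_S$ contains the unique subgroup of order $p$ (Proposition~\ref{prop:unique_subgroup}).
\item Since $H_S\subseteq d(S)$ (Proposition~\ref{prop:stabilizer}(ii)), at most one codeword of $\C$ can be exceptional.
\item That codeword still has $|d^*(S)|\ge |H_S|-1\ge p-1\ge w-1$.
\end{enumerate}

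Therefore
\begin{equation*}
(2w-2)|\C|\le (w-1)p^r-1+(w-1),
\end{equation*}
which gives $|\C|\le \left\lfloor \frac{p^r-1}{2}+\frac{2w-3}{2w-2}\right\rfloor=\frac{p^r-1}{2}$.

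When $p\ge 2w-1$ there are no exceptional codewords at all. Then $(2w-2)|\C|\le (w-1)p^r-1$ already yields the bound, because $p^r$ is odd. So your stronger $\HH_0$-avoiding inequality is never needed, and you leave it unproved even in that regime. The exact identity $(2w-2)\frac{p^r-1}{2}=L-|\HH_0|$ that motivated your route asks for more than the problem requires.
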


We first introduce the following direct construction.

\begin{theorem}\label{thm:single_w-1pr_construction}
Let $w$ be a positive integer, and let $p_1<p_2<\cdots<p_n$ be primes satisfying $p_1\geq w$.
If, for each $1\leq i\leq n$, it holds that
\begin{equation}\label{eq:Q1}
\left(\frac{-1}{p_i}\right)=-1
\end{equation}
and
\begin{equation}\label{eq:Q2}
\left(\frac{j}{p_i}\right)\left(\frac{j-w+1}{p_i}\right)=-1,\ \forall j=1,2,\ldots,w-2,
\end{equation}
then for any positive integers $r_1,\ldots,r_n$, there exists a code $\C\in\CACe(L,w)$ with 
\begin{align*}
|\C| = \frac{p_1^{r_1}\cdots p_n^{r_n}-1}{2},
\end{align*}
where $L=(w-1)p_1^{r_1}\cdots p_n^{r_n}$.
\end{theorem}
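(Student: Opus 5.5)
The plan is to build the code directly inside $\Z_{w-1}\times\Z_N$, where $N=p_1^{r_1}\cdots p_n^{r_n}$. Since $p_1\geq w$, we have $\gcd(w-1,N)=1$, so the CRT map \eqref{eq:CRT-correspondence} identifies $\Z_L$ with $\Z_{w-1}\times\Z_N$. For a set $C\subseteq\Z_N^*$ to be chosen below, each $c\in C$ gets the equi-difference codeword
\[
S_c=\{0,g_c,2g_c,\ldots,(w-1)g_c\}, \qquad g_c=(1,c).
\]

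\textbf{Step 1 (each $S_c$ is a weight-$w$ codeword).} The first coordinates $0,1,\ldots,w-1$ of the elements of $S_c$ are distinct modulo $w-1$ except for $0$ and $w-1$. The points $0$ and $(w-1)g_c=(0,(w-1)c)$ still differ, because $w-1$ is a unit modulo $N$ and $c\neq 0$. Hence $|S_c|=w$. The difference set is
\[
d^*(S_c)=\{(j \bmod (w-1),\, jc):\ j=\pm1,\ldots,\pm(w-1)\}.
\]

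\textbf{Step 2 (reduce disjointness to conditions on $C$).} Suppose $(j,jc)=(j',j'c')$ with $j,j'\in\{\pm1,\ldots,\pm(w-1)\}$ and $c,c'\in C$. Since $j\equiv j' \pmod{w-1}$, either $j'=j$ or $j'=j\mp(w-1)$.
\begin{itemize}
\item If $j'=j$, then $j$ is a unit mod $N$ (as $|j|<p_1$), so $c=c'$.
\item If $j=\pm(w-1)$ and $j'=\mp(w-1)$, then $c=-c'$.
\item Otherwise, up to a global sign, $jc=(j-w+1)c'$ with $1\leq j\leq w-2$.
\end{itemize}
So it suffices to find $C\subseteq\Z_N^*$ with $|C|=(N-1)/2$ such that (a) $C\cap(-C)=\emptyset$, and (b) $jc\neq(j-w+1)c'$ for all $c,c'\in C$ and $1\le j\le w-2$. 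Condition (b) must hold even when $c=c'$, which also gives internal consistency of each codeword.

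\textbf{Step 3 (define $C$ with a Legendre character).} For $0\neq c\in\Z_N$, let $i(c)$ be the smallest $i$ with $x:=c \bmod p_i^{r_i}\neq 0$. Let $d(c)\in\Z_{p_i}^*$ be the nonzero least significant $p_i$-ary digit of $x$, and put $\chi(c)=\left(\frac{d(c)}{p_i}\right)$. Set $C=\{c:\chi(c)=1\}$. Multiplying by a unit $u$ of $\Z_N$ does not change which CRT components vanish, so $i(uc)=i(c)$. By Proposition~\ref{prop:p-ary}, $d(uc)=u\,d(c) \bmod p_i$, and hence $\chi(uc)=\left(\frac{u}{p_i}\right)\chi(c)$ by \eqref{eq:Legendre}.
\begin{itemize}
\item Taking $u=-1$ and using \eqref{eq:Q1} gives $\chi(-c)=-\chi(c)$. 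This yields (a), and also $|C|=(N-1)/2$ since $c\mapsto -c$ swaps $C$ and its complement in $\Z_N^*$.
\item For (b), the integers $j$ and $j-w+1$ are nonzero and of absolute value less than $p_1$, so both are units mod $N$. If $jc=(j-w+1)c'$, then $c$ and $c'$ share the same index $i$, and applying $\chi$ to both sides gives $\left(\frac{j}{p_i}\right)=\left(\frac{j-w+1}{p_i}\right)$. This contradicts \eqref{eq:Q2}.
\end{itemize}

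The main point needing care is Step 3. The set $C$ must contain non-units of $\Z_N$, and the character has to be defined consistently across the different primes and layers. Choosing the first nonvanishing CRT component together with the least significant digit, and invoking Proposition~\ref{prop:p-ary} on that component, handles this. The remaining case checks in Step 2 are routine.
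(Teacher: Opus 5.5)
Your proof is correct and is essentially the paper's argument. Your set $C$ coincides with the paper's $\widehat{Q}=\biguplus_{i}\widehat{Q}_i$: the first nonvanishing CRT component must have a quadratic-residue lowest nonzero $p_i$-ary digit. Your conditions (a) and (b) are exactly the paper's Case~1 ($c=-c'$, ruled out by \eqref{eq:Q1}) and Case~2 ($jc=(j-w+1)c'$, ruled out by \eqref{eq:Q2}), both handled via Proposition~\ref{prop:p-ary}.

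The differences are cosmetic. You package the digit-and-Legendre-symbol bookkeeping as a single unit-equivariant sign $\chi$. You also obtain $|C|=(N-1)/2$ from the involution $c\mapsto -c$ (using \eqref{eq:Q1}), whereas the paper counts directly: $\sum_i\frac{p_i^{r_i}-1}{2}\prod_{j>i}p_j^{r_j}$.
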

\begin{proof}
For convenience, let $L'=p_1^{r_1}\cdots p_n^{r_n}$.
Then, $L=(w-1)L'$.
Since $w-1,p_1,p_2,\ldots,p_n$ are pairwise coprime, one has
\begin{align*}
\Z_L \cong \Z_{w-1}\times\Z_{L'} \text{ and }
\Z_{L'} \cong \Z_{p_1^{r_1}}\times\cdots\times\Z_{p_n^{r_n}}.
\end{align*}
In what follows, we use ordered pairs $(z,x)\in\Z_{w-1}\times\Z_{L'}$ to represent elements in $\Z_L$, where $x\in\Z_{L'}$ are further represented by $n$-tuples $(x_1,\ldots,x_n)\in\Z_{p_1^{r_1}}\times\cdots\times\Z_{p_n^{r_n}}$.

Let $Q_i$ be the set of quadratic residues modulo $p_i$. 
For $i=1,2,\ldots,n$ define 
\begin{equation}\label{eq:single_w-1pr_generator_i}
    \begin{split}
        \widehat{Q}_i \triangleq \{(0,\ldots,0,g,x_{i+1},\ldots,x_n) &\in\Z_{p_1^{r_1}}\times\cdots\times\Z_{p_n^{r_n}}:\, \\ 
        & g\in\mathcal{S}_{r_i}(Q_i), x_t\in\Z_{p_t^{r_t}} \text{ for } i<t\leq n\}.
    \end{split}
\end{equation}
Let 
\begin{equation}\label{eq:single_w-1pr_generator}
\widehat{Q}\triangleq\biguplus_{i=1}^{n}\widehat{Q}_i
\end{equation}
be the disjoint union of $\widehat{Q}_1,\widehat{Q}_2,\ldots,\widehat{Q}_n$.
As $|Q_i|=(p_i-1)/2$, by \eqref{eq:S_r(A)-size}, we have
\begin{align*}
|\widehat{Q}| = \sum_{i=1}^{n}\frac{p_i^{r_i}-1}{2}\prod_{j=i+1}^{n}p_{j}^{r_j} = \frac{p_1^{r_1}\cdots p_n^{r_n}-1}{2}.
\end{align*}

For $a\in\widehat{Q}$, define 
\begin{equation}\label{eq:single_w-1pr_Sa}
S_a=\{j(1,a)\in\Z_{w-1}\times\Z_{L'}:\,j=0,1,2\ldots,w-1\}
\end{equation}
be the $w$-subset generated by $a$ whose set of nonzero differences is of the form
\begin{align}
d^*(S_a) &= \{\pm j(1,a)\in\Z_{w-1}\times\Z_{L'}:\,j=1,2,\ldots,w-1\} \label{eq:single_w-1pr_d*Sa} \\
&= \{\pm j(1,a)\in\Z_{w-1}\times\Z_{L'}:\,j=1,2,\ldots,w-2\} \cup \{(0,\pm (w-1)a)\}. \label{eq:single_w-1pr_d*Sa_2}
\end{align}
Assume $a=(0,\ldots,0,g,x_{i+1},\ldots,x_n)\in\Z_{p_1}^{r_1}\times\cdots\times\Z_{p_n}^{r_n}$, i.e., $a\in\widehat{Q}_i$, then the elements in $d^*(S_a)$ are of the form
\begin{align*}
\pm j(1,a) = \pm j(1,(0,\ldots,0,g,x_{i+1},\ldots,x_n))
\end{align*}
or 
\begin{align*}
(0,\pm (w-1)a) = (0,(0,\ldots,0,\pm(w-1)g,\pm(w-1)x_{i+1},\ldots,\pm(w-1)x_n)).
\end{align*}

In what follows, we shall prove that $\widehat{Q}$ can generate the desired code $\C$, that is, $d^*(S_a)$, $a\in\widehat{Q}$ are mutually disjoint.
Assume $a\in\widehat{Q}_i$ and $b\in\widehat{Q}_{i'}$, it is obvious that $d^*(S_a)\cap d^*(S_b)=\emptyset$ whenever $i\neq i'$.
So, it suffices to consider $a$ and $b$ are in the same set $\widehat{Q}_i$, for some $i$.
Suppose to the contrary that $d^*(S_a)\cap d^*(S_b)\neq\emptyset$, where $a=(0,\ldots,0,g,x_{i+1},\ldots,x_n)$ and $b=(0,\ldots,0,h,y_{i+1},\ldots,y_n)$ for some $g,h\in\mathcal{S}_{r_i}(Q_i)$.
Without loss of generality generality, assume $j(1,(0,\ldots,0,g,x_{i+1},\ldots,x_n))$ is one of the common elements in $d^*(S_a)\cap d^*(S_b)$, where $1\leq j\leq w-1$.
Then we have
\begin{equation}\label{eq:construction_w-1pr}
j(1,(0,\ldots,0,g,x_{i+1},\ldots,x_n)) = \pm k(1,(0,\ldots,0,h,y_{i+1},\ldots,y_n))
\end{equation}
for some $k\in\{1,2,\ldots,w-1\}$.
There are two cases.

\textit{Case 1:} $j=w-1$. 
From \eqref{eq:construction_w-1pr} we have $(w-1)g=\pm(w-1)h$, which implies $g=-h$ (mod $p_i^{r_i}$) due to $g\neq h$.
By considering the $p_i$-ary representation, both $g$ and $-h$ are in the same layer, say $\LL_t$ for some $t$, and thus $g_t=(-h)_t$ (mod $p_i$).
Note $(-h)_t=(-1)\cdot h_t$ (mod $p_i$) by Proposition~\ref{prop:p-ary}.
By~\eqref{eq:Legendre}, we have
\begin{align*}
\left(\frac{g_t}{p_i}\right) = \left(\frac{(-h)_t}{p_i}\right) = \left(\frac{-1}{p_i}\right)\left(\frac{h_t}{p_i}\right).
\end{align*}
However, $g_t$ and $h_t$ are quadratic residues modulo $p_i$ by the assumption that $g,h\in\mathcal{S}_{r_i}(Q_i)$ and the fact that $g,h\in\mathsf{L}_t$.
It further implies that $\left(\frac{-1}{p_i}\right)=1$, which contradicts the condition in~\eqref{eq:Q1}.

\textit{Case 2:} $1\leq j\leq w-2$.
From \eqref{eq:construction_w-1pr} we have $j(1,g)=\pm k(1,h)$ in $\Z_{w-1}\times\Z_{p_i^{r_i}}$, where $1\leq k\leq w-2$.
If $j(1,g)=k(1,h)$, it follows that $j=k$ and hence $g=h$, where a contradiction occurs.
If $j(1,g)=-k(1,h)$, we have $j+k=w-1$, and then get $jg=(j-w+1)h$ (mod $p_i^{r_i}$). 
By considering the $p_i$-ary representation, both $jg$ and $(j-w+1)h$ are in the same layer, say $\LL_t$ for some $t$.
Since $j\leq w-2<p_i$, we have $j=j_0$ and $(j-w+1)=(j-w+1)_0$ (mod $p_i$), namely, both $j$ and $j-w+1$ are in $\LL_0$.
It follows from Proposition~\ref{prop:p-ary} that  $g,h\in\LL_t$.
More precisely, 
\begin{align*}
g=j^{-1}\cdot jg\in\LL_t \text{ and } h=(j-w+1)^{-1}\cdot(j-w+1)h\in\LL_t.
\end{align*}
Therefore, $g_t,h_t\in Q_i$ by assumption.
By~\eqref{eq:Legendre} and Proposition~\ref{prop:p-ary}, 
\begin{align*}
& (jg)_t = ((j-w+1)h)_t \ (\bmod\, p_i) \\
\Rightarrow \ & j\cdot g_t = (j-w+1)\cdot h_t \ (\bmod\, p_i) \\
\Rightarrow \ & \left(\frac{j}{p_i}\right) \left(\frac{g_t}{p_i}\right) = \left(\frac{j-w+1}{p_i}\right) \left(\frac{h_t}{p_i}\right) \\
\Rightarrow \ & \left(\frac{j}{p_i}\right)  = \left(\frac{j-w+1}{p_i}\right),
\end{align*}
which contradicts the condition in~\eqref{eq:Q2}, and the proof is completed.
\end{proof}

\begin{remark}\label{rm:single_w-1pr_diff_union}\rm
Let $\C$ be the CAC constructed in Theorem~\ref{thm:single_w-1pr_construction}.
By~\eqref{eq:single_w-1pr_d*Sa_2}, we have $|d^*(S)|=2(w-1)$ for all $S\in\C$.
So, it costs $(w-1)(L'-1)$ distinct differences in total, where $L'=p_1^{r_1}\cdots p_n^{r_n}$.
Moreover, since $p_1\geq w$, it also reveals from \eqref{eq:single_w-1pr_d*Sa_2} that $(z,0)\notin d^*(S)$ for all $z\in\Z_{w-1}$ and $S\in\C$.
This concludes that
\begin{align*}
    \bigcup_{S\in\C} d^*(S) = \Z_{w-1}\times\Z^*_{L'}.
\end{align*}
\end{remark}

\begin{example}\label{ex:single_3_7_23}\rm
Let $w=4, n=2, p_1=7,$ and $p_2=23$.
The sets of quadratic residues modulo $7$ and $23$ are $Q_1=\{1,2,4\}$ and $Q_2=\{1,2,3,4,6,8,9,12,13,16,18\}$, respectively.
Obviously, $\left(\frac{-1}{7}\right)=\left(\frac{1}{7}\right)\left(\frac{-2}{7}\right)=-1$ and $\left(\frac{-1}{23}\right)=\left(\frac{1}{23}\right)\left(\frac{-2}{23}\right)=-1$.
By Theorem~\ref{thm:single_w-1pr_construction}, we have an equi-difference CAC of length $L=3\cdot 7^r\cdot 23^s$ and weight $4$ with $(7^r\cdot 23^s-1)/2$ codewords, for any positive integers $r$ and $s$.
When $r=s=1$, we have $\widehat{Q}_1=\{(g,x)\in\Z_7\times\Z_{23}:\,g\in Q_1, x\in\Z_{23}\}$ and $\widehat{Q}_2=\{(0,g)\in\Z_7\times\Z_{23}:\,g\in Q_2\}$.
The obtained code in $\CACe(483,4)$ has generators in $\{\theta^{-1}(1,a):\,a\in\widehat{Q}_1\cup\widehat{Q}_2\}$, where elements $(1,a)$ are considered in $\Z_3\times\Z_7\times\Z_{23}$ and the bijection $\theta:\Z_{483}\to\Z_3\times\Z_7\times\Z_{23}$ is given in~\eqref{eq:CRT-correspondence}.
The corresponding generators are listed as follows.
\begin{align*}
    \{\theta^{-1}(1,1,x):\,x\in\Z_{23}\} &= \{1, 22, 43, 64, 85, 106, 127, 148, 169, 190, 211, 232, \\
                                       & \qquad \! 253, 274, 295, 316, 337, 379, 385, 400, 421, 442, 463 \} \\
    \{\theta^{-1}(1,2,x):\,x\in\Z_{23}\} &= \{16, 37, 58, 79, 100, 121, 142, 163, 184, 205, 226, 247, \\
                                       & \qquad \! 268, 289, 310, 331, 352, 373, 394, 415, 436, 457, 478 \} \\
    \{\theta^{-1}(1,4,x):\,x\in\Z_{23}\} &= \{4, 25, 46, 67, 88, 109, 130, 151, 172, 193, 214, 235, \\
                                       & \qquad \! 256, 277, 298, 319, 340, 361, 382, 403, 424, 445, 466 \} \\
    \{\theta^{-1}(1,0,g):\,g\in Q_2\} &= \{49, 133, 154, 196, 232, 238, 259, 280, 301, 427, 469\}.
\end{align*}
\end{example}

We now show that the construction described in Theorem~\ref{thm:single_w-1pr_construction} is optimal.

\begin{theorem}\label{thm:single_w-1pr_optimal}
Let $w$ be a positive integer, and let $p_1<p_2<\cdots<p_n$ be primes satisfying $p_1\geq w$.
If the two conditions in~\eqref{eq:Q1} and \eqref{eq:Q2} hold, then for any positive integers $r_1,\ldots,r_n$, we have
\begin{align*}
K\left((w-1)p_1^{r_1}\cdots p_n^{r_n},w\right) = \frac{p_1^{r_1}\cdots p_n^{r_n}-1}{2}
\end{align*}
provided that
\begin{equation}\label{eq:single_w-1pr_optimal_condition}
\sum_{i=1}^{t}(2w-1-p_i)\leq w-1,
\end{equation}
where $p_t<2w-1\leq p_{t+1}$ for some $0\leq t\leq n$.
Note that $p_0\triangleq 1$ and $p_{n+1}\triangleq\infty$ by convention.
Note also that \eqref{eq:single_w-1pr_optimal_condition} always holds for $t=0,1$.
\end{theorem}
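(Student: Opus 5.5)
The lower bound $K\ge (L'-1)/2$, where $L'=p_1^{r_1}\cdots p_n^{r_n}$ and $L=(w-1)L'$, comes directly from Theorem~\ref{thm:single_w-1pr_construction}. The proof therefore reduces to the matching upper bound, which I would get by counting nonzero differences in $\Z_L^*$ and controlling the exceptional codewords through their stabilizers.

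\textbf{Counting identity.} Let $\C\in\CAC(L,w)$, and let $\mathcal{E}\subseteq\C$ be its set of exceptional codewords, i.e.\ those with $|d^*(S)|<2w-2$. The sets $d^*(S)$ are pairwise disjoint subsets of $\Z_L^*$, so
\begin{equation*}
(2w-2)|\C| \le L-1+\sum_{S\in\mathcal{E}}\bigl(2w-2-|d^*(S)|\bigr).
\end{equation*}
Call the summand the \emph{deficit} of $S$. The aim is to show that the total deficit is at most $\sum_{i=1}^{t}(2w-1-p_i)$, which is at most $w-1$ by \eqref{eq:single_w-1pr_optimal_condition}.

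\textbf{Step 1: the stabilizer of an exceptional codeword.} Fix $S\in\mathcal{E}$ and put $H=\HH(S-S)$. By Lemma~\ref{lem:exceptional_old}, $2\le |H|\le 2w-2$ and $|H|\nmid w-1$. Since $|H|$ divides $L=(w-1)L'$ and $\gcd(w-1,L')=1$, the condition $\gcd(|H|,L')=1$ would force $|H|\mid w-1$. Hence some $p_i$ divides $|H|$, and then $p_i\le |H|\le 2w-2$, so $i\le t$.

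Let $P_i$ be the unique subgroup of order $p_i$ (Proposition~\ref{prop:unique_subgroup}). Then $P_i\subseteq H\subseteq S-S$ by Proposition~\ref{prop:stabilizer}(ii). Consequently $|d^*(S)|\ge |H|-1\ge p_i-1$, and the deficit of $S$ is at most $2w-1-p_i$.

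\textbf{Step 2: at most one exceptional codeword per prime.} Assign to each $S\in\mathcal{E}$ the smallest index $i\le t$ with $P_i\subseteq d(S)$. Suppose two distinct codewords were assigned the same $i$. Both $d^*$ sets would then contain the nonempty set $P_i\setminus\{0\}$, contradicting the disjoint-difference-set property. So $|\mathcal{E}|\le t$, and the total deficit is at most $\sum_{i=1}^{t}(2w-1-p_i)\le w-1$.

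\textbf{Step 3: conclusion.} Combining the steps gives
\begin{equation*}
(2w-2)|\C|\le (w-1)L'-1+(w-1),
\end{equation*}
that is, $|\C|\le \frac{L'+1}{2}-\frac{1}{2w-2}$. Since $L'$ is odd and $|\C|$ is an integer, $|\C|\le (L'-1)/2$, which completes the argument.

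\textbf{Expected obstacle.} The delicate point is Step 1: arguing that the stabilizer must meet $L'$ nontrivially. This relies on Lemma~\ref{lem:exceptional_old}(iii), excluding $|H|\mid w-1$, together with the coprimality of $w-1$ and $L'$. It is also what prevents exceptional codewords living purely in the $\Z_{w-1}$ component from creating an uncontrolled deficit.
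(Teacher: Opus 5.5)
Your proposal is correct and follows essentially the same route as the paper. Both arguments bound the total deficit of the exceptional codewords by showing that each stabilizer $\HH(S-S)$ has order divisible by some $p_i$ with $i\le t$, and then use uniqueness of subgroups of $\Z_L$ together with the disjoint-difference-set property to allow at most one exceptional codeword per such prime. The differences are cosmetic: you work directly with the order-$p_i$ subgroup $P_i\subseteq d(S)$ rather than the paper's pairwise coprimality of the $|H_S|$, which also lets you absorb the paper's separate case $p_1>2w-2$ (i.e.\ $t=0$) into the general argument.
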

\begin{proof}
Let $L'=p_1^{r_1}\cdots p_n^{r_n}$.
By Theorem~\ref{thm:single_w-1pr_construction}, it suffices to show that for any code $\C\in\CAC((w-1)L',w)$, one has $|\C|\leq\frac{L'-1}{2}$.

Let $\mathcal{E}\subseteq\C$ be the collection of all exceptional codewords in $\C$.
For notational convenience, denote by $H_S=\HH(d(S))$ for $S\in\mathcal{E}$.
Pick any $S\in\mathcal{E}$.
Since $H_S$ is a subgroup of $\Z_{(w-1)L'}$, the size $|H_S|$ must divide $(w-1)L'$.
By Lemma~\ref{lem:exceptional_old}(iii), $|H_S|$ does not divide $(w-1)$.
As $|H_S|\geq 2$ by Lemma~\ref{lem:exceptional_old}(i), then $|H_S|$ is a multiple of $p_i$ for some $i$.
We consider two cases.

\textit{Case 1:} $p_1>2w-2$.
In this case we have $|H_S|\leq 2w-2<p_1$ due to Lemma~\ref{lem:exceptional_old}(i). 
This is a contradiction to $|H_S|$ a multiple of some prime factor $p_i$.
In other words, there is no exceptional codeword in this case.
Therefore, $|d^*(S)|\leq 2w-2$ for all $S\in\C$.
By the disjoint-difference-set property,
\begin{align*}
(w-1)L'-1 = |\Z^*_{(w-1)L'}| \geq \bigcup_{S\in\C}|d^*(S)| \geq (2w-2)|\C|,
\end{align*}
which implies that
\begin{align*}
|\C|\leq \left\lfloor\frac{L'-1}{2}+\frac{w-2}{2w-2}\right\rfloor = \frac{L'-1}{2}.
\end{align*}

\textit{Case 2:} $p_1\leq 2w-2$.
Since $0\in d(S)$ for $S\in\mathcal{E}$, if follows from Proposition~\ref{prop:stabilizer}(ii) that $H_S\subseteq d(S)$.
Then, $H_S\cap H_{S'}=\{0\}$ for any two distinct $S,S'\in\mathcal{E}$.
Moreover,  $\gcd(|H_S|,|H_{S'}|)=1$. Assume  on the contrary that $\gcd(|H_S|,|H_{S'}|)=d>1$. Since $L/d $ is a multiple of $L / |H_S| $ (also a multiple of $L/|H_{S'}|$), it follows from Proposition~\ref{prop:unique_subgroup} that $L/d \in H_S \cap H_{S'}$, a contradiction.
Now, assume $p_t<2w-1\leq p_{t+1}$ for some $1\leq t\leq n$.
Since $|H_S|\leq 2w-2<p_{t+1}$, for $S\in\mathcal{E}$, the prime factors of $|H_S|$ are in $\{p_1,\ldots,p_t\}$.
This concludes that $|\mathcal{E}|\leq t$ and $|H_S|$, $S\in\mathcal{E}$, are pairwise coprime.
Therefore,
\begin{equation}\label{eq:single_w-1pr_optimal_proof_1}
\sum_{S\in\mathcal{E}}(2w-1-|H_S|) \leq \sum_{i=1}^{t}(2w-1-p_i).
\end{equation}
On the other hand, by the disjoint-difference-set property,
\begin{align*}
(w-1)L'-1 &= |\Z^*_{(w-1)L'}| \geq \bigcup_{S\in\C}|d^*(S)| \\
&\geq (2w-2)(|\C|-|\mathcal{E}|) + \sum_{S\in\mathcal{E}}|d^*(S)|.
\end{align*}
Since $|H_S|\leq |d(S)|=|d^*(S)|+1$ due to Proposition~\ref{prop:stabilizer}(ii), it follows that
\begin{equation}\label{eq:single_w-1pr_optimal_proof_2}
(w-1)L'-1 \geq (2w-2)|\C|-\sum_{S\in\mathcal{E}}(2w-1-|H_S|).
\end{equation}
Combining \eqref{eq:single_w-1pr_optimal_condition} -- \eqref{eq:single_w-1pr_optimal_proof_2} yields
\begin{align*}
|\C| \leq \left\lfloor\frac{(w-1)L'+w-2}{2w-2}\right\rfloor = \frac{L'-1}{2}+\left\lfloor\frac{2w-3}{2w-2}\right\rfloor = \frac{L'-1}{2},
\end{align*}
as desired.
\end{proof}

The primes that satisfy the two conditions~\eqref{eq:Q1} and \eqref{eq:Q2} are characterized in~\cite{LWXZ25} with the help of Gauss's lemma and laws of quadratic reciprocity, which lead to a series of optimal CACs of length $(w-1)p^r$ in~\cite[Corollary 2]{LWXZ25}.
As an application of Theorem~\ref{thm:single_w-1pr_optimal}, the following corollary generalizes these optimal CACs.

\begin{corollary}\label{coro:single_w-1pr}
Let $L'=\prod_{i=1}^{n}p_i^{r_i}$, where $p_i$ are distinct primes and $r_i$ are any positive integers. 
One has
\begin{enumerate}
\item $K(3L',4)=(L'-1)/2$ if $p_i\equiv -1\ (\bmod\ 8)$ for all $i$,
\item $K(4L',5)=(L'-1)/2$ if $p_i\equiv -1\ (\bmod\ 12)$ for all $i$, 
\item $K(5L',6)=(L'-1)/2$ if $p_i\equiv -1,-5\ (\bmod\ 24)$ for all $i$, 
\item $K(6L',7)=(L'-1)/2$ if $p_i\equiv -1,-9\ (\bmod\ 40)$ for all $i$, 
\item $K(7L',8)=(L'-1)/2$ if $p_i\equiv -1,-49\ (\bmod\ 120)$ for all $i$, 
\item $K(8L',9)=(L'-1)/2$ if  $p_i\equiv -1,59,-109,-121,131,-169\ (\bmod\ 420)$ for all $i$, 
\item $K(9L',10)=(L'-1)/2$ if  $p_i\equiv -1,-9,31,-81,111,-121\ (\bmod\ 280)$ for all $i$, and
\item $K(10L',11)=(L'-1)/2$ if  $p_i\equiv -1,-5,-25,43,47,67\ (\bmod\ 168)$ for all $i$.
\end{enumerate}
\end{corollary}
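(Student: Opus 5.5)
The plan is to derive each item directly from Theorem~\ref{thm:single_w-1pr_optimal}, applied with the given $w$ and with the primes $p_i$ listed in increasing order. For each item, three hypotheses have to be checked for every prime $p_i$ in the prescribed residue classes:
\begin{enumerate}[(a)]
\item the Legendre conditions \eqref{eq:Q1} and \eqref{eq:Q2};
\item the size condition $p_1\geq w$;
\item the condition \eqref{eq:single_w-1pr_optimal_condition}.
\end{enumerate}
Once these hold, the theorem yields $K((w-1)L',w)=(L'-1)/2$ immediately.

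\textbf{Step (a).} Both conditions in \eqref{eq:Q1}--\eqref{eq:Q2} concern a single prime. So this step is exactly the characterization already carried out in \cite[Corollary 2]{LWXZ25}, which is obtained via Gauss's lemma and quadratic reciprocity. There, for each $w\in\{4,\ldots,11\}$, the primes satisfying both conditions are shown to be precisely those in the listed residue classes. I would quote that result.

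To make the quotation transparent, I would also note the following. Every listed class satisfies $p\equiv 3 \pmod 4$, so $\left(\frac{-1}{p}\right)=-1$. Consequently \eqref{eq:Q2} reduces to $\left(\frac{j}{p}\right)=\left(\frac{w-1-j}{p}\right)$ for $1\le j\le w-2$. This is a finite list of reciprocity conditions on the small primes up to $w-2$, and it determines $p$ modulo the stated modulus. For instance, for $w=4$ the only nontrivial requirement is $\left(\frac{2}{p}\right)=1$, and together with $p\equiv 3\pmod 4$ this gives $p\equiv -1\pmod 8$. For larger $w$ I would rely on the cited computation rather than redo the case work.

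\textbf{Steps (b) and (c).} Here I would check that every prime in the listed classes is in fact at least $2w-1$. Then $t=0$ in Theorem~\ref{thm:single_w-1pr_optimal}, so \eqref{eq:single_w-1pr_optimal_condition} holds vacuously, and $p_1\ge 2w-1\ge w$ gives (b). This amounts to finding the smallest positive representative of each class and comparing it with $2w-1$:
\begin{itemize}
\item $w=4$: $7\ge 7$;
\item $w=5$: $11\ge 9$;
\item $w=6$: $19\ge 11$;
\item $w=7$: $31\ge 13$;
\item $w=8$: $71\ge 15$;
\item $w=9$: $59\ge 17$;
\item $w=10$: $31\ge 19$;
\item $w=11$: $43\ge 21$.
\end{itemize}
Since every prime in a class is at least its smallest positive representative, the claim follows in all eight items.

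\textbf{Main obstacle.} The only substantive content is step (a), namely the correctness of the residue-class characterization for the larger weights ($w=9,10,11$). These require combining the quadratic characters of $2,3,5,7$, which is where errors are most likely. I would defer to \cite[Corollary 2]{LWXZ25} there and spot-check one representative prime per class numerically.
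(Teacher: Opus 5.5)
Your proposal is correct and follows the paper's own route: it feeds the residue-class characterization of primes satisfying \eqref{eq:Q1}--\eqref{eq:Q2} from \cite{LWXZ25} into Theorem~\ref{thm:single_w-1pr_optimal}. Your check that every prime in the listed classes is at least $2w-1$ (smallest representatives $7,11,19,31,71,59,31,43$), which gives $t=0$ so that \eqref{eq:single_w-1pr_optimal_condition} holds vacuously, is exactly the detail the paper leaves implicit, and it is right.
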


Note that we can derive a sufficient condition of primes $p_i$ so that $K((w-1)L',w)=(L'-1)/2$, for any arbitrary $w$.
Please refer to~\cite{LWXZ25} for more details.

\subsection{Codeword Length $L=p_1^{r_1}p_2^{r_2}\cdots p_n^{r_n}$}

This subsection is devoted to generalizing the following result, proposed in~\cite[Theorem 10]{LWXZ25}.

\begin{theorem}[\!\!\cite{LWXZ25}]\label{thm:CAC_pr}
Let $w$ be a positive integer, and let $p$ be a prime such that $p-1$ is divided by $2w-2$.
If there is a code in $\CACe(p,w)$ with $(p-1)/(2w-2)$ codewords, then for any integer $r\geq 1$, 
\begin{align*}
K\left(p^r,w\right)=\frac{p^r-1}{2w-2}.
\end{align*}
\end{theorem}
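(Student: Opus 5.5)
Since $p-1$ is divisible by $2w-2$ we have $p\geq 2w-1$, so Theorem~\ref{thm:UB-single} gives $K(p^r,w)\leq (p^r-1)/(2w-2)$. Hence the whole task is a construction: an equi-difference code in $\CAC(p^r,w)$ with $(p^r-1)/(2w-2)$ codewords, obtained by lifting the given optimal code in $\CACe(p,w)$ through the layer structure $\mathcal{S}_r(\cdot)$ of Section~\ref{sec:CAC}.

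Let the given code have generators $g_1,\ldots,g_m\in\Z_p^*$, where $m=(p-1)/(2w-2)$. The sets $D_k=\{\pm jg_k:1\le j\le w-1\}$ are pairwise disjoint. Each has size $2w-2$, because $p\ge 2w-1$ makes all $\pm j$ distinct mod $p$. A count then shows they partition $\Z_p^*$.

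Let $G=\{g_1,\ldots,g_m\}$ and take as generators the elements of $\mathcal{S}_r(G)$. By \eqref{eq:S_r(A)-size} there are $m(p^r-1)/(p-1)=(p^r-1)/(2w-2)$ of them, which is the required count. For $a\in\mathcal{S}_r(G)$ let $S_a=\{0,a,\ldots,(w-1)a\}\subseteq\Z_{p^r}$. Its nonzero differences are $\pm ja$ with $1\le j\le w-1$.

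It remains to show that $d^*(S_a)\cap d^*(S_b)=\emptyset$ for $a\neq b$. Since $j<p$, each $\pm j$ lies in $\LL_0$. By Proposition~\ref{prop:p-ary}, if $a\in\LL_t$ then $\pm ja\in\LL_t$ and $(\pm ja)_t=\pm j\,a_t \pmod p$. Suppose $ja=\epsilon kb$ with $\epsilon=\pm1$. Then $a$ and $b$ lie in the same layer $\LL_t$, and reducing the $t$-th digit gives $j a_t=\epsilon k b_t \pmod p$ with $a_t,b_t\in G$. Disjointness of the $D_k$ forces $a_t=b_t$, since otherwise $D_{a_t}$ and $D_{b_t}$ would share an element. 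Then $j=\epsilon k \pmod p$, and because $1\le j,k\le w-1<p/2$ this gives $\epsilon=1$ and $j=k$. So $ja=jb$, and since $j$ is a unit, $a=b$.

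The main obstacle is exactly this digit-reduction step: checking that multiplication by the units $\pm j$ preserves the layer and acts on the leading digit mod $p$. Proposition~\ref{prop:p-ary} handles it, with $-1$ treated as the unit $p^r-1\in\LL_0$, whose $0$-th digit is $p-1\equiv-1$.

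Combining the construction with the upper bound gives equality, $K(p^r,w)=(p^r-1)/(2w-2)$, for every $r\geq 1$.
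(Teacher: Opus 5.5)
Your proof is correct and follows the paper's route: the paper obtains this statement as the case $n=1$ of Theorems~\ref{thm:single_pr_construction} and \ref{thm:single_pr_optimal}. There, too, the generators are lifted through $\mathcal{S}_r(\cdot)$, Proposition~\ref{prop:p-ary} reduces a coincidence $ja=\pm kb$ to the $t$-th digit mod $p$, and Theorem~\ref{thm:UB-single} supplies the matching upper bound. (You leave implicit that each $S_a$ has $w$ distinct elements; this follows from the same layer argument, since $(j-j')a\in\LL_t$ is nonzero for $j\neq j'$.)
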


We first introduce the following construction.

\begin{theorem}\label{thm:single_pr_construction}
Let $w$ be a positive integer and let $p_1<p_2<\cdots<p_n$ be primes that satisfy $p_1\geq 2w-1$.
If there is a code in $\CACe(p_i,w)$ with $m_i$ codewords for each $i$, then for any positive integers $r_1,\ldots,r_n$, there exists a code $\C\in\CACe(L,w)$ with 
\begin{equation}\label{eq:single_pr_construction_size}
|\C| = \sum_{i=1}^{n}\frac{m_i(p_i^{r_i}-1)}{p_i-1}\prod_{j=i+1}^{n}p_j^{r_j},
\end{equation}
where $L=p_1^{r_1}\cdots p_n^{r_n}$.
\end{theorem}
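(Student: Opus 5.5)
The plan mirrors the layered construction in the proof of Theorem~\ref{thm:single_w-1pr_construction}, now using equi-difference codes modulo $p_i$ in place of the quadratic residues. For each $i$, fix a code in $\CACe(p_i,w)$ with $m_i$ codewords and let $G_i\subseteq\Z_{p_i}^{*}$ be its set of generators. The hypothesis says exactly that the sets $d^*(\{0,g,\ldots,(w-1)g\})=\{\pm jg:\,1\le j\le w-1\}$, $g\in G_i$, are pairwise disjoint in $\Z_{p_i}$. Write $J\triangleq\{\pm1,\ldots,\pm(w-1)\}\subseteq\Z_{p_i}^*$; these $2w-2$ residues are distinct because $p_i\ge 2w-1$. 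Then $\{jg:\,j\in J\}$, $g\in G_i$, are pairwise disjoint.

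First we lift $G_i$ to $\Z_{p_i^{r_i}}$ via $\mathcal{S}_{r_i}(G_i)$ from~\eqref{eq:S_r(A)-def}. Next we form
\[
\widehat{G}_i=\{(0,\ldots,0,g,x_{i+1},\ldots,x_n):\,g\in\mathcal{S}_{r_i}(G_i),\ x_t\in\Z_{p_t^{r_t}}\},
\]
exactly as in~\eqref{eq:single_w-1pr_generator_i}, and set $\widehat{G}=\biguplus_i\widehat{G}_i$. By~\eqref{eq:S_r(A)-size}, $|\widehat{G}|$ equals the right-hand side of~\eqref{eq:single_pr_construction_size}. The code is $\C=\{S_a:\,a\in\widehat{G}\}$ with $S_a=\{ja:\,0\le j\le w-1\}$ in $\Z_L\cong\prod_t\Z_{p_t^{r_t}}$.

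Each $S_a$ must have exactly $w$ elements. This holds because the coordinate $g\in\Z_{p_i^{r_i}}^*$ of $a$ satisfies $jg\ne j'g$ for $0\le j<j'\le w-1$: we have $(j-j')\in\LL_0$ since $|j-j'|<p_i$, so $(j-j')g\neq 0$.

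The main step is to show that the sets $d^*(S_a)=\{ja:\,j\in J\}$ are pairwise disjoint. We split into two cases.

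\emph{Different blocks.} Take $a\in\widehat{G}_i$ and $b\in\widehat{G}_{i'}$ with $i<i'$. Every element of $d^*(S_a)$ has $i$-th coordinate $jg\neq0$, since $j$ is a unit modulo $p_i$. Every element of $d^*(S_b)$ has $i$-th coordinate $0$. So these difference sets are disjoint.

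\emph{Same block.} Take $a,b\in\widehat{G}_i$ and suppose $ja=kb$ with $j,k\in J$. Projecting to the $i$-th coordinate gives $jg=kh$ in $\Z_{p_i^{r_i}}$. Because $j,k\in\LL_0$, Proposition~\ref{prop:p-ary} puts $g$ and $h$ in the same layer $\LL_t$, and yields $j\,g_t\equiv k\,h_t\pmod{p_i}$ with $g_t,h_t\in G_i$. The disjointness of the mod-$p_i$ difference sets then forces $g_t=h_t$ and $j\equiv k\pmod{p_i}$. Since distinct elements of $J$ are incongruent modulo $p_i$ (again because $p_i\ge 2w-1$), we get $j=k$ as integers. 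Hence $j(a-b)=0$, and as $j$ is a unit in $\Z_L$ (all $p_t\ge 2w-1>w-1$), we conclude $a=b$.

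I expect the same-block case to be the only real obstacle. The delicate points are checking that the layer and digit bookkeeping of Proposition~\ref{prop:p-ary} applies to both signs in $J$, and that the later coordinates $x_{i+1},\ldots,x_n$ cause no trouble. They do not, since equality of the full tuples is deduced only after $j=k$ has been established.
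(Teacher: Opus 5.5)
Your proposal is correct and follows essentially the same route as the paper: you lift the mod-$p_i$ generators via $\mathcal{S}_{r_i}(\cdot)$, pad with free later coordinates, and in the same-block case use Proposition~\ref{prop:p-ary} to reduce $jg=kh$ to $j\,g_t\equiv k\,h_t \pmod{p_i}$, then invoke disjointness of the base code together with $p_i\ge 2w-1$. The differences are minor: you order the final deduction as $g_t=h_t$, then $j=k$, then $a=b$ (using that $j$ is a unit in $\Z_L$), and you explicitly check $|S_a|=w$ and the cross-block case, both of which the paper leaves implicit.
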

\begin{proof}
Since $p_1,p_2,\ldots,p_n$ are distinct primes, 
\begin{align*}
\Z_L \cong \Z_{p_1^{r_1}}\times\Z_{p_2^{r_2}}\times\cdots\times\Z_{p_n^{r_n}}.
\end{align*}
In what follows, we use $n$-tuples $(x_1,\ldots,x_n)\in\Z_{p_1^{r_1}}\times\cdots\times\Z_{p_n^{r_n}}$ to represent elements in $\Z_L$.

For $i=1,2,\ldots, n$, let $\Gamma_i$ be a set of $m_i$ generators of the given code in $\CACe(p_i, w)$. 
Then, define
\begin{align*}
\widehat{\Gamma_i}\triangleq\{(0,\ldots,0,g,x_{i+1},\ldots,x_n) &\in \Z_{p_1^{r_1}}\times \cdots \times \Z_{p_n^{r_n}}:\, \\ 
& g\in\mathcal{S}_{r_i}(\Gamma_i), x_j\in\Z_{p_j^{r_j} }\text{ for }i<j\leq n\}.
\end{align*}
It is easy to see from \eqref{eq:S_r(A)-size} that for $i=1,2,\ldots,n$,
\begin{align*}
|\widehat{\Gamma}_i| = \frac{m_i(p_i^{r_i}-1)}{p_i-1} \prod_{j=i+1}^{n}p_j^{r_j}.
\end{align*}
Note that the sets $\widehat{\Gamma}_1,\widehat{\Gamma}_2,\ldots,\widehat{\Gamma}_n$ are mutually disjoint.
Let  
\begin{equation}\label{eq:single_pr_generator}
\widehat{\Gamma}=\biguplus_{i=1}^{n}\widehat{\Gamma}_i
\end{equation}
be the disjoint union of these sets.
Obviously,
\begin{align*}
|\widehat{\Gamma}|=\sum_{i=1}^{n}\frac{m_i(p_i^{r_i}-1)}{p_i-1} \prod_{j=i+1}^{n}p_j^{r_j}.
\end{align*}

Consider the set $\widehat{\Gamma}$.
For $a\in\widehat{\Gamma}$, if $a=(0,\ldots,0,g,x_{i+1},\ldots,x_{n})$ for some $g\in\mathcal{S}_{r_i}(\Gamma_i)$, i.e., $a\in\widehat{\Gamma}_i$, then define
\begin{equation}\label{eq:sginel_pr_Sa}
S_a=\{ja\in\Z_{p_1^{r_1}}\times \cdots \times \Z_{p_n^{r_n}} :\, j=0,1,2,\ldots,w-1\}
\end{equation}
be the $w$-subset generated by $a$ whose set of nonzero differences is of the form
\begin{equation}\label{eq:single_pr_d*Sa}
d^*(S_a)=\{ja\in\Z_{p_1^{r_1}}\times \cdots \times \Z_{p_n^{r_n}} :\, j=\pm 1,\pm 2,\ldots,\pm(w-1)\}.
\end{equation} 
In what follows, we shall prove that $\widehat{\Gamma}$ is the set of generators of the desired code $\C$, that is, $d^*(S_a)\cap d^*(S_b)=\emptyset$ for any distinct $a,b\in\widehat{\Gamma}$.
It suffices to consider that $a$ and $b$ are in the same subset $\widehat{\Gamma}_i$, for some $i$, because the assertion is definitely true in the other case.

Suppose to the contrary that $d^*(S_a)\cap d^*(S_b)\neq\emptyset$, where $a=(0,\ldots,0,g,x_{i+1},\ldots,x_n)$ and $b=(0,\ldots,0,h,y_{i+1},\ldots,y_n)$ for some $g,h\in\mathcal{S}_{r_i}(\Gamma_i)$.
Assume $j(0,\ldots,0,g,x_{i+1},\ldots,x_n)=k(0,\ldots,0,h,y_{i+1},\ldots,y_n)$ for some $j,k\in\{\pm 1, \pm 2, \ldots, \pm(w_i-1)\}$.
If $j=k$, then $g=h$ and $x_t=y_t$ for $i+1\leq t\leq n$, namely, $a=b$.
So, we consider $j\neq k$ in what follows.
Let us focus on the $i$-th component, namely, $jg=kh$.
Since $p_i\geq 2w-1$, by Proposition~\ref{prop:p-ary}, one has $jg\in\LL_t$ for every $j\in\{\pm1,\pm2,\ldots,\pm(w-1)\}$ if $g\in\LL_t$.
Here we use $\LL_t$ to denote the $t$-th layer in the $p_i$-ary representation of elements of $\Z_{p_i^{r_i}}$ for notatinoal convenience.
It follows that $jg\neq kh$ for any $j,k\in\{\pm1,\pm2,\ldots,\pm(w-1)\}$ whenever $g$ and $h$ are in distinct layers.
Therefore, we only need to consider the case when both $g$ and $h$ are in the same layer, say $\LL_t$ for some $t$.
Observe that $j,k\in\LL_0$ due to $p_i\geq 2w-1$. 
By Proposition~\ref{prop:p-ary} again, $j\cdot g_t=k\cdot h_t$ (mod $p_i$).
If $g_t\neq h_t$, by the assumption that $g_t,h_t\in\Gamma_i$ be two distinct generators in the given code in $\CACe(p_i,w)$, one has $j\cdot g_t\neq k\cdot h_t$.
If $g_t=h_t$, it further implies that $(j-k)g_t=0$ (mod $p_t$), which is impossible because of $j,k$ being distinct in $\{\pm 1, \pm2,\cdots,\pm (w-1)\}$ and $p_i\ge 2w-1$.
This completes the proof.
\end{proof}

\begin{example}\label{ex:single_7_37}\rm
Let $w=4, n=2, p_1=17$, and $p_2=37$.
One can check that $\Gamma_1=\{1,4\}$ forms a set of generators of a code in $\CACe(17,4)$ and $\Gamma_2=\{1,8,23,26,27,31\}$ forms a set of generators of a code in $\CACe(37,4)$.
By Theorem~\ref{thm:single_pr_construction}, we have an equi-difference CAC of length $L=17^r\cdot 37^s$ and weight $4$ with $\frac{1}{8}(17^r-1)37^s+\frac{1}{6}(37^s-1)$ codewords, for any positive integers $r$ and $s$.
When $r=s=1$, we have $\widehat{\Gamma}_1=\{(g,x)\in\Z_{17}\times\Z_{37}:\,g\in\Gamma_1, x\in\Z_{37}\}$ and $\widehat{\Gamma}_2=\{(0,g)\in\Z_{17}\times\Z_{37}:\,g\in\Gamma_2\}$.
The obtained code in $\CACe(629,4)$ has generators in $\{\theta^{-1}(a):\,a\in\widehat{\Gamma}_1\cup\widehat{\Gamma}_2\}$, where the bijection $\theta:\Z_{629}\to\Z_{17}\times\Z_{37}$ is given in~\eqref{eq:CRT-correspondence}.
The generators obtained from $\{(1,x):\,x\in\Z_{37}\}$ are 1, 18, 35, 52, 69, 86, 103, 120, 137, 154, 171, 188, 205, 222, 239, 256, 273, 290, 307, 324, 341, 358, 375, 392, 409, 426, 443, 460, 477, 494, 511, 528, 545, 562, 579, 596, and 613.
The generators from $\{(4,x):\,x\in\Z_{37}\}$ are given by adding $3$ modulo $629$ to those obtained from  $\{(1,x):\,x\in\Z_{37}\}$.
And, the generators from $\widehat{\Gamma}_2$ are 68, 119, 323, 408, 544, 578.
The resulting CAC contains $80$ codewords.
\end{example}



We now show that the construction described in Theorem~\ref{thm:single_pr_construction} is optimal in some cases.

\begin{theorem}\label{thm:single_pr_optimal}
Let $w$ be a positive integer, and let $p_1<\cdots<p_n$ be primes such that $p_i-1$ is divided by $2w-2$ for each $i$.
If there is a code in $\CACe(p_i,w)$ with $(p_i-1)/(2w-2)$ codewords for each $i$, then for any positive integers $r_1,\ldots,r_n$, 
\begin{equation}\label{eq:single_pr_optimal}
K(p_1^{r_1}\cdots p_n^{r_n},w)= \frac{p_1^{r_1}\cdots p_n^{r_n}-1}{2(w-1)}.
\end{equation}
\end{theorem}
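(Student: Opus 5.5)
The plan is to sandwich $K(L,w)$, with $L=p_1^{r_1}\cdots p_n^{r_n}$, between a lower bound from the construction of Theorem~\ref{thm:single_pr_construction} and an upper bound from Theorem~\ref{thm:UB-single}, and to check that the two coincide.

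\emph{Upper bound.} Since $2w-2$ divides $p_i-1$, each $p_i$ satisfies $p_i\geq 2w-1$; the case $w=1$ is degenerate and is excluded implicitly because the statement divides by $w-1$. Hence every prime factor of $L$ is at least $2w-1$. Theorem~\ref{thm:UB-single} then gives directly
\[
K(L,w)\leq \frac{L-1}{2w-2}.
\]
This bound is an integer, because $L\equiv 1 \pmod{2w-2}$ when every $p_i\equiv 1 \pmod{2w-2}$.

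\emph{Lower bound.} The hypothesis $p_1\geq 2w-1$ of Theorem~\ref{thm:single_pr_construction} holds, so we apply that theorem with $m_i=(p_i-1)/(2w-2)$. Substituting into \eqref{eq:single_pr_construction_size} yields an equi-difference code of size
\[
\sum_{i=1}^{n}\frac{p_i^{r_i}-1}{2w-2}\prod_{j=i+1}^{n}p_j^{r_j}.
\]

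\emph{Telescoping.} Put $P_i=\prod_{j=i}^{n}p_j^{r_j}$ and $P_{n+1}=1$. Each summand equals $(P_i-P_{i+1})/(2w-2)$, so the sum telescopes to $(P_1-1)/(2w-2)=(L-1)/(2w-2)$. This matches the upper bound, which gives \eqref{eq:single_pr_optimal}.

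There is no real obstacle once Theorems~\ref{thm:UB-single} and~\ref{thm:single_pr_construction} are available. The only points needing care are the telescoping identity and checking that $p_i\geq 2w-1$ follows from $2w-2\mid p_i-1$.
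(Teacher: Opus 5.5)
Your proposal is correct and follows the paper's proof. The lower bound comes from Theorem~\ref{thm:single_pr_construction} with $m_i=(p_i-1)/(2w-2)$, and the matching upper bound comes from Theorem~\ref{thm:UB-single}, using that $2w-2\mid p_i-1$ forces $p_i\geq 2w-1$; your explicit telescoping just writes out the arithmetic the paper leaves implicit.
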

\begin{proof}
The assumption that $p_i-1$ is divided by $2w-2$ guarantees $p_i\geq 2w-1$, for each $i$.
By plugging $m_i=(p_i-1)/(2w-2)$ into \eqref{eq:single_pr_construction_size}, there exists a code in $\CACe(p_1^{r_1}\cdots p_n^{r_n},w)$ with $(p_1^{r_1}\cdots p_n^{r_n}-1)/(2w-2)$ codewords.
Since the least prime factor $p_1$ is larger than or equal to $2w-1$, the result follows by Theorem~\ref{thm:UB-single}.
\end{proof}

Note that equi-difference CACs of weight $w$ and length a prime $p$ with $(p-1)/(2w-2)$ codewords are studied in~\cite{MMSJ07,Momihara07,FLS14,ZS22}.

\begin{remark}\label{rm:single_pr_diff_union}\rm
Let $\C$ be an optimal CAC considered in Theorem~\ref{thm:single_pr_optimal}.
Since $p_1 \geq 2w-1$, it follows from~\eqref{eq:single_pr_d*Sa} that $|d^*(S)|=2(w-1)$ for $S\in\C$.
As $|\C|=(L-1)/2$, where $L=p_1^{r_1}\cdots p_n^{r_n}$, we have
\begin{equation*}
    \bigcup_{S\in\C} d^*(S) = \Z^*_L.
\end{equation*}
So, the optimal CACs obtained in Theorem~\ref{thm:single_pr_construction} is \emph{tight}~\cite{Momihara07}.
\end{remark}

%
%


\section{Multichannel CACs with Two Channels}\label{sec:MC-CAC_2}

By the help of constructions given in Theorems~\ref{thm:single_w-1pr_construction} and \ref{thm:single_pr_construction}, we have the following construction for MC-CACs with $M=2$ channels.

\begin{theorem}\label{thm:2channel_w-1pr_construction}
Let $w$ be a positive integer, and let $p_1<p_2<\cdots<p_n$ be primes satisfying $p_1\geq 2w-1$.
For each $1\leq i\leq n$, if there is a code in $\CACe(p_i,w)$ with $m_i$ codewords and the two conditions in~\eqref{eq:Q1} and \eqref{eq:Q2} hold, then for any positive integers $r_1,\ldots,r_n$, there exists a code $\C\in\MCCAC(2,L,w)$ with
\begin{equation}\label{eq:2channel_w-1pr_construction_size}
|\C| = \prod_{i=1}^{n}p_i^{r_i} + \sum_{i=1}^{n} \frac{2m_i(p_i^{r_i}-1)}{p_i-1}\prod_{j=i+1}^{n}p_j^{r_j}
\end{equation}
codewords, where $L=(w-1)p_1^{r_1}p_2^{r_2} \cdots p_n^{r_n}$. 
\end{theorem}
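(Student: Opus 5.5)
We give an explicit construction working in $\Z_L\cong\Z_{w-1}\times\Z_{L'}$ with $L'=p_1^{r_1}\cdots p_n^{r_n}$, as in the proof of Theorem~\ref{thm:single_w-1pr_construction}. The code will be a union of single-channel codewords (those with $e_S=1$) and two-channel codewords (those with $e_S=2$). The single-channel codewords account for the sum in \eqref{eq:2channel_w-1pr_construction_size}. The two-channel codewords account for the term $L'$.

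\emph{Single-channel part.} Since $p_1\geq 2w-1$, Theorem~\ref{thm:single_pr_construction} gives a code in $\CACe(L',w)$ with $\sum_i \frac{m_i(p_i^{r_i}-1)}{p_i-1}\prod_{j>i}p_j^{r_j}$ codewords. We embed it into $\Z_L$ via $x\mapsto (0,x)$ and place a copy on each of the two channels. Within each channel the differences stay pairwise disjoint, and all of them lie in $\{0\}\times\Z^*_{L'}$. Because these codewords occupy only one channel, they contribute nothing to $D(1,2)$ or $D(2,1)$.

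\emph{Two-channel part.} Let $\widehat{Q}$ be as in \eqref{eq:single_w-1pr_generator}, and put $T_a=\{j(1,a):\,j=1,\ldots,w-1\}$. We use two types of codewords.
\begin{itemize}
\item Type A: for $a\in\widehat{Q}\cup\{0\}$, let $S_1=\{(0,0)\}$ and $S_2=T_a$.
\item Type B: for $c\in-\widehat{Q}$, let $S_1=-T_c$ and $S_2=\{(0,0)\}$.
\end{itemize}
Each codeword has weight $1+(w-1)=w$. There are $\frac{L'+1}{2}+\frac{L'-1}{2}=L'$ codewords in total, because $\widehat{Q}$, $-\widehat{Q}$ and $\{0\}$ partition $\Z_{L'}$. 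This partition holds since $-1$ is a nonresidue modulo each $p_i$ by \eqref{eq:Q1}.

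\emph{Verification of the disjoint-difference-array property.} We check the four entries separately.
\begin{itemize}
\item Entry $(2,1)$. For both types the set is $T_a$, where $a$ runs over all of $\Z_{L'}$ exactly once. The element $j(1,a)=(j,ja)$ has first coordinate $j \bmod (w-1)$, which is distinct for $j=1,\ldots,w-1$. Multiplication by $j$ is a bijection on $\Z_{L'}$, because $j<p_1$. Hence these sets partition $\Z_L$ exactly. The $(1,2)$ entry is the negation, so it is also a partition.
\item Entry $(2,2)$, within channel 2. A type A codeword contributes the differences $\pm i(1,a)$ for $i=1,\ldots,w-2$. Their first coordinate is nonzero modulo $w-1$, so they avoid the single-channel differences in $\{0\}\times\Z^*_{L'}$. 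For $a=0$ they are $(\pm i,0)$. These cannot occur for $a\neq 0$, since $ia\neq 0$ in $\Z_{L'}$.
\item Entry $(1,1)$, within channel 1. A type B codeword contributes $\pm i(1,c)=\pm i(1,-c)$ with $-c\in\widehat{Q}$, so this reduces to the same situation as channel 2.
\end{itemize}

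The main obstacle is showing, for distinct $a,b\in\widehat{Q}$, that no equation $i(1,a)=\pm k(1,b)$ with $1\leq i,k\leq w-2$ holds. This is exactly Case~2 of the proof of Theorem~\ref{thm:single_w-1pr_construction}, so we invoke that argument. It uses the Legendre-symbol condition \eqref{eq:Q2} together with Proposition~\ref{prop:p-ary}. Note that Case~1 of that proof (where $j=w-1$) is not needed, since $j=w-1$ never appears inside a channel here. Finally, the $(i,j)$ entries of the single-channel codewords are empty for $i\neq j$, which completes the count in \eqref{eq:2channel_w-1pr_construction_size}.
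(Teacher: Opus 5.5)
Your construction is correct, but one step rests on a false identity. In the $(1,1)$ entry you write $\pm i(1,c)=\pm i(1,-c)$. This fails in general: $\{\pm i(1,c)\}=\{(i,ic),(-i,-ic)\}$, while $\{\pm i(1,-c)\}=\{(i,-ic),(-i,ic)\}$. For $w\ge 4$ and $c\neq 0$, the element $(1,c)$ of $d^*(T_c)$ is not of the form $\pm k(1,-c)$ with $1\le k\le w-2$. So channel~$1$ does not carry the same difference family as channel~$2$.

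The conclusion still holds, with a one-line fix. The map $\phi(z,x)=(z,-x)$ is an automorphism of $\Z_{w-1}\times\Z_{L'}$ with $\phi(\pm i(1,c))=\pm i(1,-c)$. Hence the sets $d^*(-T_c)=\{\pm i(1,c):1\le i\le w-2\}$, $c\in-\widehat{Q}$, are pairwise disjoint exactly when the sets $\{\pm i(1,a)\}$, $a\in\widehat{Q}$, are, and that is Case~$2$. Their first coordinates are still nonzero, so they also avoid $\{0\}\times\Z^*_{L'}$. Alternatively, rerun Case~$2$ directly: all layer digits of elements of $-\widehat{Q}$ lie in $-Q_i$, so the two Legendre symbols still cancel. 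Everything else checks out: the tiling of entry $(2,1)$ by $\{T_a:a\in\Z_{L'}\}$, the count, and the channel-$2$ verification.

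With this repair, your code is a genuine variant of the paper's. Your type~A codeword is $S_a^1+(1,a)$, and the paper's $U$ is your $a=0$ codeword with the channels swapped. The difference is in the second family. The paper's $S_a^2$ is indexed by $a\in\widehat{Q}$ itself and contributes $-T_a$, not $T_{-a}$, to entry $(2,1)$. So the cross entries of $S_a^1$ and $S_a^2$ together form $d^*(S_a)=\{\pm j(1,a):1\le j\le w-1\}$, and both channels carry the same family $\{\pm i(1,a)\}$, $a\in\widehat{Q}$. The paper then imports all disjointness at once from Theorem~\ref{thm:single_w-1pr_construction}. That import \emph{includes} Case~$1$, which is where \eqref{eq:Q1} enters, and $U$ fills the remaining $\Z_{w-1}\times\{0\}$ of the cross entries.

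Your arrangement instead makes the cross entries an exact tiling of $\Z_L$ by an elementary unit argument, so Case~$1$ is never needed. There \eqref{eq:Q1} is used only in the equivalent form $\widehat{Q}\cap(-\widehat{Q})=\emptyset$, via Proposition~\ref{prop:p-ary} (negation preserves the layer and negates the digit). The cost is that channel~$1$ carries the $\phi$-image of the channel-$2$ family. That is exactly where the automorphism argument above is required.
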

\begin{proof}
Let $L'=p_1^{r_1}p_2^{r_2} \cdots p_n^{r_n}$, then $L=(w-1)L'$.
In the following construction, the desired code $\C$ will consist of three classes of codewords $\C_1$, $\C_2$ and $\C_3$ with sizes
\begin{align*}
|\C_1|=L'-1, \ |\C_2|=\sum_{i=1}^{n} \frac{2m_i(p_i^{r_i}-1)}{p_i-1}\prod_{j=i+1}^{n}p_j^{r_j}, \ \text{and } |\C_3|=1,
\end{align*}
where the codewords in $\C_1$ and $\C_2$ are from the constructions in Theorem~\ref{thm:single_w-1pr_construction} and Theorem~\ref{thm:single_pr_construction}, respectively.

Since $\gcd(w-1,L')=1$, we have $\Z_L\cong\Z_{w-1}\times\Z_{L'}$.
Elements in $\mathcal{I}_2\times\Z_L$ can be represented by $(m,(z,x))$, where $m\in\mathcal{I}_2=\{1,2\}$ and $z\in\Z_{w-1}$, $x\in\Z_{L'}$.

Since the two conditions \eqref{eq:Q1} -- \eqref{eq:Q2} hold, Theorem~\ref{thm:single_w-1pr_construction} ensures the existence of a code in $\CAC(L,w)$ with $(L'-1)/2$ codewords.
We adopt the notation in the proof of Theorem~\ref{thm:single_w-1pr_construction}.
Consider the set $\widehat{Q}$ defined in~\eqref{eq:single_w-1pr_generator}.
Note that $\widehat{Q}\subset\Z_{L'}$ with size $|\widehat{Q}|=(L'-1)/2$.
For $a\in\widehat{Q}$ define the following two codewords in $\mathcal{I}_2\times\Z_L$
\begin{align*}
S_a^1 &= \{(1, -(1,a))\} \cup \{(2,j(1,a)):\,j=0,1,2,\ldots,w-2\}, \text{ and } \\
S_a^2 &= \{(1,j(1,a)):\,j=0,1,2,\ldots,w-2\} \cup \{(2, -(1,a))\}.
\end{align*}
Let $\C_1=\{S_a^1,S_a^2:\,a\in\widehat{Q}\}$ be the collection of these codewords derived from all elements in $\widehat{Q}$.
Let us focus on the arrays of differences $D_{S_a^1}$ and $D_{S_a^2}$.
It is easy to see that
\begin{align}
D_{S_a^1}(1,1) = D_{S_a^2}(2,2) &= \emptyset, \label{eq:2channel_w-1pr_C1_DS-1} \\
D_{S_a^1}(2,2) = D_{S_a^2}(1,1) &= \{\pm j(1,a)\in\Z_{w-1}\times\Z_{L'}:\,j=1,2,\ldots,w-2\}, \label{eq:2channel_w-1pr_C1_DS-2} \\
D_{S_a^1}(1,2) = D_{S_a^2}(2,1) &= \{-j(1,a)\in\Z_{w-1}\times\Z_{L'}:\,j=1,2,\ldots,w-1\}, \label{eq:2channel_w-1pr_C1_DS-3} \\
D_{S_a^1}(2,1) = D_{S_a^2}(1,2) &= \{j(1,a)\in\Z_{w-1}\times\Z_{L'}:\,j=1,2,\ldots,w-1\}. \label{eq:2channel_w-1pr_C1_DS-4}
\end{align}
Recall that in the proof of Theorem~\ref{thm:single_w-1pr_construction}, it was shown that $d^*(S_a)\cap d^*(S_b)=\emptyset$ for any two distinct $a,b\in\widehat{Q}$, where $d^*(S_a)$ is given in~\eqref{eq:single_w-1pr_d*Sa} as
\begin{align*}
d^*(S_a)=\{\pm j(1,a)\in\Z_{w-1}\times\Z_{L'}:\,j=1,2,\ldots,w-1\}.
\end{align*}
Therefore, it follows from \eqref{eq:2channel_w-1pr_C1_DS-1} -- \eqref{eq:2channel_w-1pr_C1_DS-4} that $D_{S_a^1}, D_{S_a^2}$ are mutually disjoint over all $a\in\widehat{Q}$.
Moreover, by the property given in Remark~\ref{rm:single_w-1pr_diff_union}, we further derive that
\begin{align}
\bigcup_{S\in\C_1} D_S = 
   \begin{tabular}{|c|c|} \hline
$\Z^*_{w-1}\times\Z^*_{L'}$ & $\Z_{w-1}\times\Z^*_{L'}$ \\ \hline
$\Z_{w-1}\times\Z^*_{L'}$ & $\Z^*_{w-1}\times\Z^*_{L'}$ \\ \hline
\end{tabular} .
\label{eq:2channel_w-1pr_C1_DS}
\end{align}

Since there is a code in $\CACe(p_i,w)$ with $m_i$ codewords for each $i$, it follows from Theorem~\ref{thm:single_pr_construction} that there exists a code in $\CAC(L',w)$ with $\sum_{i=1}^{n} \left(\frac{m_i(p_i^{r_i}-1)}{p_i-1}\prod_{j=i+1}^{n}p_j^{r_j}\right)$ codewords, where the generators are collected in the set $\widehat{\Gamma}$ given in~\eqref{eq:single_pr_generator}.
We adopt the notation in the proof of Theorem~\ref{thm:single_pr_construction}.
For $a\in\widehat{\Gamma}$ define the following two codewords in $\mathcal{I}_2\times\Z_{L'}$
\begin{align*}
T_a^1 &= \{(1,(0,ja)):\,j=0,1,2,\ldots,w-1\}, \text{ and } \\
T_a^2 &= \{(2,(0,ja)):\,j=0,1,2,\ldots,w-1\}.
\end{align*}
Let $\C_2=\{T_a^1,T_a^2:\,a\in\widehat{\Gamma}\}$ be the collection of these codewords.
Consider the arrays of differences $D_{T_a^1}$ and $D_{T_a^2}$.
It is obvious that
\begin{align}
& D_{T_a^1}(1,2)=D_{T_a^1}(2,1)=D_{T_a^1}(2,2)=D_{T_a^2}(1,2)=D_{T_a^2}(2,1)=D_{T_a^2}(1,1)=\emptyset, \text{ and} \label{eq:2channel_w-1pr_C2_DS-1} \\
& D_{T_a^1}(1,1) = D_{T_a^2}(2,2) = \{(0,\pm ja)\in\Z_{w-1}\times\Z_{L'}:\,j=1,2,\ldots,w-1\}. \label{eq:2channel_w-1pr_C2_DS-2}
\end{align}
In the proof of Theorem~\ref{thm:single_pr_construction}, it was shown that $\{\pm ja\in\Z_{L'}:\,j=1,2,\ldots,w-1\}$, $a\in\widehat{\Gamma}$, are mutually disjoint.
Then, it follows from \eqref{eq:2channel_w-1pr_C2_DS-1} -- \eqref{eq:2channel_w-1pr_C2_DS-2} that $D_{T_a^1}, D_{T_a^2}$ are mutually disjoint over all $a\in\widehat{\Gamma}$.
Moreover, we have
\begin{align}
\bigcup_{S\in\C_2} D_S \subseteq 
   \begin{tabular}{|c|c|} \hline
$\{0\}\times\Z^*_{L'}$ & $\emptyset$ \\ \hline
$\emptyset$ & $\{0\}\times\Z^*_{L'}$ \\ \hline
\end{tabular} .
\label{eq:2channel_w-1pr_C2_DS}
\end{align}

Finally, the third class $\C_3$ only contains one codeword given by
\begin{align*}
U = \{(1,(j,0)):\,j=0,1,\ldots,w-2\}\cup\{(2,(0,0))\}.
\end{align*}
Observe that 
\begin{align}
D_U = 
\begin{tabular}{|c|c|} \hline
$\Z^*_{w-1}\times\{0\}$ & $\Z_{w-1}\times\{0\}$ \\ \hline
$\Z_{w-1}\times\{0\}$ & $\emptyset$ \\ \hline
\end{tabular} .
\label{eq:2channel_w-1pr_C3_DS}
\end{align}
By \eqref{eq:2channel_w-1pr_C1_DS}, \eqref{eq:2channel_w-1pr_C2_DS}, \eqref{eq:2channel_w-1pr_C3_DS} and aforementioned arguments, $\C_1\cup\C_2\cup\C_3$ forms the desired MC-CAC.
\end{proof}

\begin{example}\label{ex:2channel_3_7_23} \rm
Let $w=4, n=2,p_1=7$, and $p_2=23$.
We have $L=483$.
It has been shown in Example~\ref{ex:single_3_7_23} that $7$ and $23$ satisfy the two conditions in~\eqref{eq:Q1} and \eqref{eq:Q2}.
It is easy to see that $\Gamma_1=\{1\}$ generates a code in $\CACe(7,4)$ and $\Gamma_2=\{1,5\}$ generates a code in $\CACe(23,4)$.
By Theorem~\ref{thm:2channel_w-1pr_construction}, we have a code in $\MCCAC(2,3\cdot 7^r\cdot 23^s, 4)$ with $7^r\cdot 23^s+\frac{1}{3}(7^r-1)23^s+\frac{2}{11}(23^s-1)$ codewords, for any positive integers $r$ and $s$.

Consider $r=s=1$. 
The first $7\cdot 23-1$ codewords (collected in $\C_1$) are those in the form 
\begin{align*}
&\{(1,-(1,a)),(2,(0,0)),(2,(1,a)), (2,2(1,a))\} \quad \text{and} \\
&\{(1,(0,0)),(1,(1,a)),(1,2(1,a)), (2,-(1,a))\}
\end{align*}
for all elements $a$ in $\widehat{Q}$, which is given in Example~\ref{ex:single_3_7_23}, and the ordered pairs $(1,a), -(1,a)$ are considered in $\Z_3\times(\Z_7\times\Z_{23})$.
Take $a=(1,21)\in\widehat{Q}_1$ as an example.
As $a=113\in\Z_{161}$, $(1,a)=274\in\Z_{483}$ and $-(1,a)=109\in\Z_{483}$ by the CRT correspondence, the two corresponding codewords are $\{(1, 109), (2,0), (2,274), (2,65)\}$ and $\{(1,0), (1,274), (2,65), (2, 109)\}$.

The $50$ codewords in $\C_2$ are those in the form 
\begin{align*}
    \{(m,(0,0)),(m,(0,a)),(m,(0,2a)),(m,(0,3a))\}
\end{align*}
for $m=1,2$ and elements $a\in\widehat{\Gamma}=\{(1,x)\in\Z_7\times\Z_{23}:\,x\in\Z_{23}\}\cup\{(0,1),(0,4)\}$.
Take $a=(1,21)$ as an example.
As $(0,a)=435\in\Z_{483}$ by the CRT correspondence, the two corresponding codewords are $\{(1,0),(1,435),(1,387),(1,339)\}$ and $\{(2,0),(2,435),(2,387),(2,339)\}$.
Finally, the last codeword (in $\C_3$) is $\{(1,0),(1,322),(1,161),(2,0)\}$.
\end{example}

We have the following upper bound on the value $K(2,L,w)$ when the length $L$ is a product of larger prime factors.

\begin{theorem}\label{thm:2channel_w-1pr_upper}
Let $L'$ and $w$ be positive integers.
If the prime factors of $L'$ are all larger than or equal to $2w-1$, then
\begin{align*}
    K(2,(w-1)L',w) \leq L' + \left\lfloor\frac{L'+w-3}{w-1}\right\rfloor.
\end{align*}
\end{theorem}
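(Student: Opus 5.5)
The plan is to adapt the counting in the proof of Theorem~\ref{thm:UB-w>M} (with $M=2$) and to absorb the exceptional codewords, which can now occur because $w-1$ may have small prime factors, by a stabilizer argument in the spirit of the proof of Theorem~\ref{thm:single_w-1pr_optimal}. Put $L=(w-1)L'$. Since $\gcd(w-1,L')=1$ (all prime factors of $L'$ are at least $2w-1>w-1$), we have $\Z_L\cong\Z_{w-1}\times\Z_{L'}$. Let $G$ be the unique subgroup of order $w-1$ (Proposition~\ref{prop:unique_subgroup}). Split a code $\C\in\MCCAC(2,L,w)$ into $\C_1=\{S:e_S=1\}$ and $\C_2=\{S:e_S=2\}$.

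\emph{Step 1: locating the stabilizers.} Take any exceptional set $S_i$ or exceptional pair $\{S_1,S_2\}$ arising from a codeword, and let $\HH$ be the stabilizer of the relevant difference set. By Lemmas~\ref{lem:exceptional_old} and~\ref{lem:exceptional_new}, $2\le|\HH|\le 2w-2$. Since $|\HH|$ divides $L$ and every prime factor of $L'$ exceeds $2w-2$, we get $|\HH|\mid w-1$, so $\HH\le G$. Three consequences follow.
\begin{enumerate}[(a)]
\item An exceptional pair has $|S_1|+|S_2|-1=w-1$, which $|\HH|$ divides. This contradicts Lemma~\ref{lem:exceptional_new}(ii), so there are no exceptional pairs.
\item For $S\in\C_1$ the unique nonempty part has size $w$, and $|\HH|\mid w-1=|S_i|-1$ contradicts Lemma~\ref{lem:exceptional_old}(iii), so codewords in $\C_1$ are not exceptional.
\item Hence only the diagonal parts $S_i$ of codewords in $\C_2$ can be exceptional.
\end{enumerate}

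\emph{Step 2: bounding the deficit.} For an exceptional $S_i$ with $\HH=\HH(d(S_i))$, apply \eqref{eq:Kneser2} to $S_i+(-S_i)$ to get $|d^*(S_i)|\ge 2|S_i|-1-|\HH|$, a deficit of at most $|\HH|-1$ relative to the non-exceptional bound $2|S_i|-2$. Fix a channel $i$. By Proposition~\ref{prop:stabilizer}(ii) the stabilizer of each exceptional $S_i$ lies in $d(S_i)$, and the sets $D_S(i,i)$ are disjoint over $S\in\C$. So, exactly as in the proof of Theorem~\ref{thm:single_w-1pr_optimal}, these stabilizers pairwise meet only in $\{0\}$, and by the unique-subgroup argument their orders are pairwise coprime divisors of $w-1$. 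The product of these orders therefore divides $w-1$, and since $\sum(d_k-1)\le\prod d_k-1$ for integers $d_k\ge2$, the total deficit in channel $i$ is at most $w-2$.

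\emph{Step 3: counting.} Summing the disjointness over the two diagonal entries and using Step 2 gives
\[2(L-1)\ge 2(w-1)|\C_1|+2(w-2)|\C_2|-2(w-2).\]
Since there are no exceptional pairs, each $S\in\C_2$ has $|S_1-S_2|=|S_2-S_1|\ge w-1$. Counting the entry $(1,2)$ alone therefore gives $L\ge (w-1)|\C_2|$, i.e.\ $|\C_2|\le L'$. Dividing the first inequality by $2$ and adding $|\C_2|\le L'$ yields
\[(w-1)|\C|\le (w-1)L'-1+(w-2)+L'.\]
This gives $|\C|\le L'+\frac{L'+w-3}{w-1}$, and integrality supplies the floor.

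I expect Step 2 to be the delicate point. One must check carefully that the stabilizers from different codewords in the same channel really intersect trivially, and that the deficit estimate $|\HH|-1$ via Kneser is sharp enough. If the coprimality argument has a gap, I would fall back on bounding the deficits using the disjointness of the $\HH\setminus\{0\}\subseteq G\setminus\{0\}$ inside $D_S(i,i)$. That alone gives $\sum(|\HH|-1)\le |G|-1=w-2$ per channel, which is exactly what the count needs.
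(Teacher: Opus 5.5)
Your proposal is correct and follows essentially the same route as the paper: the same split by $e_S$, the same use of Lemma~\ref{lem:exceptional_old}(iii) and Lemma~\ref{lem:exceptional_new}(ii) to rule out exceptional codewords in $\C_1$ and exceptional pairs, the same Kneser deficit of $|\HH|-1$ per exceptional $S_i$, and the same final combination with $|\C_2|\le L'$. The only difference is the per-channel deficit bound: the paper uses exactly your fallback (the sets $\HH\setminus\{0\}$ are pairwise disjoint inside $G\setminus\{0\}$, so their sizes sum to at most $w-2$), while your pairwise-coprime-orders-and-product argument is also valid but not needed.
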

\begin{proof}
Let $L=(w-1)L'$, and let $\C$ be any code in $\MCCAC(2,L,w)$. 
We first partition codewords in $\C$ into two classes, $\C_1$ and $\C_2$, by
\begin{align*}
\C_1 &= \{S\in \C:\, e_S=1\},\\
\C_2 &= \{S\in \C:\, e_S=2\}.
\end{align*}

Consider the codewords in $\C_1$.
Suppose $S\in\C_1$ is exceptional.
By Lemma~\ref{lem:exceptional_old}(i), we have $|\HH(S-S)|\leq 2w-2$, which is strictly less than any factor of $L'$.
Note that $|\HH(S-S)|$ divides $L=(w-1)L'$ as $\HH(S-S)$ is a subgroup of $\Z_L$ due to Proposition~\ref{prop:stabilizer}(i).
It must be the case that $|\HH(S-S)|$ divides $w-1$.
This is a contradiction to Lemma~\ref{lem:exceptional_old}(iii).
Therefore, there is no exceptional codeword in $\C_1$.
It follows that
\begin{equation}\label{eq:2channel_w-1pr_optimal_1}
\sum_{S\in\C_1}|D_S(1,1)|+|D_S(2,2)| \geq (2w-2)|\C_1|.
\end{equation}

Now, consider the codewords in $\C_2$.
If $\{S_1,S_2\}$ is an exceptional pair, by Lemma~\ref{lem:exceptional_new}(i), we have $|\HH(S_1-S_2)|\leq |S_1|+|S_2|-2=w-2$, which is strictly less than any factor of $L'$.
As $\HH(S_1-S_2)$ is a subgroup of $\Z_L$, it implies that $|\HH(S_1-S_2)|$ divides $w-1$, which is a contradiction to Lemma~\ref{lem:exceptional_new}(ii).
So, $\{S_1,S_2\}$ is not an exceptional pair.
Hence, for $S\in\C_2$, one has $|D_S(1,2)|=|S_1-S_2|\geq |S_1|+|S_2|-1=w-1$.
It follows that
\begin{equation}\label{eq:2channel_w-1pr_optimal_2}
(w-1)L' \geq \sum_{S\in\C}|D_S(1,2)| = \sum_{S\in\C_2}|D_S(1,2)| \geq (w-1)|\C_2|.
\end{equation}

Following the arguments above, if $S$ is an exceptional codeword in $\C_2$, it must be the case that either $S_1$, $S_2$, or both are exceptional.
To make a more accurate estimation of the number of differences that are derived from all exceptional codewords, we define, for $i=1,2$,  
\begin{align*}
\C_2^{(i)}\triangleq\{S_i:\, S\in\C_2\}
\end{align*}
and
\begin{align*}
\mathcal{E}^{(i)}\triangleq\{S_i\in\C_2^{(i)}:\,S_i\text{ is exceptional}\}.
\end{align*}
Note that $|\C_2^{(1)}|=|\C_2^{(2)}|=|\C_2|$ and
\begin{equation}\label{eq:2channel_w-1pr_optimal_decomposeC2}
\sum_{A\in\C_2^{(1)}}|A|+\sum_{B\in\C_2^{(1)}}|B|=w|\C_2|.
\end{equation}

Consider an element $A\in\mathcal{E}^{(1)}$.
For notational convenience, denote by $H_A=\HH(A-A)$.
Since, by Proposition~\ref{prop:stabilizer}(i), $H_A$ contains the element $0$, so we can further denote by $H_A^*=H_A\setminus\{0\}$.
By the same argument given above, $|H_A|$ must divide $w-1$.
As $H_A$ is a subgroup of $\Z_L$, we have $H_A=-H_A$, which implies that $|-A+H_A|=|-(A+H_A)|=|A+H_A|$.
By plugging $A=A$ and $B=-A$ into \eqref{eq:Kneser1}, we have
\begin{align*}
|d(A)|=|A+(-A)| &\geq |A+H_A|+|-A+H_A|-|H_A| \\
&= 2|A+H_A|-|H_A| \geq 2|A|-|H_A^*|-1,
\end{align*}
which leads to
\begin{equation}\label{eq:2channel_w-1pr_optimal_3}
|d^*(A)| \geq 2|A|-|H^*_A|-2.
\end{equation}
Since $H_A$ is a subgroup of $\Z_L$ and $|H_A|$ divides $w-1$, by Proposition~\ref{prop:unique_subgroup}, $H_A$ is a subgroup of $G=\{iL':\, i=0,1,\ldots,w-2\}$.
Moreover, observe that $H_A\subseteq d(A)$ due to Proposition~\ref{prop:stabilizer}(ii).
Then, by the definition of an MC-CAC, $H^*_A\cap H^*_{A'}=\emptyset$ for any pair of distinct $A,A'\in\mathcal{E}^{(1)}$.
This concludes that
\begin{align*}
\sum_{A\in\mathcal{E}^{(1)}}|H^*_A| = \left|\biguplus_{A\in\mathcal{E}^{(1)}}H^*_A\right| \leq |G\setminus\{0\}| = w-2.
\end{align*}
By~\eqref{eq:2channel_w-1pr_optimal_3}, we further have
\begin{align*}
\sum_{A\in\mathcal{E}^{(1)}}|d^*(A)| \geq \left(\sum_{A\in\mathcal{E}^{(1)}}2|A|-2\right) - (w-2).
\end{align*}
Since $|d^*(A)|\geq 2|A|-2$ if $A$ is not exceptional, it follows that
\begin{equation}\label{eq:2channel_w-1pr_optimal_4}
\sum_{A\in\C_2^{(1)}}|d^*(A)| \geq \left(\sum_{A\in\C_2^{(1)}}2|A|-2\right) - (w-2).
\end{equation}
Similarly, we have
\begin{equation}\label{eq:2channel_w-1pr_optimal_5}
\sum_{B\in\C_2^{(2)}}|d^*(B)| \geq \left(\sum_{B\in\C_2^{(2)}}2|B|-2\right) - (w-2).
\end{equation}

By \eqref{eq:2channel_w-1pr_optimal_1}, \eqref{eq:2channel_w-1pr_optimal_decomposeC2}, \eqref{eq:2channel_w-1pr_optimal_4} and \eqref{eq:2channel_w-1pr_optimal_5}, 
\begin{align*}
2(w-1)L'-2 = 2|\Z^*_L| &\geq \sum_{S\in\C}|D_S(1,1)|+\sum_{S\in\C}|D_S(2,2)| \\
&= \sum_{S\in\C_1}\left(|D_S(1,1)|+|D_S(2,2)|\right) + \sum_{S\in\C_2}\left(|D_S(1,1)|+|D_S(2,2)|\right) \\
&\geq (2w-2)|\C_1| + \sum_{A\in\C_2^{(1)}}|d^*(A)| + \sum_{B\in\C_2^{(1)}}|d^*(B)| \\
&\geq (2w-2)|\C_1| + (2w-4)|\C_2| - 2(w-2),
\end{align*}
which yields
\begin{equation}\label{eq:2channel_w-1pr_optimal_6}
(w-1)L'-1 \geq (w-1)|\C_1| + (w-2)|\C_2| - (w-2).
\end{equation}
Combining it with~\eqref{eq:2channel_w-1pr_optimal_2}, we obtain
\begin{align*}
wL'-1 \geq (w-1)|\C|-(w-2),
\end{align*}
and hence the result follows.
\end{proof}

We now show that the construction described in Theorem~\ref{thm:2channel_w-1pr_construction} is optimal in some cases.

\begin{theorem}\label{thm:2channel_w-1pr_optimal}
Let $w$ be a positive integer, and let $p_1<\cdots<p_n$ be primes such that $p_i-1$ is divisible by $2w-2$ for each $i$.
For each $1\leq i\leq n$, if there is a code in $\CACe(p_i,w)$ with $(p_i-1)/(2w-2)$ codewords and the two conditions in~\eqref{eq:Q1} and \eqref{eq:Q2} hold, then for any positive integers $r_1,\ldots,r_n$, 
\begin{align*}
K(2,L,w) = \prod_{i=1}^{n}p_i^{r_i} + \frac{\left(\prod_{i=1}^{n}p_i^{r_i}\right)-1}{w-1},
\end{align*}
where $L=(w-1)p_1^{r_1}\cdots p_n^{r_n}$.
\end{theorem}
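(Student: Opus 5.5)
The plan is to match the lower bound from Theorem~\ref{thm:2channel_w-1pr_construction} with the upper bound from Theorem~\ref{thm:2channel_w-1pr_upper}. Throughout, write $L'=\prod_{i=1}^{n}p_i^{r_i}$.

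For the lower bound, the hypothesis that $2w-2$ divides $p_i-1$ gives $p_i\geq 2w-1$ for every $i$, so in particular $p_1\geq 2w-1$. The remaining hypotheses of Theorem~\ref{thm:2channel_w-1pr_construction} are assumed. Plugging $m_i=(p_i-1)/(2w-2)$ into \eqref{eq:2channel_w-1pr_construction_size} yields a code whose second term is
\begin{align*}
\sum_{i=1}^{n}\frac{p_i^{r_i}-1}{w-1}\prod_{j=i+1}^{n}p_j^{r_j}.
\end{align*}
This sum telescopes:
\begin{align*}
\sum_{i=1}^{n}\left(\prod_{j=i}^{n}p_j^{r_j}-\prod_{j=i+1}^{n}p_j^{r_j}\right)=L'-1.
\end{align*}
The same computation appears in the proof of Theorem~\ref{thm:single_pr_optimal}. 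Hence the construction has exactly $L'+(L'-1)/(w-1)$ codewords, so $K(2,L,w)\geq L'+(L'-1)/(w-1)$.

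For the upper bound, all prime factors of $L'$ are at least $2w-1$, so Theorem~\ref{thm:2channel_w-1pr_upper} applies and gives
\begin{align*}
K(2,L,w)\leq L'+\left\lfloor\frac{L'+w-3}{w-1}\right\rfloor.
\end{align*}
It remains to evaluate the floor. Since each $p_i\equiv 1 \pmod{w-1}$, we have $L'\equiv 1\pmod{w-1}$, so $(L'-1)/(w-1)$ is an integer. Then
\begin{align*}
\frac{L'+w-3}{w-1}=\frac{L'-1}{w-1}+\frac{w-2}{w-1},
\end{align*}
and $0\leq (w-2)/(w-1)<1$, so the floor equals $(L'-1)/(w-1)$. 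The two bounds therefore coincide.

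Both directions rest directly on earlier theorems, so there is no real obstacle. The only points needing care are the congruence $L'\equiv 1 \pmod{w-1}$ used to evaluate the floor, and the degenerate case $w=2$. For $w=2$ we have $w-1=1$, so the floor is trivially $L'-1$ and the formula still holds; the weight-$2$ construction conditions are vacuous or assumed.
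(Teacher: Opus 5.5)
Your proposal is correct and follows the paper's proof essentially verbatim. The lower bound comes from Theorem~\ref{thm:2channel_w-1pr_construction} with $m_i=(p_i-1)/(2w-2)$, the upper bound from Theorem~\ref{thm:2channel_w-1pr_upper}, and the floor reduces to $(L'-1)/(w-1)$; you are slightly more explicit than the paper in spelling out the telescoping sum and the congruence $L'\equiv 1 \pmod{w-1}$ that makes $(L'-1)/(w-1)$ an integer.
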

\begin{proof}
The assumption that $p_1-1$ is divisible by $2w-2$ guaranties $p_1\ge 2w-1\geq w$. 
By plugging $m_i=(p_i-1)/(2w-2)$ into Theorem \ref{thm:2channel_w-1pr_construction}, there exists a code in $\MCCAC(2,L,w)$ with $\prod_{i=1}^{n}p_i^{r_i} + \left(\left(\prod_{i=1}^{n}p_i^{r_i}\right)-1\right)/(w-1)$ codewords.

Let $\C$ be any code in $\MCCAC(2,L,w)$.
By plugging $L'=\prod_{i=1}^{n}p_i^{r_i}$ into Theorem~\ref{thm:2channel_w-1pr_upper},
\begin{align*}
|\C| &\leq \prod_{i=1}^{n}p_i^{r_i}+\frac{\left(\prod_{i=1}^{n}p_i^{r_i}\right)-1}{w-1} +\left\lfloor\frac{w-2}{w-1} \right\rfloor \\
&= \prod_{i=1}^{n}p_i^{r_i}+\frac{\left(\prod_{i=1}^{n}p_i^{r_i}\right)-1}{w-1}.
\end{align*}
This completes the proof.
\end{proof}

For $w=4$, the primes that satisfy the conditions given in Theorem~\ref{thm:2channel_w-1pr_optimal} are 7, 607, 631, 751, 919, 1087, 1447, 2239, 2287, 2311, 2647, 2719, and so on.

\section{Multichannel CACs with $M>2$ Channels}\label{sec:MC-CAC_M}

For $M>2$ channels, we only focus on the scenario that weight $w$ is divisible by $M$.
We first propose the following consruction.

\begin{theorem}\label{thm:Mchannel_construction}
Let $M,L,L'$, and $w$ be positive integers such that $w>M\geq 3$, $M$ divides $w$, $L=(2\frac{w}{M}-1)L'$, and all prime factors of $L'$ are larger than or equal to $2w-1$. 
If there is a code in $\CAC(L',w)$ with $h$ codewords, then there exists a code $\C\in \MCCAC(M, L, w)$ with $|\C|=Mh+L'$ codewords.
\end{theorem}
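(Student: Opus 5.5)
The plan is to mimic the proof of Theorem~\ref{thm:2channel_w-1pr_construction}. Write $k=w/M$ and $m=2k-1$. Since every prime factor of $L'$ is at least $2w-1>m$, we have $\gcd(m,L')=1$ and hence $\Z_L\cong\Z_m\times\Z_{L'}$; elements of $\mathcal{I}_M\times\Z_L$ will be written $(i,(z,x))$. The code $\C$ will consist of two classes:
\begin{itemize}
\item[(a)] $Mh$ codewords with $e_S=1$, obtained by copying the given CAC onto every channel inside the subgroup $\{0\}\times\Z_{L'}$;
\item[(b)] $L'$ codewords with $e_S=M$, indexed by $x\in\Z_{L'}$, each putting exactly $k$ packets on every channel.
\end{itemize}
The point is that class (b) only produces diagonal differences with nonzero $\Z_m$-coordinate, whereas class (a) only produces diagonal differences in $\{0\}\times\Z^*_{L'}$.

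\emph{Class (a).} For each codeword $A$ of the given code in $\CAC(L',w)$ and each channel $i\in\mathcal{I}_M$, let $T_A^i=\{(i,(0,y)):\,y\in A\}$. Then $D_{T_A^i}(i,i)=\{0\}\times d^*(A)$, and all other entries of $D_{T_A^i}$ are empty. By the disjoint-difference-set property of the given code, these $Mh$ arrays are mutually disjoint.

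\emph{Class (b).} Choose shifts $s_i=(i-1)(2k-1)$ for $i\in\mathcal{I}_M$. For $x\in\Z_{L'}$ define
\[
U_x=\{(i,(j,(j+s_i)x)):\, i\in\mathcal{I}_M,\ j=0,1,\ldots,k-1\}.
\]
Each $U_x$, including $U_0$, has exactly $w=Mk$ elements, since the first coordinates $j$ on each channel are distinct. Its array of differences is
\[
D_{U_x}(i,i')=\{(j-j',\,(j-j'+s_i-s_{i'})x):\,0\le j,j'\le k-1\},
\]
with $j\neq j'$ required when $i=i'$.

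Put $d=j-j'\in\{-(k-1),\ldots,k-1\}$. These $2k-1$ integers are pairwise distinct modulo $m=2k-1$, so the first coordinate $d \bmod m$ determines $d$ as an integer. Set $c=d+s_i-s_{i'}$.
\begin{itemize}
\item For $i=i'$ we have $c=d\neq0$.
\item For $i\neq i'$ we have $|s_i-s_{i'}|\ge 2k-1>|d|$, so again $c\neq 0$.
\item In every case $|c|\le (k-1)+(M-1)(2k-1)=2w-M-k<2w-1$.
\end{itemize}
Hence $c$ is a unit in $\Z_{L'}$. It follows that $(d,cx)=(d,cx')$ forces $x=x'$, so the arrays $D_{U_x}$, $x\in\Z_{L'}$, are entrywise mutually disjoint. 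This is where the hypothesis that all prime factors of $L'$ are at least $2w-1$ is used.

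\emph{Disjointness between the classes.} Off the diagonal, class (a) contributes nothing. On the diagonal, class (b) only produces elements whose first coordinate is $d\neq 0$, whereas class (a) lies in $\{0\}\times\Z^*_{L'}$. So the two classes do not interact, and $\C$ is a code in $\MCCAC(M,L,w)$ with $Mh+L'$ codewords.

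The main obstacle is the choice of the shifts $s_i$ in class (b). They must simultaneously make every off-diagonal coefficient $c$ nonzero, so that the collision $(0,\,\text{const})$ across different $x$ cannot occur, and keep $|c|$ below the smallest prime factor of $L'$. The spacing $s_i=(i-1)(2k-1)$ is chosen precisely to meet both requirements, and checking the bound $2w-M-k<2w-1$ is the crux of the verification.
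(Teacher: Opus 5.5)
Your proof is correct and follows the paper's approach. You use the same two classes: the given CAC copied onto each channel inside $\{0\}\times\Z_{L'}$, plus $L'$ codewords of the form $\{(i,\lambda(1,x))\}$ with $w/M$ multipliers $\lambda$ per channel. (Yours are of this form because $(j,(j+s_i)x)=(j+s_i)(1,x)$.) The verification is also the same: the $\Z_{2k-1}$-coordinate determines the multiplier difference, and a nonzero integer of absolute value below $2w-1$ is a unit modulo $L'$.

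The only difference is cosmetic. The paper takes contiguous blocks $\{(i-1)k,\ldots,ik-1\}$ of the progression $\{0,a,\ldots,(w-1)a\}$, so its multiplier differences stay within $w-1$. Your shifts $s_i=(i-1)(2k-1)\equiv 0 \pmod{2k-1}$ make the first coordinate simply $j$, at the cost of multipliers up to $2w-M-k$, which is still below $2w-1$.
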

\begin{proof}
For convenience, let $t=\frac{w}{M}$.
Then $L=(2t-1)L'$.
Since every prime factor $p$ of $L'$ satisfies $p\ge 2w-1>2t-1$, then $\gcd(2t-1, L')=1$, thus $\Z_L \cong \Z_{2t-1} \times \Z_{L'}$.
So, for the sake of conveniences, the elements in $\Z_L$ can be represented as ordered pairs in $\Z_{2t-1}\times \Z_{L'}$. 

The objective code will be decomposed into two classes, $\C_1$ and $\C_2$, whose constructions are given below.

Let $\C'$ be the code in $\CAC(L',w)$ with $|\C'|=h$ codewords. 
For $S\in\C'$ and $m\in\mathcal{I}_M$, define a $w$-subset in $\mathcal{I}_M\times\Z_L$ by
\begin{equation}\label{eq:M-channel_construction_C1}
\widehat{S}_m \triangleq \{(m,(0,a))\in\mathcal{I}_M\times\Z_L:\,a\in S\}.
\end{equation}
Let 
\begin{align*}
\C_1=\bigcup_{S\in\C',\, m\in\mathcal{I}_M} \widehat{S}_m.
\end{align*}
Obviously, $|\C_1|=Mh$.
It is easy to see that 
\begin{equation}\label{eq:M-channel_construction_C1_D}
D_{\widehat{S}_m}(i,j) = \begin{cases}
    \{0\}\times d^*(S) & \text{if } i=j=m, \\
    \emptyset & \text{otherwise}.
\end{cases}
\end{equation}
As $\C'$ is a CAC, it holds that $D_{\widehat{S}_m}$, $\widehat{S}_m\in\C_1$, are mutually disjoint.

Secondly, let $\Omega=\{(1,g)\in\Z_{2t-1}\times\Z_{L'}:\,g\in\Z_{L'}\}\ (=\{1\}\times\Z_{L'})$.
For $a\in\Omega$, define a $w$-subset in $\mathcal{I}_M\times\Z_L$ by
\begin{equation}\label{eq:M-channel_construction_C2}
S_a \triangleq \bigcup_{m=1}^{M}\{(m,ka)\in\mathcal{I}_M\times\Z_L:\,k=(m-1)t,(m-1)t+1,\ldots,mt-1\}.
\end{equation}
Let 
\begin{align*}
\C_2=\bigcup_{a\in\Omega}S_a.
\end{align*}
Clearly, $|\C_2|=L'$.
It is not difficult to see that
\begin{equation}\label{eq:M-channel_construction_C2_D}
D_{S_a}(i,j) = \begin{cases}
    \{ka\in\Z_L:\,k=\pm1,\pm2,\ldots,\pm(t-1)\} & \text{if } i=j, \\
    \{ka\in\Z_L:\,(i-j-1)t+1\leq k\leq (i-j+1)t-1\} & \text{if }i\neq j.
\end{cases}
\end{equation}
For convenience, let
\begin{align*}
A_{i,j}=\{x\in\Z:\,(i-j-1)t+1\leq x\leq (i-j+1)t-1\}.
\end{align*}
We will claim that $D_{S_a}\cap D_{S_b}=\emptyset$ for any two distinct $a,b\in\Omega$.
Suppose to the contrary that $D_{S_a}\cap D_{S_b}\neq\emptyset$, where $a=(1,g), b=(1,h)$ for some distinct $g,h\in\Z_{L'}$.
There are two cases according to~\eqref{eq:M-channel_construction_C2}.

\textit{Case 1.} 
If $D_{S_a}(i,i)\cap D_{S_b}(i,i)\neq\emptyset$ for some $i$, then there exist $k,k'\in\{\pm1,\pm2,\ldots,\pm(t-1)\}$ such that $k(1,g)=k'(1,h)$ in $\Z_{2t-1}\times\Z_{L'}$.
This implies that $k=k'$ and thus $k(g-h)=0$ (mod $L'$).
Since all prime factors of $L'$ are larger than or equal to $2w-1$, which is strictly larger than $2t-1$, we have $\gcd(k,L')=1$.
This further implies that $g=h$, which is a contradiction.

\textit{Case 2.}
If $D_{S_a}(i,j)\cap D_{S_b}(i,j)\neq\emptyset$ for some $i\neq j$, then there exist integers $k,k'\in A_{i,j}$ such that $k(1,g)=k'(1,h)$ in $\Z_{2t-1}\times\Z_{L'}$.
Observe that $A_{i,j}$ is consist of consecutive $2t-1$ integers.
Then, we get $k=k'$, and thus $k(g-h)=0$ (mod $L'$).
Moreover, since $i,j$ are distinct elements in $\mathcal{I}_M$, we have $-w+1\leq k\leq w+1$ and $k\neq 0$.
Again, since all prime factors of $L'$ are larger than or equal to $2w-1$, one has $\gcd(k,L')=1$.
Hence, it follows that $g=h$, and a contradiction occurs.

Furthermore, by~\eqref{eq:M-channel_construction_C1_D} and \eqref{eq:M-channel_construction_C2_D}, it is obvious that $D_{S}\cap D_{S'}=\emptyset$ whenever $S\in\C_1$ and $S'\in\C_2$.
We conclude that $\C=\C_1\uplus\C_2$ is a code in $\MCCAC(M,L,w)$.
\end{proof}

\begin{example}\label{ex:Mchannel_3_37}\rm
Let $w=6,M=3$, and $L'=37$. We have $L=3\cdot 37$.
One can check that $\{0,1,2,3,4,5\}$ and $\{0,6,12,18,24,30\}$ form a code $\C'\in\CAC(37,6)$. 
By Theorem~\ref{thm:Mchannel_construction}, we have a code in $\MCCAC(3,111,6)$ with $2\cdot 3+37=43$ codewords.
By the CRT correspondence $\theta:\Z_{111}\to\Z_3\times\Z_{37}$ given in~\eqref{eq:CRT-correspondence}, the first $6$ codewords are $\{(m,0),(m,75),(m,39),(m,3),(m,78)$, $(m,42)\}$, $\{(m,0),(m,6),(m,12),(m,18),(m,24),(m,30)\}$ for $m=1,2,3$.
The remaining $37$ codewords are $\{(1,0),(1,a),(2,2a),(2,3a),(3,4a),(3,5a)\}$, where $a\in\Omega=\{(1,g)\in\Z_3\times\Z_{37}:\,g\in\Z_{37}\}$.
The set $\Omega$ consists exactly of the multiples of 3 in $\Z_{111}$.
\end{example}

Note that the base CACs for the construction in Theorem~\ref{thm:Mchannel_construction} are not necessarily equi-difference nor tight.

\begin{corollary}\label{coro:Mchannel_construction}
If the based CAC described in Theorem~\ref{thm:Mchannel_construction} has $(L'-1)/(2w-2)$ codewords, then there exists a code $\C\in \MCCAC(M, L, w)$ with
\begin{equation}\label{eq:Mchannel_lower}
|\C|=\frac{M(M-1)L}{(2w-M)(w-1)}+\frac{M(L-1)}{2w-2}.
\end{equation}
\end{corollary}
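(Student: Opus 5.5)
Plugging $h=(L'-1)/(2w-2)$ into Theorem~\ref{thm:Mchannel_construction} gives a code $\C\in\MCCAC(M,L,w)$ with $|\C|=Mh+L'=\frac{M(L'-1)}{2w-2}+L'$ codewords. What remains is to check that this number equals the right-hand side of \eqref{eq:Mchannel_lower} when $L=(2t-1)L'$ with $t=w/M$, i.e. $2t-1=\frac{2w-M}{M}$.

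I would compute the right-hand side term by term. The first term is
\begin{align*}
\frac{M(M-1)L}{(2w-M)(w-1)} = \frac{M(M-1)}{(2w-M)(w-1)}\cdot\frac{(2w-M)L'}{M} = \frac{(M-1)L'}{w-1},
\end{align*}
using $L=\frac{2w-M}{M}L'$. For the second term, $M(L-1)=(2w-M)L'-M$, so
\begin{align*}
\frac{M(L-1)}{2w-2} = \frac{(2w-M)L'-M}{2(w-1)}.
\end{align*}

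Adding the two terms gives
\begin{align*}
\frac{2(M-1)L' + (2w-M)L' - M}{2(w-1)} = \frac{(2w-2)L' + ML' - M}{2(w-1)} = L' + \frac{M(L'-1)}{2w-2}.
\end{align*}
This is exactly $Mh+L'$, which proves the corollary.

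The only point to watch is that all quantities are integers and the hypotheses of Theorem~\ref{thm:Mchannel_construction} hold. Integrality of $h$ is part of the assumption here. The conditions $w>M\geq3$, $M\mid w$, and that every prime factor of $L'$ is at least $2w-1$ are inherited from that theorem. So the argument is essentially a direct algebraic verification with no real obstacle. It also shows, by Theorem~\ref{thm:UB-w>M}, that the resulting code meets the upper bound \eqref{eq:K-UB-w>M}, provided the prime-factor hypothesis of that theorem is satisfied by $L$.
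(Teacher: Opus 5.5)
Your proof is correct and is the paper's argument: substitute $h=(L'-1)/(2w-2)$ and $L=(2w-M)L'/M$ into $|\C|=Mh+L'$ and verify the identity, and your algebra for this is right. One caution about your closing remark: $w>M\geq 3$ forces $2\frac{w}{M}-1\geq 3$, so $L$ always has a prime factor smaller than $2w-1$; the proviso for Theorem~\ref{thm:UB-w>M} is therefore never met, and this does not show the code is optimal, as the paper notes right after the corollary.
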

\begin{proof}
The result is obtained by plugging $L'=ML/(2w-M)$ and $h=(L'-1)/(2w-2)$ into $|\C|=Mh+L'$.
\end{proof}

The code size in~\eqref{eq:Mchannel_lower} coincides with the upper bound given in Theorem~\ref{thm:UB-w>M}, namely~\eqref{eq:K-UB-w>M}.
However, the code constructed in Corollary~\ref{coro:Mchannel_construction} is not optimal since its length contains a factor, say $2\frac{w}{M}-1$, that is smaller than $2w-1$.
Therefore, Theorem~\ref{thm:UB-w>M} is not applicable in this case. 
In the following, we derive another upper bound for CACs of this length.

For any positive integer $n$, let $\tau(n)$, called the \emph{number-of-divisor function}, denote the number of positive divisors of $n$.

\begin{theorem}\label{thm:Mchannel_bounds}
Let $M,L,L'$ and $w$ be positive integers with $w>M\geq 3$ such that $M$ divides $w$, $L=(2\frac{w}{M}-1)L'$, and all prime factors of $L'$ are larger than or equal to $2w-1$. 
Then, 
\begin{align*}
K(M,L,w) \leq \frac{M(M-1)(L+(\tau(2\frac{w}{M}-1)-1)(L-L'))}{(2w-M)(w-1)}+\frac{M(L-1)+2(w-M)}{2w-2}.
\end{align*}
\end{theorem}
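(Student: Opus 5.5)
I will follow the proof of Theorem~\ref{thm:UB-w>M} as closely as possible. The difference is that codewords may now be exceptional, because $L$ has the small factor $s\triangleq 2\frac{w}{M}-1$. Let $\C\in\MCCAC(M,L,w)$. As before, split $\C$ into $\C_1$ (with $e_S=1$), $\C_2$ (with $e_S=M$) and $\C_3$ (with $2\le e_S\le M-1$). I will then bound from below the total number of differences used on the diagonal entries and on the off-diagonal entries.

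\textbf{Structure of exceptional sets.} For any $S$ and any $i,j$, the stabilizer $H=\HH(S_i-S_j)$ (or $\HH(S_i-S_i)$) has order dividing $L=sL'$. If $S_i$ (respectively the pair $\{S_i,S_j\}$) is exceptional, Lemma~\ref{lem:exceptional_old}(i) (respectively Lemma~\ref{lem:exceptional_new}(i)) gives $2\le |H|\le 2w-2$. Every prime factor of $L'$ is at least $2w-1$, so $|H|$ must divide $s$. Hence $|H|$ is a nontrivial divisor of $s$, and by Proposition~\ref{prop:unique_subgroup} the group $H$ is the unique subgroup of that order inside $G\triangleq\{kL':\,0\le k<s\}\cong\Z_s$. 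There are at most $\tau(s)-1$ such subgroups.

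\textbf{Diagonal count.} Apply Kneser's inequality~\eqref{eq:Kneser1} exactly as in~\eqref{eq:2channel_w-1pr_optimal_3}. This gives $|d^*(S_i)|\ge 2|S_i|-2-|H^*_{S_i}|$ for exceptional $S_i$ and $|d^*(S_i)|\ge 2|S_i|-2$ otherwise. Because $H_{S_i}\subseteq d(S_i)$ by Proposition~\ref{prop:stabilizer}(ii), the disjoint-difference-array property forces the sets $H^*_{S_i}$, taken over all codewords $S$ in channel $i$, to be pairwise disjoint subsets of $G\setminus\{0\}$. Their total loss is therefore at most $s-1$ per channel. Summing over the $M$ channels replaces~\eqref{eq:UB-w>M-1} by
\[
(w-1)|\C_1|+(w-M)|\C_2|+\sum_{S\in\C_3}(w-e_S)\le \frac{M(L-1)+M(s-1)}{2}.
\]
Since $M(s-1)=2(w-M)$, the right-hand side is exactly the term $\frac{M(L-1)+2(w-M)}{2}$, which matches the second summand of the claimed bound after dividing by $w-1$.

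\textbf{Off-diagonal count (main obstacle).} For a fixed ordered pair $i\neq j$, a non-exceptional pair uses $|S_i|+|S_j|-1$ differences. An exceptional pair with stabilizer $H$ satisfies $|S_i-S_j|\ge|S_i|+|S_j|-|H|$ by~\eqref{eq:Kneser2}, and $S_i-S_j$ is a union of cosets of $H$. I expect to charge the loss of the off-diagonal entries to the full cosets occupied. For each of the at most $\tau(s)-1$ nontrivial subgroups $H\le G$, the exceptional pairs with that stabilizer have pairwise disjoint difference sets that are unions of $H$-cosets. Each such pair loses at most $|H|-1$ compared with the non-exceptional count, while using a union of cosets of total size at least $|S_i|+|S_j|-|H|$. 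I would bound the loss per subgroup by the number of $H$-cosets available, roughly $L/|H|\cdot(|H|-1)\le L-L/|H|\le L-L'$. This gives a total loss of at most $(\tau(s)-1)(L-L')$ per entry $(i,j)$, so~\eqref{eq:UB-w>M-2} becomes
\[
(M-1)(2w-M)|\C_2|+\sum_{S\in\C_3}(e_S-1)(2w-e_S)\le M(M-1)\bigl(L+(\tau(s)-1)(L-L')\bigr).
\]
Making this charging precise is the delicate step. In particular I have to check that a pair with stabilizer $H$ really costs at most $|H|-1$ and that the cosets it occupies cannot be reused. If this bookkeeping breaks down, my fallback is the crude bound that each subgroup contributes at most $L-L'$ lost slots, namely the elements outside the cosets meeting $\{0\}\times\Z_{L'}$ in a suitable sense.

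\textbf{Conclusion.} Finally, multiply the diagonal inequality by $(2w-M)$ and add it to the off-diagonal one. Then use the concavity of $f(x)$ exactly as in the proof of Theorem~\ref{thm:UB-w>M} to absorb $\C_3$, and divide by $(2w-M)(w-1)$ to obtain the stated bound.
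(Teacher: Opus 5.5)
Your proposal is correct and follows the paper's proof essentially step for step. Both arguments use the same per-channel Kneser bound, in which the disjoint sets $H^*_{S_i}\subseteq G\setminus\{0\}$ give a total diagonal loss of $M(s-1)=2(w-M)$. Both also use the same off-diagonal count: for each of the at most $\tau(s)-1$ admissible stabilizers $H$, the exceptional pairs with stabilizer $H$ have difference sets that are pairwise disjoint nonempty unions of $H$-cosets, so there are at most $L/|H|$ of them. Each such pair loses at most $|H|-1$ by \eqref{eq:Kneser2}, so the total loss per $H$ is at most $L-L/|H|\le L-L'$. The charging step you flagged as delicate is therefore already rigorous as you sketched it, and your fallback is not needed.
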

\begin{proof}
Let $\C$ be any code in $\MCCAC(M,L,w)$.
We adopt the notation from the proof of Theorem~\ref{thm:UB-w>M} for convenience.
Let $\C_1=\{S\in\C:\,e_S=1\}$, $\C_2=\{S\in\C:\,e_S=M\}$, and $\C_3=\{S\in\C:\,2\leq e_S\leq M-1\}$.

For $i\in\mathcal{I}_M$, let 
\begin{align*}
\C^{(i)}\triangleq\{S_i:\,S\in\C\}
\end{align*}
and
\begin{align*}
\mathcal{E}^{(i)}\triangleq\{A\in\C^{(i)}:\,A\text{ is exceptional}\}
\end{align*}
denote the collections of all subsets and all exceptional subsets associated with the $i$-th channel, respectively.
For $A\in\mathcal{E}^{(i)}$, let $H_A=\HH(A-A)$ for notational convenience.

For any fixed index $i\in\mathcal{I}_M$, by the same arguments used to obtain~\eqref{eq:2channel_w-1pr_optimal_4}, we derive
\begin{align*}
\sum_{A\in\mathcal{E}^{(i)}}|d^*(A)| \geq \left(\sum_{A\in\mathcal{E}^{(i)}}2|A|-2\right) - \left(2\frac{w}{M}-2\right).
\end{align*}
Note that $|d^*(A)|\geq 2|A|-2$ for all $A\in\C^{(i)}\setminus\mathcal{E}^{(i)}$.
Therefore, 
\begin{equation}\label{eq:Mchannel_bounds_1}
\sum_{A\in\C^{(i)}}|d^*(A)| \geq \left(\sum_{A\in\C^{(i)}}2|A|-2\right) - 2\left(\frac{w}{M}-1\right).
\end{equation}
Summing~\eqref{eq:Mchannel_bounds_1} over all $i\in\mathcal{I}_M$, by the disjoint-difference-array property, yields
\begin{equation}\label{eq:Mchannel_bounds_2}
M(L-1) \geq 2(w-1)|\C_1| + 2(w-M)|\C_2| + 2\sum_{S\in\C_3}(w-e_S) - 2M\left(\frac{w}{M}-1\right).
\end{equation}


Now, we consider $D_S(i,j)$ ($=S_i-S_j$) for distinct indices $i,j\in\mathcal{I}_M$.
Note that $S_i-S_j=\emptyset$ for $S\in\C_1$.
For $S\in\C_2\cup\C_3$, denote by $H_S^{(i,j)}=\HH(S_i-S_j)$ for notational convenience.
By plugging $A=S_i$ and $B=-S_j$ into \eqref{eq:Kneser2}, we have
\begin{equation}\label{eq:Mchannel_bounds_3}
|S_i-S_j| \geq |S_i|+|S_j|-|H_S^{(i,j)}|.
\end{equation}

For two distinct channel indices $i,j\in\mathcal{I}_M$, let
\begin{align*}
\mathcal{E}^{(i,j)} \triangleq \{S\in\C_2\cup\C_3:\,\{S_i,S_j\} \text{ is an exceptional pair}\}
\end{align*}
denote the collections of all exceptional pairs associated with the $i$- and $j$-th channels.
Note that $|S_i-S_j|\geq |S_i|+|S_j|-1$ when $\{S_i,S_j\}$ is not an exceptional pair.
By~\eqref{eq:Mchannel_bounds_3} and the disjoint-difference-array property, we have
\begin{align*}
L &= |\Z_L| \geq \sum_{S\in\C_2\cup\C_3} |S_i-S_j| \\
&\geq \sum_{S\in\C_2\cup\C_3} \left(|S_i|+|S_j|-1\right) - \sum_{S\in\mathcal{E}^{(i,j)}}\left(|H_S^{(i,j)}|-1\right).
\end{align*}
By the same arguments used to obtain~\eqref{eq:UB-w>M-2}, we can sum the above inequality over all distinct indices $i,j\in\mathcal{I}_M$ to derive
\begin{equation}\label{eq:Mchannel_bounds_4}
\begin{split}
M(M-1)L \geq & \ (M-1)(2w-M)|\C_2| + \sum_{S\in\C_3}(e_S-1)(2w-e_S) \\
&- \sum_{i,j\in\mathcal{I}_M,i\neq j}\sum_{S\in\mathcal{E}^{(i,j)}}\left(|H_S^{(i,j)}|-1\right).
\end{split}
\end{equation}

By multiplying the inequality in~\eqref{eq:Mchannel_bounds_2} by $(2w-M)/2$ and combining the result with~\eqref{eq:Mchannel_bounds_4}, we get
\begin{align*}
& (w-1)(2w-M)(|\C_1|+|\C_2|) + \sum_{S\in\C_3}(e_S-1)(2w-e_S) \\
\leq & \ M(M-1)L + \sum_{i,j\in\mathcal{I}_M,i\neq j}\sum_{S\in\mathcal{E}^{(i,j)}}\left(|H_S^{(i,j)}|-1\right) + \frac{M(L-1)(2w-M)}{2} + (2w-M)(w-M)
\end{align*}
Note that we have dealt with the summation $\sum_{S\in\C_3}(e_S-1)(2w-e_S)$ in the proof of Theorem~\ref{thm:UB-w>M}.
By~\eqref{eq:UB-w>M-5}, the above inequality can be simplified to
\begin{equation}\label{eq:Mchannel_bounds_5}
|\C| \leq \frac{M(M-1)L+\sum_{i,j\in\mathcal{I}_M,i\neq j}\sum_{S\in\mathcal{E}^{(i,j)}}\left(|H_S^{(i,j)}|-1\right)}{(2w-M)(w-1)}+\frac{M(L-1)+2(w-M)}{2w-2}.
\end{equation}

For $S\in\mathcal{E}^{(i,j)}$, by Lemma~\ref{lem:exceptional_new}(i), $|H_S^{(i,j)}|\geq 2$.
It follows from Proposition~\ref{prop:stabilizer}(iii) that $|H_S^{(i,j)}|$ divides $|S_i-S_j|$.
Note that $|H_S^{(i,j)}|$ divides $L$ since $H_S^{(i,j)}$ is a subgroup of $\Z_L$.
For a given subset $H$ of $\Z_L$, define $\delta(H)$ as the collection of elements $S\in\mathcal{E}^{(i,j)}$ such that $H_S^{(i,j)}=H$.
Since, for two distinct $S,S'\in\delta(H)$, $(S_i-S_j)\cap(S'_i-S'_j)=\emptyset$ due to the disjoint-difference-array property, we have
\begin{align*}
    \sum_{S\in\delta(H)}|S_i-S_j| \leq L.
\end{align*}
This implies that
\begin{align*}
    |\delta(H)| \leq \frac{L}{|H|}.
\end{align*}
In other words, there are at most $L/|H|$ exceptional pairs $\{S_i,S_j\}$ over all $\mathcal{E}^{(i,j)}$ such that their set of stabilizers is $H$.
By Lemma~\ref{lem:exceptional_new}(i) again, $|H_S^{(i,j)}|\leq |S_i|+|S_j|-2 \leq w-2$, which is strictly less than any prime factor of $L'$.
By Lagrange's theorem, the value $|H_S^{(i,j)}|$ divides $(2\frac{w}{M}-1)$.
Moreover, $|H_S^{(i,j)}|\geq 2$.
That means, there are at most $\tau(2\frac{w}{M}-1)-1$ candidates for the set of stabilizers of an exceptional pair associated with $\mathcal{E}^{(i,j)}$.
Hence, we conclude that
\begin{align*}
\sum_{S\in\mathcal{E}^{(i,j)}}\left(|H_S^{(i,j)}|-1\right) &\leq \left(\tau(2\frac{w}{M}-1)-1\right) \cdot \frac{L}{|H|}\left(|H|-1\right) \\
&\leq \left(\tau(2\frac{w}{M}-1)-1\right)(L-L'),
\end{align*}
where the second inequality is due to $|H|$ being a factor of $(2\frac{w}{M}-1)$.
Finally, we get
\begin{align*}
\sum_{i,j\in\mathcal{I}_M,i\neq j}\sum_{S\in\mathcal{E}^{(i,j)}}\left(|H_S^{(i,j)}|-1\right) \leq M(M-1) \left(\tau(2\frac{w}{M}-1)-1\right)(L-L').
\end{align*}
The result follows by~\eqref{eq:Mchannel_bounds_5}.
\end{proof}

\section{Two Extensions}\label{sec:extension}

\subsection{At Most One-Packet Per Time Slot MC-CACs}\label{sec:AM-OPPTS}

A code $\C\in\MCCAC(M,L,w)$ is said to have \textit{at most one-packet per time slot (AM-OPPTS)} if, for all codeword $S\in\C$, the element ``one'' appears at most once in each column~\cite{WFLWG23}.
That is, $|\{m:\,(m,t)\in S\}|\leq 1$ for each $t\in\Z_L$.
Obviously, the AM-OPPTS property is meaningless if $L<w$.
Let AM-OPPTS MC-CAC$(M,L,w)$ be the collection of codes in $\MCCAC(M,L,w)$ with AM-OPPTS property.
The maximum size of a code in AM-OPPTS $\MCCAC(M,L,w)$ is denoted by $A(M,L,w)$, i.e.,
\begin{align*}
A(M,L,w)\triangleq\max\{|\C|:\,\C\in\text{AM-OPPTS }\MCCAC(M,L,w)\}.
\end{align*}
By definitions, we have $A(M,L,w)\leq K(M,L,w)$ for any integers $M,L$ and $w$.

The following upper bound of $A(M,L,w)$ is due to \cite[Corollaries 3.6, 3.7]{WFLWG23}.

\begin{theorem}[\!\!\cite{WFLWG23}]\label{thm:UB-A-any-w}
Let $M,L$ and $w$ be positive integers. 
If the prime factors of $L$ are all larger than or equal to $2w-1$, we have
\begin{equation}\label{eq:A-UB-any-w}
A(M,L,w)\leq \frac{M(M-1)(L-1)}{w(w-1)}+\frac{M(L-1)}{2w-2}.
\end{equation}
\end{theorem}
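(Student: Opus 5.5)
We follow the template of the new proof of Theorem~\ref{thm:UB-any-w} given in Appendix~\ref{Appendix:A}, which is also the scheme used for Theorem~\ref{thm:UB-w>M}. The only new ingredient is that the AM-OPPTS property lets us remove the element $0$ from every off-diagonal entry of the difference array. Let $\C$ be any code in AM-OPPTS $\MCCAC(M,L,w)$, and for each $S\in\C$ write $e_S$ for the number of occupied channels.

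\textbf{Diagonal count.} Since the prime factors of $L$ are all at least $2w-1$, Lemma~\ref{lem:exceptional_2D} shows that no codeword is exceptional. Hence $|d^*(S_i)|\ge 2|S_i|-2$ whenever $S_i\neq\emptyset$. Summing over channels and codewords, exactly as in \eqref{eq:UB-w>M-1}, gives
\[
\sum_{S\in\C}(w-e_S)\le \frac{M(L-1)}{2}.
\]

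\textbf{Off-diagonal count.} For distinct $i,j$, AM-OPPTS says no time slot is used on two channels, so $S_i\cap S_j=\emptyset$ and therefore $0\notin S_i-S_j$. Thus the pairwise disjoint sets $D_S(i,j)$, $S\in\C$, all lie in $\Z_L^*$, and the bound $L$ in \eqref{eq:UB-w>M-Ds} improves to $L-1$. Non-exceptionality gives $|S_i-S_j|\ge |S_i|+|S_j|-1$. Summing over ordered pairs $i\neq j$ and using the identity in the proof of Theorem~\ref{thm:UB-w>M},
\[
\sum_{S\in\C}(e_S-1)(2w-e_S)\le M(M-1)(L-1).
\]

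\textbf{Combining.} Multiply the diagonal inequality by a weight $\lambda>0$, add the off-diagonal inequality, and look at the per-codeword coefficient
\[
g(e)=\lambda(w-e)+(e-1)(2w-e),\qquad 1\le e\le\min(M,w).
\]
The function $g$ is concave in $e$, so its minimum over the range is at an endpoint: $g(1)=\lambda(w-1)$ and $g(w)=w(w-1)$. Choosing $\lambda=w$ makes these equal, so $g(e_S)\ge w(w-1)$ for every $S$. This yields
\[
w(w-1)|\C|\le \frac{wM(L-1)}{2}+M(M-1)(L-1),
\]
which is \eqref{eq:A-UB-any-w}. When $M<w$ the right endpoint becomes $e=M$, and the same choice $\lambda=w$ still works because $g(M)\ge w(w-1)$ there.

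\textbf{Main obstacle.} The delicate point is this last minimization. I would verify carefully, as in the Appendix~\ref{Appendix:A} argument, that with $\lambda=w$ concavity forces every $e_S\in[1,\min(M,w)]$ to satisfy $g(e_S)\ge w(w-1)$. In particular, the intermediate values $2\le e_S\le M-1$ that appear in $\C_3$ must be handled just as in \eqref{eq:UB-w>M-4}.
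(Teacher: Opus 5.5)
Your proof is correct and is essentially the paper's argument: the Appendix~\ref{Appendix:A} counting with weight $\lambda=w$ (your $g$ is the function $f$ there), combined with the AM-OPPTS observation $0\notin D_S(i,j)$, which replaces $M(M-1)L$ by $M(M-1)(L-1)$ exactly as in the proof of Theorem~\ref{thm:UB-A-w>M}. The minimization you flag as delicate is immediate, since $g(e)-w(w-1)=(e-1)(w-e)\ge 0$ for all $1\le e\le w$, which covers every $e_S\le\min(M,w)$ at once.
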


Note that Theorem~\ref{thm:UB-A-any-w} can be derived by the same argument we used in the proof of Theorem~\ref{thm:UB-any-w}, which is provided in Appendix~\ref{Appendix:A}.

The upper bounds in Theorem~\ref{thm:UB-A-any-w} can be further improved in the case where $M<w$, which is shown in the following theorem.

\begin{theorem}\label{thm:UB-A-w>M}
Let $M,L$ and $w$ be positive integers with $M<w$. 
If the prime factors of $L$ are all larger than or equal to $2w-1$, we have
\begin{equation}\label{eq:A-UB-w>M}
A(M,L,w)\leq \frac{M(M-1)(L-1)}{(2w-M)(w-1)}+\frac{M(L-1)}{2w-2}.
\end{equation}
\end{theorem}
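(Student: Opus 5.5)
The plan is to rerun the proof of Theorem~\ref{thm:UB-w>M} almost verbatim and change only the count of cross-channel differences, using $L-1$ available values instead of $L$. Let $\C$ be an AM-OPPTS code in $\MCCAC(M,L,w)$. Partition it as before into $\C_1$ ($e_S=1$), $\C_2$ ($e_S=M$) and $\C_3$ ($2\le e_S\le M-1$). Since every prime factor of $L$ is at least $2w-1$, Lemma~\ref{lem:exceptional_2D} shows that no codeword is exceptional. The diagonal count therefore goes through unchanged and gives \eqref{eq:UB-w>M-1}, namely
\[(w-1)|\C_1|+(w-M)|\C_2|+\sum_{S\in\C_3}(w-e_S)\le \tfrac{M(L-1)}{2}.\]

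The only new ingredient comes from the AM-OPPTS property. For every codeword $S$ and distinct $i,j$, the sets $S_i$ and $S_j$ are disjoint, because no time slot is used on two channels. Hence $0\notin S_i-S_j=D_S(i,j)$. The sets $D_S(i,j)$, $S\in\C_2\cup\C_3$, are therefore pairwise disjoint subsets of $\Z_L^*$, and \eqref{eq:UB-w>M-Ds} improves to
\[L-1\ge \sum_{S\in\C_2\cup\C_3}|D_S(i,j)|.\]
Non-exceptionality still gives $|S_i-S_j|\ge |S_i|+|S_j|-1$. Summing over all ordered pairs $i\ne j$ exactly as before, we obtain \eqref{eq:UB-w>M-2} with $M(M-1)L$ replaced by $M(M-1)(L-1)$.

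Next, multiply the diagonal inequality by $2w-M$ and add it to the modified cross inequality. This yields \eqref{eq:UB-w>M-3} with right-hand side $\frac{M(L-1)}{2}(2w-M)+M(M-1)(L-1)$. For $M=2$ we have $\C_3=\emptyset$, so the bound follows immediately. For $M\ge3$, the concavity argument on $f(x)=-x^2+(M+1)x+2w^2-Mw-2w$ over $[2,M-1]$ gives $f(e_S)\ge(2w-M)(w-1)$ for every $S\in\C_3$. The left side is then at least $(w-1)(2w-M)|\C|$, and dividing through gives \eqref{eq:A-UB-w>M}.

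I see essentially no obstacle. The one point needing care is that the disjointness $S_i\cap S_j=\emptyset$ is exactly what AM-OPPTS gives, and that it excludes $0$ from every cross-difference set simultaneously, for all codewords and all ordered pairs $(i,j)$.
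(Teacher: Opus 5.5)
Your proposal is correct and follows the paper's proof exactly. The AM-OPPTS property forces $0\notin D_S(i,j)$ for $i\neq j$, which sharpens \eqref{eq:UB-w>M-Ds} to $\sum_{S\in\C_2\cup\C_3}|D_S(i,j)|\le L-1$ and replaces $M(M-1)L$ by $M(M-1)(L-1)$ in \eqref{eq:UB-w>M-3}; the remainder of the proof of Theorem~\ref{thm:UB-w>M} then carries over unchanged. One small point: for $M=1$ there are no cross-channel terms, and the diagonal count alone gives the bound, so you should treat $\C_2$ as empty there to avoid counting each codeword in both $\C_1$ and $\C_2$.
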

\begin{proof}
The characteristic of a code $\C$ in AM-OPPTS $\MCCAC(M,L,w)$ is that, $0\notin D_S(i,j)$ for any codeword $S$ involved and any distinct indices $i,j\in\mathcal{I}_M$.
Therefore, the inequality in~\eqref{eq:A-UB-w>M} can be obtained in the same arguments given in the proof of Theorem~\ref{thm:UB-w>M}, but just modifying~\eqref{eq:UB-w>M-Ds} to be 
\begin{align*}
\sum_{S\in\C_2\cup\C_3}|D_S(i,j)|\leq |\Z^*_L| = L-1,
\end{align*}
which changes the last term $M(M-1)L$ on the right-hand-side in~\eqref{eq:UB-w>M-3} to $M(M-1)(L-1)$.
Hence, the result follows.
\end{proof}

Based on the constructions developed in Sections~\ref{sec:MC-CAC_2} and \ref{sec:MC-CAC_M}, we obtain the following results.

\begin{theorem}\label{thm:2channel_AMOPPTS}
Let $p_1<\cdots<p_n$ be primes such that $p_i-1$ is divisible by $2w-2$ for each $i$.
For each $1\leq i\leq n$, if there is a code in $\CACe(p_i,w)$ with $(p_i-1)/(2w-2)$ codewords and the two conditions in~\eqref{eq:Q1} and\eqref{eq:Q2} hold, then for any positive integers $r_1,\ldots,r_n$, 
\begin{align*}
\prod_{i=1}^{n}p_i^{r_i} + \frac{\left(\prod_{i=1}^{n}p_i^{r_i}\right)-1}{w-1}-1 \leq A(2,L,w) \leq  \prod_{i=1}^{n}p_i^{r_i} + \frac{\left(\prod_{i=1}^{n}p_i^{r_i}\right)-1}{w-1},
\end{align*}
where $L=(w-1)p_1^{r_i}\cdots p_n^{r_n}$.
\end{theorem}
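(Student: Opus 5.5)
The plan is to get the upper bound from the trivial inequality $A(2,L,w)\leq K(2,L,w)$ and the lower bound by deleting a single codeword from the code of Theorem~\ref{thm:2channel_w-1pr_construction}.

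\textbf{Upper bound.} Every AM-OPPTS MC-CAC is in particular an MC-CAC, so $A(2,L,w)\leq K(2,L,w)$. Under the stated hypotheses, Theorem~\ref{thm:2channel_w-1pr_optimal} gives $K(2,L,w)=\prod_{i}p_i^{r_i}+\bigl(\prod_i p_i^{r_i}-1\bigr)/(w-1)$, which is exactly the right-hand side. Note that Theorem~\ref{thm:UB-A-w>M} cannot be used here, because $L$ has the factor $w-1<2w-1$.

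\textbf{Lower bound.} I would take the optimal code $\C=\C_1\cup\C_2\cup\C_3$ from the proof of Theorem~\ref{thm:2channel_w-1pr_construction}, with $m_i=(p_i-1)/(2w-2)$. Its size is the right-hand side above. I would then check codeword by codeword whether a time slot is used on both channels. For $M=2$, a codeword $S$ violates AM-OPPTS exactly when $S_1\cap S_2\neq\emptyset$, that is, when $0\in D_S(1,2)$.
\begin{itemize}
\item Each codeword $T_a^m\in\C_2$ uses only one channel, so it is automatically fine.
\item For $S_a^1\in\C_1$ (and symmetrically $S_a^2$), a collision means $-(1,a)=j(1,a)$ for some $0\leq j\leq w-2$, i.e.\ $(j+1)(1,a)=0$ in $\Z_{w-1}\times\Z_{L'}$.
\item The first coordinate forces $j+1\equiv 0 \pmod{w-1}$, hence $j=w-2$.
\item The second coordinate then forces $(w-1)a=0$ in $\Z_{L'}$. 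Since $\gcd(w-1,L')=1$, this gives $a=0$, which is impossible because every element of $\widehat{Q}$ is nonzero.
\end{itemize}
Equivalently, by \eqref{eq:2channel_w-1pr_C1_DS} the $(1,2)$ entry of $\bigcup_{S\in\C_1}D_S$ is $\Z_{w-1}\times\Z^*_{L'}$, which does not contain $0$.

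\textbf{Removing the bad codeword.} The only offending codeword is $U\in\C_3$, which contains both $(1,(0,0))$ and $(2,(0,0))$. Removing $U$ leaves an AM-OPPTS code of size one less than before. A subcode of an MC-CAC still has the disjoint-difference-array property, so this gives the lower bound.

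The only real obstacle is the collision check for $\C_1$, and it is handled by the coprimality of $w-1$ and $L'$ as above. All remaining steps are direct citations of earlier results.
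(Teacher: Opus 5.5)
Your proposal is correct and follows the same route as the paper: the upper bound comes from $A(2,L,w)\leq K(2,L,w)$ together with Theorem~\ref{thm:2channel_w-1pr_optimal}, and the lower bound comes from deleting the single codeword $U\in\C_3$ from the construction of Theorem~\ref{thm:2channel_w-1pr_construction}. Your explicit check that $0\notin D_S(1,2)$ for the codewords of $\C_1$, using $\gcd(w-1,L')=1$ and $0\notin\widehat{Q}$, fills in a step the paper leaves implicit.
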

\begin{proof}
The lower bound can be derived from the construction in the proof of Theorem~\ref{thm:2channel_w-1pr_construction} by ignoring the last codeword, say the unique one in $\C_3$.
The upper bound is a consequence of Theorem~\ref{thm:2channel_w-1pr_optimal} because $A(2,L,w)\leq K(2,L,w)$.
\end{proof}

\begin{theorem}\label{thm:Mchannel_AMOPPTS}
Let $M,L,L'$, and $w$ be positive integers such that $w>M\geq 3$, $M$ divides $w$, $L=(2\frac{w}{M}-1)L'$, and all prime factors of $L'$ are larger than or equal to $2w-1$. 
If there is a code in $\CACe(L',w)$ with $(L'-1)/(2w-2)$ codewords, then 
\begin{align*}
\frac{M(M-1)L}{(2w-M)(w-1)}+&\frac{M(L-1)}{2w-2} \leq A(M,L,w) \\ 
&\leq \frac{M(M-1)(L+(\tau(2\frac{w}{M}-1)-1)(L-L'))}{(2w-M)(w-1)}+\frac{M(L-1)+2(w-M)}{2w-2}.
\end{align*}
\end{theorem}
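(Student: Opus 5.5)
The lower bound will come from the construction in Theorem~\ref{thm:Mchannel_construction}, and the upper bound from the inequality $A(M,L,w)\leq K(M,L,w)$ combined with Theorem~\ref{thm:Mchannel_bounds}.

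For the upper bound, Theorem~\ref{thm:Mchannel_bounds} applies under exactly the same hypotheses ($w>M\geq 3$, $M\mid w$, $L=(2\frac{w}{M}-1)L'$, all prime factors of $L'$ at least $2w-1$). Since $A(M,L,w)\leq K(M,L,w)$ by definition, the stated right-hand inequality follows immediately and no further work is needed there.

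For the lower bound, I will take the given code $\C'\in\CACe(L',w)$ with $h=(L'-1)/(2w-2)$ codewords. Theorem~\ref{thm:Mchannel_construction} then yields $\C=\C_1\uplus\C_2\in\MCCAC(M,L,w)$ with $Mh+L'$ codewords, and Corollary~\ref{coro:Mchannel_construction} already identifies $Mh+L'$ with the stated expression $\frac{M(M-1)L}{(2w-M)(w-1)}+\frac{M(L-1)}{2w-2}$, using $L'=ML/(2w-M)$. It therefore remains to check that this $\C$ has the AM-OPPTS property, i.e., that no codeword places two packets in the same time slot on different channels. Equivalently, $0\notin D_S(i,j)$ for $i\neq j$.

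\begin{itemize}
\item Codewords $\widehat{S}_m\in\C_1$ occupy a single channel, so the property holds trivially.
\item For $S_a\in\C_2$ with $a=(1,g)$, the time slots used are $ka$ for $k=0,1,\ldots,w-1$, with distinct $k$ assigned to distinct channels. We need $ka\neq k'a$ for distinct $k,k'\in\{0,\ldots,w-1\}$, that is, $(k-k')a\neq 0$ in $\Z_{2t-1}\times\Z_{L'}$. Write $d=k-k'$, so $0<|d|\leq w-1$. The second coordinate is $dg$. If $\gcd(g,L')=1$, then $dg\neq 0$ because $\gcd(d,L')=1$ (every prime factor of $L'$ is at least $2w-1$). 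If $g=0$ or $g$ is a nonunit, the first coordinate $d \bmod (2t-1)$ may vanish, e.g.\ $d=2t-1\leq w-1$, so there is a potential problem.
\end{itemize}

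The main obstacle is this point, so I will examine \eqref{eq:M-channel_construction_C2_D} more carefully. For $i\neq j$, the differences $D_{S_a}(i,j)$ are $ka$ with $k\in A_{i,j}$, and $A_{i,j}$ contains $k=0$ exactly when $|i-j|=1$. So $0\in D_{S_a}(i,j)$ would require $ka=0$ for some nonzero $k\in A_{i,j}$. Such a $k$ satisfies $|k|\leq w$ and $(2t-1)\mid k$, and $kg\equiv 0\pmod{L'}$, which forces $g=0$ because $\gcd(k,L')=1$. Hence only the single codeword with $g=0$ (i.e., $a=(1,0)$) can violate AM-OPPTS, and only when its exponents differ by a multiple of $2t-1$ across channels. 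For instance, $k=0$ on channel $1$ and $k=2t-1$ on channel $2$ give the same slot.

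I would handle this by checking whether the lemma's count survives; if it does not, I would look for a replacement codeword. If that fails too, the honest bound becomes $Mh+L'-1$, mirroring the ``$-1$'' in Theorem~\ref{thm:2channel_AMOPPTS}. For $a=(1,g)$ with $g\neq 0$, I still need to confirm that no $ka=k'a$ arises when $|k-k'|<w$. This holds as computed, because the second coordinate $(k-k')g$ can vanish only if $L'$ has a prime factor dividing $k-k'$, which is impossible.
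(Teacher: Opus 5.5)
Your upper bound ($A\le K$ plus Theorem~\ref{thm:Mchannel_bounds}) matches the paper's proof. So does your plan for the lower bound, which is to show that the code of Theorem~\ref{thm:Mchannel_construction} is AM-OPPTS. Your diagnosis of the codeword $S_{(1,0)}$ is correct, and it exposes an error in the paper's own argument, which simply reads $0\notin D_{S_a}(i,j)$ off \eqref{eq:M-channel_construction_C2_D}.

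For $i\neq j$, $A_{i,j}$ is a run of $2t-1$ consecutive \emph{nonzero} integers of absolute value at most $w-1$. It therefore contains a nonzero multiple $k$ of $2t-1$, and $k(1,0)=0$. So $S_{(1,0)}$ collides on every pair of channels. In Example~\ref{ex:Mchannel_3_37}, $a=(1,0)$ is $37\in\Z_{111}$, and each of the slots $0,37,74$ is used on two channels. Meanwhile $\C_1$ and all $S_{(1,g)}$ with $g\neq 0$ are fine, as you show. Your remark that $A_{i,j}$ contains $k=0$ when $|i-j|=1$ is wrong (e.g.\ $A_{i,i-1}=\{1,\ldots,2t-1\}$), but you immediately use the correct nonzero-$k$ argument instead.

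The genuine gap is that the proposal ends there. It proves only $A(M,L,w)\ge Mh+L'-1$ and leaves $Mh+L'$ as a contingency plan. The repair you suggest, replacing $S_{(1,0)}$ by another codeword, cannot work.

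\begin{itemize}
\item On the diagonals, every entry outside $D_{S_{(1,0)}}$ is already used: $\{0\}\times\Z_{L'}^*$ by $\C_1$, and $\Z_{2t-1}^*\times\Z_{L'}^*$ by the $S_{(1,g)}$ with $g\neq 0$.
\item Off the diagonal, $\Z_{2t-1}\times\Z_{L'}^*$ is already used.
\item So a substitute codeword $U$ would need every difference in $G=\Z_{2t-1}\times\{0\}$.
\item Being AM-OPPTS, its $w$ packets would then occupy $w$ distinct slots of a single coset of $G$. This is impossible, since $w=Mt>2t-1$.
\end{itemize}

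More generally, the counting in the proof of Theorem~\ref{thm:UB-A-w>M} uses only non-exceptionality. Hence an AM-OPPTS code in which no $S_i$ and no pair $\{S_i,S_j\}$ is exceptional has at most $Mh+L'-\frac{M(M-1)}{(2w-M)(w-1)}<Mh+L'$ codewords.

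Attaining the stated lower bound would therefore need a genuinely different construction. It would have to use exceptional sets or pairs, whose difference sets are periodic under a nontrivial subgroup of $G$. Neither you nor the paper provides one. What the construction actually yields is $Mh+L'-1$, paralleling the ``$-1$'' in Theorem~\ref{thm:2channel_AMOPPTS}.
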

\begin{proof}
In the proof of Theorem~\ref{thm:Mchannel_construction}, it is easy to see from~\eqref{eq:M-channel_construction_C2_D} that $0\notin D_{S_a}(i,j)$ for $i\neq j$.
This implies that the code obtained in Theorem~\ref{thm:Mchannel_construction} is has the AM-OPPTS property.
The upper bound is a consequence of Theorem~\ref{thm:Mchannel_bounds} because $A(M,L,w)\leq K(M,L,w)$.
\end{proof}

\subsection{Mixed-Weight Codes}\label{sec:mixed-weight}

Mixed-weight CACs were introduced~\cite{LWXZ25} for the purpose of increasing the throughput and reducing the access delay of some potential users with higher priority.
This subsection will provide some constructions of mixed-weight CACs, which can be seen as generalizations of Theorems~\ref{thm:single_w-1pr_construction} and \ref{thm:single_pr_construction}.
Furthermore, the concept of mixed-weight will also be extended to a multichannel version by relaxing the identical-weight constraint in prior studies of MC-CACs.


\begin{definition}\label{def:CAC-mix}\rm
Let $L$ be a positive integer and $\mathcal{W}$ be a set of positive integers.
A \textit{mixed-weight CAC} $\C$ of length $L$ with \textit{weight-set} $\mathcal{W}$ is a collection of subsets of $\Z_L$ such that (i) each subset is of size in $\mathcal{W}$; and (ii) $\C$ satisfies the disjoint-difference-set property as shown in~\eqref{eq:CAC}.
\end{definition}

Let $\CAC(L,\mathcal{W})$ denote the class of all mixed-weight CACs of length $L$ with weight-set $\mathcal{W}$.

Under the assumption that a code in $\CAC(p_{i}^{r_i},w_{i}^*)$ exists for each $i=1,2,\ldots,n$, we will first construct a mixed-weight CAC of length $p_1^{r_1}\cdots p_n^{r_n}$ with weight-set and then construct a mixed weight CAC of length $(w-1)p_1^{r_1}\cdots p_n^{r_n}$ with weight-set 
$\mathcal{W}=\{w-1, w_{1}^*,w_{2}^*,\ldots,w_{n}^*\}$.
Note that the two constructions are obtained by modifying those in Theorems~\ref{thm:single_pr_construction} and \ref{thm:single_w-1pr_construction}, respectively.

\begin{theorem}\label{thm:single_pr_mix}
Let $w_i,r_i$, $1\leq i\leq n$, be positive integers.
Assume that $p_1,p_2,\ldots,p_n$ are primes such that $p_i\geq 2w_i-1$ for each $i$.
Suppose, for each $1\leq i\leq n$, there exists a code $\mathcal{A}_i\in\CAC(p_i^{r_i},w_i)$ that contains $m_i$ equi-difference codewords.
Then there exists a code $\C\in\CAC(L,\mathcal{W})$ 
with $$|\C|=\sum_{i=1}^{n}m_i\prod_{j=i+1}^{n}p_j^{r_j},$$ where $L=p_1^{r_1}\cdots p_n^{r_n}$ and $\mathcal{W}=\{w_i\}_{i=1}^{n}$, that contains $m_i\prod_{j=i+1}^{n}p_j^{r_j}$ codewords of weight $w_i$ for each $i$.
\end{theorem}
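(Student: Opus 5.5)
I would mimic the proof of Theorem~\ref{thm:single_pr_construction}, working in $\Z_L\cong\Z_{p_1^{r_1}}\times\cdots\times\Z_{p_n^{r_n}}$ and writing elements as $n$-tuples. For each $i$, let $\Gamma_i\subseteq\Z_{p_i^{r_i}}$ be the set of $m_i$ generators of the equi-difference codewords in $\mathcal{A}_i$. Each such codeword is a translate of $\{0,g,\ldots,(w_i-1)g\}$, so its difference set is $\{jg:\,j=\pm1,\ldots,\pm(w_i-1)\}$. Since $\mathcal{A}_i$ is a CAC, these difference sets are mutually disjoint in $\Z_{p_i^{r_i}}$ as $g$ ranges over $\Gamma_i$. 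Note that, unlike Theorem~\ref{thm:single_pr_construction}, no lifting $\mathcal{S}_{r_i}(\cdot)$ is needed, since the base codes already live in $\Z_{p_i^{r_i}}$. This matches the claimed count $m_i\prod_{j>i}p_j^{r_j}$.

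Next I would define
\[
\widehat{\Gamma}_i=\{(0,\ldots,0,g,x_{i+1},\ldots,x_n):\, g\in\Gamma_i,\ x_t\in\Z_{p_t^{r_t}} \text{ for } i<t\leq n\},
\]
and for each $a\in\widehat{\Gamma}_i$ take the codeword $S_a=\{ja:\,j=0,1,\ldots,w_i-1\}$. I must check that $|S_a|=w_i$. The $i$-th coordinates $jg$ are distinct, because any nonzero difference $(j-j')g$ with $0<|j-j'|\leq w_i-1$ lies in the difference set of an $\mathcal{A}_i$ codeword of weight $w_i$, hence is nonzero. Then $d^*(S_a)=\{ja:\,j=\pm1,\ldots,\pm(w_i-1)\}$. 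The sets $\widehat{\Gamma}_i$ are disjoint, and $|\widehat{\Gamma}_i|=m_i\prod_{j>i}p_j^{r_j}$, which gives the stated sizes and weights.

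For disjointness of the difference sets, I split into two cases. If $a\in\widehat{\Gamma}_i$ and $b\in\widehat{\Gamma}_{i'}$ with $i<i'$, every element of $d^*(S_b)$ has $i$-th coordinate $0$. Every element $ja$ of $d^*(S_a)$ has $i$-th coordinate $jg\neq0$ by the distinctness just shown, so $d^*(S_a)\cap d^*(S_b)=\emptyset$. If $a,b\in\widehat{\Gamma}_i$ with $a=(0,\ldots,0,g,x)$, $b=(0,\ldots,0,h,y)$ and $ja=kb$, I project to the $i$-th coordinate to get $jg=kh$ in $\Z_{p_i^{r_i}}$. If $g\neq h$, this contradicts the disjointness of the difference sets of the two distinct codewords of $\mathcal{A}_i$. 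If $g=h$, then $(j-k)g=0$. When $j\neq k$, $(j-k)g$ is either $0$ with $j=k$, or lies in $d^*$ of the codeword generated by $g$ when $|j-k|\leq w_i-1$.

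The main obstacle is the subcase $w_i\leq|j-k|\leq 2w_i-2$, where the argument above does not apply. There I would use $p_i\geq 2w_i-1$: then $j-k$ is a unit modulo $p_i$, so $(j-k)g=0$ forces $g=0$, which is impossible. Hence $j=k$, and comparing the remaining coordinates, $jx_t=jy_t$ with $j$ a unit modulo $p_t^{r_t}$. This last step needs every $|j|\leq w_i-1$ to be coprime to $p_t$ for $t>i$, which I would justify either by assuming the primes are ordered so that $p_t\geq 2w_i-1$, or by noting that $p_t\geq 2w_t-1$ and reordering so the $w_i$ are nonincreasing. If neither holds, I would instead restrict $j$ via $jg=kh$ being units. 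From $jx=jy$ with $j$ a unit we get $x=y$, hence $a=b$.
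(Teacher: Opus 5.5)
Your construction ($\widehat{\Gamma}_i$ built directly from $\Gamma_i$ without the lift $\mathcal{S}_{r_i}$, and $S_a=\{ja:0\le j\le w_i-1\}$) matches the paper's; its proof just points back to Theorem~\ref{thm:single_pr_construction}. Your checks of $|S_a|=w_i$, of the cross-class case, and of the same-class cases $g\neq h$ and ($g=h$, $j\neq k$) are sound. The step you flagged at the end is a genuine gap, however, and only your first remedy closes it. After $j=k$ you need $j$ (with $1\le|j|\le w_i-1$) to be a unit modulo $p_t^{r_t}$ for every $t>i$, i.e.\ $p_t\ge w_i$ for all $t>i$. The stated hypothesis $p_t\ge 2w_t-1$ does not give this. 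The failure is real, not an artifact of the argument. Take $r_1=r_2=1$, $p_1=23$, $w_1=9$, $\Gamma_1=\{1\}$ and $p_2=5$, $w_2=3$, $\Gamma_2=\{1\}$, so all hypotheses hold. Then $5\cdot(1,x)=(5,0)\in d^*(S_{(1,x)})$ for all five $x\in\Z_5$, so the five weight-$9$ codewords share a difference. This also shows your third fallback cannot work. The offending $j=5$ is a unit modulo $p_i=23$ and is perfectly compatible with $jg=kh$; the collision happens entirely in the coordinates $t>i$.

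Your second remedy has the inequality backwards. If the weights are nonincreasing, $p_t\ge 2w_t-1$ gives no bound in terms of the larger $w_i$; the example above has $w_1>w_2$. It is nondecreasing weights, $w_t\ge w_i$ for $t>i$, that give $p_t\ge 2w_t-1\ge 2w_i-1$. More importantly, you may not reorder the primes or weights at all. The claimed number $m_i\prod_{j>i}p_j^{r_j}$ of weight-$w_i$ codewords depends on the indexing, so a reordered argument proves a different statement. The right fix is to state explicitly the hypothesis that the paper's proof silently inherits from Theorem~\ref{thm:single_pr_construction}, namely $p_1<p_2<\cdots<p_n$, so that $p_t>p_i\ge 2w_i-1$ for $t>i$. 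More generally, $p_t\ge w_i$ whenever $t>i$ is exactly what the construction needs. With that added, your proof is complete.
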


\begin{proof}
Let $\Gamma_i$ be the set of $m_i$ generators of the equi-difference codewords of $\mathcal{A}_i$.
The proof can be done with the same arguments as in the proof of Theorem~\ref{thm:single_pr_construction} by defining
\begin{align*}
\widehat{\Gamma}_i\triangleq\{(0,\ldots,0,g,x_{i+1},\ldots,x_n) &\in \Z_{p_1^{r_1}}\times \cdots \times \Z_{p_n^{r_n}}:\, \\ 
& g\in\Gamma_i, x_j\in\Z_{p_j^{r_j} }\text{ for }i<j\leq n\}
\end{align*} 
and replacing the definition of $S_a$, for $a\in\widehat{\Gamma}_i$, in~\eqref{eq:sginel_pr_Sa} with
\begin{align*}
S_a=\{ja\in\Z_{p_1^{r_1}}\times \cdots \times \Z_{p_n^{r_n}} :\, j=0,1,2,\ldots,w_i-1\}.
\end{align*}
The rest of the proof is omitted.
\end{proof}

\begin{theorem}\label{thm:single_w-1pr_mix}
Let $w_i,r_i$, $1\leq i\leq n$, and $w$ be positive integers.
Assume that $p_1,p_2,\ldots,p_n$ are primes such that $p_i\geq 2w-1,2w_i-1$ for each $i$.
Suppose, for each $1\leq i\leq n$, the two conditions given in~\eqref{eq:Q1} and \eqref{eq:Q2} hold and there exists a code $\mathcal{A}_i\in\CAC(p_i^{r_i},w_i)$ that contains $m_i$ equi-difference codewords.
Then there exists a code $\C\in\CAC(L,\mathcal{W})$ with 
\begin{align*}
|\C|=\frac{p_1^{r_1}\cdots p_n^{r_n}+1}{2}+\sum_{i=1}^{n}m_i\prod_{j=i+1}^{n}p_j^{r_j},    
\end{align*}
where $L=(w-1)p_1^{r_1}\cdots p_n^{r_n}$ and $\mathcal{W}=\{w-1\}\cup\{w_i\}_{i=1}^{n}$. 
In particular, $\C$ contains $(p_1^{r_1}\cdots p_n^{r_n}+1)/2$ codewords of weight $w-1$ and $m_i\prod_{j=i+1}^{n}p_j^{r_j}$ codewords of weight $w_i$ for each $i$.
\end{theorem}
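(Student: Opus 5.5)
The plan is to assemble $\C$ from three disjoint families whose differences live in three disjoint regions of $\Z_L\cong\Z_{w-1}\times\Z_{L'}$, where $L'=p_1^{r_1}\cdots p_n^{r_n}$. The regions are $\Z^*_{w-1}\times\Z_{L'}$, $\Z^*_{w-1}\times\{0\}$ and $\{0\}\times\Z^*_{L'}$. Since $p_i\geq 2w-1$, $\gcd(w-1,L')=1$ and the CRT identification is available. Disjointness between families will then be automatic from the first coordinate, together with whether the second coordinate vanishes. So the work reduces to checking disjointness inside each family.

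\emph{First family (weight $w-1$, truncated equi-difference).} Take $\widehat{Q}$ from \eqref{eq:single_w-1pr_generator}, with $|\widehat{Q}|=(L'-1)/2$. For $a\in\widehat{Q}$ put $S_a=\{j(1,a):\,j=0,1,\ldots,w-2\}$. Then
\[
d^*(S_a)=\{\pm j(1,a):\,j=1,\ldots,w-2\}.
\]
Its first coordinate is $\pm j \bmod (w-1)\neq 0$, so $d^*(S_a)\subseteq\Z^*_{w-1}\times\Z_{L'}$. This is exactly the subset \eqref{eq:single_w-1pr_d*Sa_2} with the term $(0,\pm(w-1)a)$ removed. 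Pairwise disjointness of these sets for distinct $a,b\in\widehat{Q}$ follows verbatim from Case 2 in the proof of Theorem~\ref{thm:single_w-1pr_construction}. That case uses only $p_i\geq w$ (implied by $p_i\geq 2w-1$) and condition \eqref{eq:Q2}; Case 1, which needed \eqref{eq:Q1}, no longer arises. Along the way I will note that $S_a$ really has $w-1$ distinct elements, because the first coordinates $j\bmod(w-1)$ are distinct.

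\emph{Second family (one extra codeword of weight $w-1$).} Set $U=\{(j,0):\,j=0,\ldots,w-2\}$, exactly as in the construction of $\C_3$ in Theorem~\ref{thm:2channel_w-1pr_construction}. Its difference set is $d^*(U)=\Z^*_{w-1}\times\{0\}$. This is disjoint from the first family because, in $d^*(S_a)$, the element $j(1,a)$ has second coordinate $ja\neq 0$: every component of $a$ is either a unit multiple of a layer element or the relevant $g\in\mathcal{S}_{r_i}(Q_i)$ is nonzero, and $j<p_i$. This is the observation of Remark~\ref{rm:single_w-1pr_diff_union}. The two families together give $(L'-1)/2+1=(L'+1)/2$ codewords of weight $w-1$.

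\emph{Third family (mixed weights $w_i$).} Apply Theorem~\ref{thm:single_pr_mix}, whose hypothesis $p_i\geq 2w_i-1$ holds, to get $\C''\in\CAC(L',\{w_i\})$ with $m_i\prod_{j>i}p_j^{r_j}$ codewords of weight $w_i$. Embed each $T\in\C''$ as $\{(0,t):\,t\in T\}$. The difference sets are then $\{0\}\times d^*(T)\subseteq\{0\}\times\Z^*_{L'}$, pairwise disjoint because $\C''$ is a CAC, and disjoint from the first two regions by the first coordinate.

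Summing the three families gives the stated size and weight distribution. The only step requiring care is the first: re-verifying that Case 2 of Theorem~\ref{thm:single_w-1pr_construction} covers every possible coincidence once $j,k$ are restricted to $\{1,\ldots,w-2\}$. In particular, an equation $j(1,g)=-k(1,h)$ forces $j+k\equiv 0 \pmod{w-1}$, hence $j+k=w-1$, and that equation is then ruled out by \eqref{eq:Q2}.
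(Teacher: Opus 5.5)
Your proposal is correct and is essentially the paper's proof. The truncated equi-difference sets $S_a$, $a\in\widehat{Q}$, together with the single codeword $\{(j,0):\,0\le j\le w-2\}$ (the paper's $T$), have all their differences in $\Z^*_{w-1}\times\Z_{L'}$; only Case~2 of Theorem~\ref{thm:single_w-1pr_construction} is needed, as the paper remarks (so \eqref{eq:Q1} is in fact never used); and the mixed-weight code of Theorem~\ref{thm:single_pr_mix} is lifted to $\{0\}\times\Z_{L'}$.

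One caveat, which you inherit from the paper: the omitted proof of Theorem~\ref{thm:single_pr_mix} tacitly needs $p_t\ge w_i$ whenever $t>i$, which is automatic if $p_1<\cdots<p_n$. Without it the construction fails: for $w=4$, $(p_1,w_1)=(23,12)$, $(p_2,w_2)=(7,4)$, all codewords generated by $(1,x)\in\Z_{23}\times\Z_7$ share the difference $7(1,x)=(7,0)$. So that ordering hypothesis should be added when you invoke it.
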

\begin{proof}
We adopt the notation and definitions used in the proof of Theorem~\ref{thm:single_w-1pr_construction}.
Let $L'=p_1^{r_1}\cdots p_n^{r_n}$ for convenience.
Note that $\Z_L \cong \Z_{w-1}\times\Z_{L'}$ and $\Z_{L'} \cong \Z_{p_1^{r_1}}\times\cdots\times\Z_{p_n^{r_n}}$.
We aim to construct $n+1$ classes of codewords, say $\C_{w-1},\C_{w_1},\ldots,\C_{w_n}$, consisting of codewords with weights $w-1, w_1,\ldots,w_n$, respectively.

Firstly, for $a\in\widehat{Q}$, define 
\begin{align*}
S_a=\{j(1,a)\in\Z_{w-1}\times\Z_{L'}:\,j=0,1,2\ldots,w-2\}
\end{align*}
be the $(w-1)$-subset generated by $a$, where $\widehat{Q}$ is defined in~\eqref{eq:single_w-1pr_generator}.
It follows from the proof of Theorem~\ref{thm:single_w-1pr_construction} that $d^*(S_a)$, $a\in\widehat{Q}$, are mutually disjoint.
Note that only \textit{Case 2} in the proof of Theorem~\ref{thm:single_w-1pr_construction} is needed here.
Moreover, we have 
\begin{equation}\label{eq:single_w-1pr_mix_class_1}
\bigcup_{a\in\widehat{Q}} d^*(S_a)\subseteq \Z_{w-1}^*\times\Z_{L'}^*.
\end{equation}
Then, define the codeword $T=\{(j,0)\in\Z_{w-1}\times\Z_{L'}:\, j=0,1, \cdots w-2\}$.
Obviously, $d^*(T)=\Z_{w-1}^*\times\{0\}$, and it is disjoint from $d^*(S_a)$ for every $a\in\widehat{Q}$.
Let $\C_{w-1}=\{T\}\cup\{S_a:\,a\in\widehat{Q}\}$.
It follows from~\eqref{eq:single_w-1pr_mix_class_1} that
\begin{equation}\label{eq:single_w-1pr_mix_class_2}
\bigcup_{S\in\C_{w-1}} d^*(S)\subseteq \Z_{w-1}^*\times\Z_{L'}.
\end{equation}
By~\eqref{eq:single_w-1pr_generator}, we therefore have $|\C_{w-1}|=1+(L'-1)/2$.

Finally, let $\C_{w_i}$ be the sets of codewords of weight $w_i$ that are obtained from $\mathcal{A}_i$ by the construction given in Theorem~\ref{thm:single_pr_mix} and raised from subsets of $\Z_{L'}$ to $\{0\}\times\Z_{L'}$.
More precisely, each codeword in $\C_{w_i}$ is of the form $\{(0,s)\in\Z_{w-1}\times\Z_{L'}:\,s\in S\}$, where $S$ is a codeword obtained from $\mathcal{A}_i$ by the construction given in Theorem~\ref{thm:single_pr_mix}.
It is clear that 
\begin{align*}
\bigcup_{S\in\C_{w_1}\cup\cdots\cup\C_{w_n}} d^*(S) \subseteq \{0\}\times\Z_{L'}.
\end{align*}
Hence, it follows from~\eqref{eq:single_w-1pr_mix_class_2} and Theorem~\ref{thm:single_pr_mix} that $\C_{w-1}\cup\C_{w_1}\cup\cdots\cup\C_{w_n}$ forms the desired mixed-weight code.
\end{proof}

\begin{example}\rm
Let $w=4$, $n=2$, $r_1=r_2=1$ and $p_1=23, p_2=47$.
We have $L=3243$.
Consider $w_1=9,w_2=6$ and $m_1=1,m_2=2$, and assume that the set of generators of equi-difference codewords in $\mathcal{A}_1$ and $\mathcal{A}_2$ are $\Gamma_1=\{1\}$ and $\Gamma_2=\{1,7\}$, respectively.

By Theorems~\ref{thm:single_w-1pr_construction} and \ref{thm:single_w-1pr_mix}, we have $\widehat{Q}=\widehat{Q}_1\cup\widehat{Q}_2$, where $\widehat{Q}_1=\{(g,x)\in\Z_{23}\times\Z_{47}:\,g\in Q_1, x\in\Z_{47}\}$, $\widehat{Q}_2=\{(0,g)\in\Z_{23}\times\Z_{47}:\,g\in Q_2\}$, $Q_1=\{1,2,3,4,6,8,9,12,13,16,18\}$, and $Q_2=\{1,2,3,4,6,7,8,9,12,14,16,17,18,21,24,25,27,28, 32,34,36,37,42\}$.
Let $\C_3=\{T\}\cup\{S_a:\,a\in\widehat{Q}\}$, where $T=\{(0,0),(1,0),(2,0)\}$ and $S_a=\{(0,0),(1,a),(2,2a)\}$.
Note that the elements in those codewords are considered in $\Z_3\times\Z_{1081}$, and can be mapped to elements in $\Z_{3243}$ by the CRT correspondence~\eqref{eq:CRT-correspondence}.
For example, $T=\{0,1081,2162\}$ and $S_{(1,0)}=\{0,2209,1175\}$.

Following the construction in Theorems~\ref{thm:single_pr_mix} and \ref{thm:single_w-1pr_mix}, we have $\C_9=\{S_{(1,x)}:\,x=0,1,\ldots,46\}$ and $\C_6=\{S_{(0,1)},S_{(0,7)}\}$, where
\begin{align*}
S_{(1,x)} = \{(0,j,jx)\in\Z_3\times\Z_{23}\times\Z_{47}:\, j=0,1,\ldots,8\}
\end{align*}
and
\begin{align*}
S_{(0,k)} = \{(0,0,jk)\in\Z_3\times\Z_{23}\times\Z_{47}:\, j=0,1,\ldots,5\}
\end{align*}
for $k=1,7$.
Similarly, by the CRT correspondence $\Z_3\times\Z_{23}\times\Z_{47}\cong\Z_{3243}$, we have $S_{(0,1)} =\allowbreak \{0,897,1035,1932,2070,310\}$, $S_{(0,7)}=\{0,552,759,1518,2277,3036\}$, and $S_{(1,0)}=\{0,141, \allowbreak 282,1128,1269,1410,2256,2397,2538\}$ for instance.
Hence we have $|\C_3|=541$ , $|\C_9|=47$, $|\C_6|=2$.
\end{example}

We now extend the concept of mixed-weight to MC-CACs.

\begin{definition}\label{def:MCCAC_mix}\rm
Let $M,L$ be two positive integers and $\mathcal{W}$ be a set of positive integers.
A \textit{mixed-weight MC-CAC} $\C$ of length $L$ with \textit{weight-set} $\mathcal{W}$ in $M$ channels is a collection of subsets of $\mathcal{I}_M\times\Z_L$ such that (i) each subset is of size in $\mathcal{W}$; and (ii) $\C$ satisfies the disjoint-difference-array property as shown in~\eqref{eq:MCCAC}.
\end{definition}

Let $\MCCAC(M,L,\mathcal{W})$ denote the class of all mixed-weight MC-CACs of length $L$ with weight-set $\mathcal{W}$ for $M$ channels.


The construction of MC-CACs in Theorem~\ref{thm:2channel_w-1pr_construction} relies on the CAC constructions in Theorems~\ref{thm:single_w-1pr_construction} and \ref{thm:single_pr_construction}. 
Since these admit mixed-weight generalizations (Theorems~\ref{thm:single_w-1pr_mix} and \ref{thm:single_pr_mix}, respectively), Theorem~\ref{thm:2channel_w-1pr_construction} extends naturally to the mixed-weight case.
 
\begin{theorem}\label{thm:2channel_mix}
Let $w_i,r_i$, $1\leq i\leq n$, and $w$ be positive integers.
Assume that $p_1,p_2,\ldots,p_n$ are primes such that $p_i\geq 2w-1,2w_i-1$ for each $i$.
Suppose, for each $1\leq i\leq n$, the two conditions given in~\eqref{eq:Q1} and \eqref{eq:Q2} hold, and there exists a code in $\CAC(p_i^{r_i},w_i)$ that contains $m_i$ equi-difference codewords.
Then there exists a mixed-weight code $\C\in\CAC(2,L,\mathcal{W})$ with 
\begin{align*}
|\C| = 1+ \prod_{i=1}^{n}p_i^{r_i} + \sum_{i=1}^{n}\frac{2m_i(p_i^{r_i}-1)}{p_i-1}\prod_{j=i+1}^{n}p_j^{r_j}
\end{align*}
where $L=(w-1)p_1^{r_1}\cdots p_n^{r_n}$ and $\mathcal{W}=\{w,w-1\}\cup\{w_i\}_{i=1}^{n}$. 
In particular, $\C$ contains $\prod_{i=1}^{n}p_i^{r_i}$ codewords of weight $w$, one codeword of weight $w-1$, and $\big(2m_i(p_i^{r_i}-1)/(p_i-1)\big)\prod_{j=i+1}^{n}p_j^{r_j}$ codewords of weight $w_i$, for $i=1,2,\ldots,n$.
\end{theorem}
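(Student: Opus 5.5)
The plan is to rerun the proof of Theorem~\ref{thm:2channel_w-1pr_construction} with the same three classes $\C_1,\C_2,\C_3$. Only the second class changes, and it will now be built from the mixed-weight CAC of Theorem~\ref{thm:single_pr_mix} instead of Theorem~\ref{thm:single_pr_construction}. Write $L'=\prod_i p_i^{r_i}$ and use $\Z_L\cong\Z_{w-1}\times\Z_{L'}$ as before.

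\textbf{Class $\C_1$ (weight $w$).} Keep the $L'-1$ codewords $S_a^1,S_a^2$, $a\in\widehat{Q}$. This needs only \eqref{eq:Q1}, \eqref{eq:Q2} and $p_i\geq w$, all of which hold here. Their difference arrays are mutually disjoint, and their union is the array in \eqref{eq:2channel_w-1pr_C1_DS}; this is exactly Theorem~\ref{thm:single_w-1pr_construction} and Remark~\ref{rm:single_w-1pr_diff_union}.

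\textbf{Class $\C_3$ (weight $w$ and weight $w-1$).} Keep the codeword $U$, which has weight $w$ and difference array \eqref{eq:2channel_w-1pr_C3_DS}. For the one codeword of weight $w-1$, take $U'=\{(2,(j,0)):\,j=0,\ldots,w-2\}$. It lies only on channel $2$, and $D_{U'}(2,2)=\Z^*_{w-1}\times\{0\}$. This entry is empty for $U$ and misses $\Z^*_{L'}$ in the second coordinate, so it is disjoint from everything in \eqref{eq:2channel_w-1pr_C1_DS}. Together $\C_1$ and $U$ give $L'$ codewords of weight $w$, matching the claimed count.

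\textbf{Class $\C_2$ (weights $w_i$).} Let $\Gamma_i$ be the $m_i$ generators of equi-difference codewords in the given code in $\CAC(p_i^{r_i},w_i)$. Lift them as in Theorem~\ref{thm:single_pr_construction}:
\[
\widehat\Gamma_i=\{(0,\ldots,0,g,x_{i+1},\ldots,x_n):\, g\in\mathcal{S}_{r_i}(\Gamma_i)\},
\]
with $S_a=\{ja:\,0\le j\le w_i-1\}$ for $a\in\widehat\Gamma_i$. This requires $\Gamma_i\subseteq\Z^*_{p_i}$ to be read at the prime level, as in Theorem~\ref{thm:single_pr_construction}. That theorem's layer argument shows the sets $d^*(S_a)$ are pairwise disjoint, using only $p_i\geq 2w_i-1$, the multipliers $\pm j$ being units, and distinctness of generators at the digit level. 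Since each $\mathcal{S}_{r_i}$ contributes $m_i(p_i^{r_i}-1)/(p_i-1)$ generators, this yields the counts in the statement. Place each $S_a$ on channel $1$ and on channel $2$ as $\{(m,(0,s)):\,s\in S_a\}$, exactly as $T_a^1,T_a^2$. The resulting diagonal differences lie in $\{0\}\times\Z^*_{L'}$ and the off-diagonal entries are empty, so $\C_2$ is internally disjoint and disjoint from $\C_1$, $U$ and $U'$, whose diagonal entries avoid $\{0\}\times\Z^*_{L'}$.

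The main obstacle is the mismatch between Theorem~\ref{thm:single_pr_mix}, whose count has no factor $(p_i^{r_i}-1)/(p_i-1)$ and takes generators in $\Z_{p_i^{r_i}}$, and the statement's count, which matches the $\mathcal{S}_{r_i}$ lifting. I would settle this by working with prime-level generators and verifying the digit argument of Theorem~\ref{thm:single_pr_construction} with $w$ replaced by $w_i$. Summing the three classes then gives $|\C|=1+L'+\sum_i 2m_i\frac{p_i^{r_i}-1}{p_i-1}\prod_{j>i}p_j^{r_j}$.
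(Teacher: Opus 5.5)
Your construction is the one the paper intends. You keep $\C_1$ and $U$ from Theorem~\ref{thm:2channel_w-1pr_construction}, add the weight-$(w-1)$ codeword $U'=\{(2,(j,0)):\,0\le j\le w-2\}$ (exactly the paper's), and place the weight-$w_i$ codewords diagonally on both channels inside $\{0\}\times\Z_{L'}$. Your disjointness checks for these pieces are correct, and you rightly spotted the count mismatch.

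The gap is the step where you ``read $\Gamma_i$ at the prime level''. The hypothesis only supplies generators of equi-difference codewords of a code in $\CAC(p_i^{r_i},w_i)$, i.e.\ elements of $\Z_{p_i^{r_i}}$. But $\mathcal{S}_{r_i}(\cdot)$ requires a subset of $\Z^*_{p_i}$ generating a code in $\CACe(p_i,w_i)$. Reducing the given generators modulo $p_i$ does not yield one: the eight generators $\mathcal{S}_2(\{1\})=\{1,8,15,22,29,36,43,7\}$ of a code in $\CACe(49,4)$ reduce to $1$ seven times and to $0$ once. So you have replaced the hypothesis by a stronger one rather than resolved the mismatch. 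The paper's route, through Theorem~\ref{thm:single_pr_mix}, uses the given generators in $\Z_{p_i^{r_i}}$ directly. It produces only $2m_i\prod_{j>i}p_j^{r_j}$ codewords of weight $w_i$; the factor $(p_i^{r_i}-1)/(p_i-1)$ in the statement looks carried over from Theorem~\ref{thm:2channel_w-1pr_construction}.

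No argument can close this, because the stated count is false under the stated hypothesis. Take $n=1$, $w=w_1=4$, $p_1=7$, $r_1=2$. The prime $7$ satisfies \eqref{eq:Q1}--\eqref{eq:Q2} for $w=4$ (Example~\ref{ex:single_3_7_23}), and Theorem~\ref{thm:single_pr_construction} gives a code in $\CACe(49,4)$ with $m_1=8$ codewords. The statement would then give $49+2\cdot 8\cdot 8=177$ codewords of weight $4$ in $\mathcal{I}_2\times\Z_{147}$ with pairwise disjoint difference arrays. But Theorem~\ref{thm:2channel_w-1pr_upper} gives $K(2,147,4)\le 49+16=65$.

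What your argument does prove is the corrected theorem whose hypothesis is a code in $\CACe(p_i,w_i)$ with $m_i$ codewords, and you should say so explicitly. The other possible correction keeps the hypothesis and drops the factor $(p_i^{r_i}-1)/(p_i-1)$. Either version needs one more condition. When $a,b\in\widehat\Gamma_i$ have the same $i$-th coordinate and $ja=jb$, concluding $a=b$ requires each $j$ with $1\le j\le w_i-1$ to be a unit modulo $p_t$ for every $t>i$. This holds if $p_1<\cdots<p_n$, as in Theorem~\ref{thm:single_pr_construction}, but it does not follow from $p_t\ge 2w-1,\,2w_t-1$ alone.
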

\begin{proof}
The proof follows from the combination of the proofs of Theorems~\ref{thm:2channel_w-1pr_construction}
and \ref{thm:single_pr_mix}.
The codeword of weight $w-1$ is given by $\{(2,(j,0)):\,j=0,1,\ldots,w-2\}$, which is a variation of $U$, the last codeword in the proof of Theorems~\ref{thm:2channel_w-1pr_construction}.
\end{proof}

We also have the following mixed-weight version of Theorem~\ref{thm:Mchannel_construction}.

\begin{theorem}\label{thm:Mchannel_mix}
Let $M,L,L'$, and $w$ be positive integers such that $w>M\geq 3$, $M$ divides $w$, $L=(2\frac{w}{M}-1)L'$, and all prime factors of $L'$ are larger than or equal to $2w-1$. 
If, for $1\leq m\leq M$, there is a code $\C_m\in\CAC(L',\mathcal{W}_m)$ with $|\C_m|=h_m$ codewords, then there exists a code $\C\in \MCCAC(M, L, \mathcal{W})$ with $|\C|=L'+\sum_{m=1}^{M}h_m$ codewords, where $\mathcal{W}=\{w\}\cup\bigcup_{m=1}^{M}\mathcal{W}_m$ and each $\mathcal{W}$ is an arbitrary set of positive integers.
In particular, $\C$ contains $L'$ codewords of weight $w$, and the number of codewords of weight $w'$ in $\C$ equals that in $\bigcup_{m=1}^{M}\C_m$.
\end{theorem}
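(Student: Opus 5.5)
The plan is to copy the proof of Theorem~\ref{thm:Mchannel_construction} almost verbatim, changing only the first class of codewords. Set $t=w/M$ and identify $\Z_L$ with $\Z_{2t-1}\times\Z_{L'}$ through the CRT. This identification is legitimate because every prime factor of $L'$ is at least $2w-1>2t-1$, so $\gcd(2t-1,L')=1$.

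\emph{First class.} For each $m\in\mathcal{I}_M$ and each $S\in\C_m$, define $\widehat{S}_m=\{(m,(0,a)):\,a\in S\}$, exactly as in~\eqref{eq:M-channel_construction_C1}. Let $\C^{(1)}$ be the collection of all these codewords; it has $\sum_m h_m$ members, and each has the same weight as the codeword of $\C_m$ it comes from. The array $D_{\widehat{S}_m}$ is nonzero only in the $(m,m)$ entry, where it equals $\{0\}\times d^*(S)$. Now take two distinct codewords of $\C^{(1)}$. If they come from different channels, their arrays have no common nonzero entry and are trivially disjoint. If they come from the same channel $m$, they come from two distinct codewords of $\C_m$. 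The disjoint-difference-set property of the mixed-weight CAC $\C_m$ then gives disjointness, since only property~\eqref{eq:CAC} is used, not equal weights. One point needs care: the $\C_m$ are separate codes, so a set belonging to two of them would produce two different codewords $\widehat{S}_m\ne\widehat{S}_{m'}$. These are distinct as subsets of $\mathcal{I}_M\times\Z_L$, so nothing breaks.

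\emph{Second class.} Let $\C^{(2)}=\{S_a:\,a\in\Omega\}$ with $\Omega=\{1\}\times\Z_{L'}$ and $S_a$ as in~\eqref{eq:M-channel_construction_C2}. These are $L'$ codewords of weight $w$. Their mutual disjointness is exactly Cases~1 and~2 in the proof of Theorem~\ref{thm:Mchannel_construction}, which does not involve the first class at all.

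\emph{Cross disjointness.} The remaining step, and the only one requiring real care, is to show $D_S\cap D_{S'}=\emptyset$ for $S\in\C^{(1)}$ and $S'\in\C^{(2)}$. The off-diagonal entries of $D_S$ are empty, so only diagonal entries matter. The diagonal entries of $D_{S_a}$ are $\{ka:\,0<|k|\le t-1\}$. Their first CRT coordinate is $k \bmod (2t-1)$, which is nonzero because $0<|k|<2t-1$. On the other hand, every element of $D_{\widehat{S}_m}(m,m)$ has first coordinate $0$. Hence the two arrays are disjoint.

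Putting the pieces together, $\C=\C^{(1)}\cup\C^{(2)}$ lies in $\MCCAC(M,L,\mathcal{W})$ and has $L'+\sum_m h_m$ codewords. The weight distribution matches the claim: $L'$ codewords of weight $w$ from $\C^{(2)}$, and the remaining codewords carry the weights of $\bigcup_m\C_m$. If $w$ itself lies in some $\mathcal{W}_m$, the count of weight-$w$ codewords is additive, and I would read the ``in particular'' clause in that way.
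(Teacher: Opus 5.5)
Your proposal is correct and is the same argument the paper uses. You lift each codeword of $\C_m$ to channel $m$ inside $\{0\}\times\Z_{L'}$, keep the $L'$ weight-$w$ codewords $S_a$, $a\in\Omega$, from Theorem~\ref{thm:Mchannel_construction}, and separate the diagonal differences by their first CRT coordinate. Beyond the paper's one-line ``straightforward adaptation,'' you also make explicit the cross-disjointness step and the weight count, including the multiset reading of $\bigcup_m\C_m$ and the case $w\in\mathcal{W}_m$.
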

\begin{proof}
The proof can be obtained by a straightforward adaptation of the proof of Theorem~\ref{thm:Mchannel_construction}. 
Specifically, the codewords on channel $m$ are constructed using $\C_m$.
That is, for $S\in\C_m$, $1\leq m\leq M$, let $\widehat{S} = \{(m,(0,a))\in\mathcal{I}_M\times\Z_L:\,a\in S\}$.
\end{proof}

\section{Concluding Remarks}\label{sec:conclusion}

This paper investigates MC-CACs from the perspectives of both constructions and theoretical upper bounds on the maximum code size.
We first introduce the notion of exceptional codewords for MC-CACs and establish a series of structural properties using techniques from additive combinatorics, including Kneser’s theorem.
These results yield new upper bounds on the maximum code size of MC-CACs, especially for the case where the channel number is less than the maximum number of active users.
Next, we generalize two previously known constructions of (single-channel) CACs in Theorems~\ref{thm:single_w-1pr_construction} and \ref{thm:single_pr_construction}, which in turn lead to a construction of MC-CACs with $M=2$ channels, as presented in Theorem~\ref{thm:2channel_w-1pr_construction}. 
Necessary conditions for these constructions to be optimal are characterized in Theorems~\ref{thm:single_w-1pr_optimal}, \ref{thm:single_pr_optimal}, and \ref{thm:2channel_w-1pr_optimal}, respectively.
A construction of MC-CACs for $M\geq 3$ such that $M$ divides $w$ is given in Theorem~\ref{thm:Mchannel_construction}, and a corresponding upper bound on the maximum code size is established in Theorem~\ref{thm:Mchannel_bounds}.
Our results on MC-CACs can be further extended to AM-OPPTS MC-CACs in Theorems~\ref{thm:UB-A-w>M} -- \ref{thm:Mchannel_AMOPPTS}.
Finally, we study mixed-weight MC-CACs for the first time and propose a series of constructions in Theorems~\ref{thm:2channel_mix} and \ref{thm:Mchannel_mix}, based on the structures of mixed-weight CACs given in Theorems~\ref{thm:single_w-1pr_mix} and \ref{thm:single_pr_mix}.

It has been shown in~\cite[Theorem 1]{SW10} that 
\begin{equation}\label{eq:asym_M=1}
\limsup_{L\to\infty} \frac{K(L,w)}{L} = \frac{1}{2w-2}.
\end{equation}
When considering multichannel CACs, if $M=2$, it reveals from Theorem~\ref{thm:2channel_w-1pr_optimal} that 
\begin{equation}\label{eq:asym_M=2}
\frac{K(2,L,w)}{L} \sim \frac{1}{w-1}+\frac{1}{(w-1)^2} \quad  \text{as} \quad  L\to\infty
\end{equation}
under some specific parameter settings.
For general $M\geq 3$, when $w>M$ and under other specific parameter settings, we can see from Corollary~\ref{coro:Mchannel_construction} that the lower bound of $K(M,L,w)/L$ approaches
\begin{equation}\label{eq:asym_M_lower}
\frac{M(M-1)}{(2w-M)(w-1)} + \frac{M}{2w-2}
\end{equation}
as $L$ goes to infinity.
However, based on Theorem~\ref{thm:Mchannel_bounds}, our derived upper bound of $K(M,L,w)/L$ approaches 
\begin{equation}\label{eq:asym_M_upper}
\frac{M(M-1)\left(1+(\tau(m)-1)(1-\frac{1}{m})\right)}{(2w-M)(w-1)} + \frac{M}{2w-2}
\end{equation}
as $L$ goes to infinity, where $m=2\frac{w}{M}-1$.
One can see that the asymptotic value in~\eqref{eq:asym_M_lower} reduces to~\eqref{eq:asym_M=1} and \eqref{eq:asym_M=2} when $M=1$ and $2$, respectively.
Motivated by~\eqref{eq:asym_M=1}, we have the following conjecture.
\begin{conjecture}
For all $w\geq M$, we have
\begin{align*}
\limsup_{L\to\infty} \frac{K(M,L,w)}{L} = \frac{M(M-1)}{(2w-M)(w-1)} + \frac{M}{2w-2}.
\end{align*}
\end{conjecture}
This conjecture remains an open problem for future investigation.

\appendices 

\section{A New Proof of Theorem~\ref{thm:UB-any-w}}\label{Appendix:A}

Let $\C$ be any code in $\MCCAC(M,L, w)$. 
Similar to the proof of Theorem~\ref{thm:UB-w>M}, all codewords are non-exceptional and partitioned into the three classes
\begin{align*}
\C_1 &= \{S\in \C: e_S=1\},\\
\C_2 &= \{S\in \C: e_S=w\}, \text{ and} \\
\C_3 &= \{S\in \C: 2\leq e_S \leq w-1\}.\\
\end{align*}

By the same argument in the proof of Theorem~\ref{thm:UB-w>M}, we have
\begin{equation} \label{eq:UB-any-w-1}
(w-1)|\C_1|+\sum_{S\in\C_3}(w-e_S)\leq \frac{M(L-1)}{2},
\end{equation}
and
\begin{equation}\label{eq:UB-any-w-2}
w(w-1)|\C_2| + \sum_{S\in\C_3}(e_S-1)(2w-e_S) \leq M(M-1)L,
\end{equation}
which are analogous to \eqref{eq:UB-w>M-1} and \eqref{eq:UB-w>M-2}, respectively.
Multiplying the inequality in \eqref{eq:UB-any-w-1} by $w$ and combining the result with \eqref{eq:UB-any-w-2}, we obtain
\begin{equation}\label{eq:UB-any-w-3}
w(w-1)(|\C_1|+|\C_2|) + \sum_{S\in\C_3}\left(-e_S^2+(w+1)e_S+w^2-2w\right) \leq \frac{wM(L-1)}{2}+M(M-1)L.
\end{equation}
Define a function $f(x)=-x^2+(w+1)x+w^2-2w$ on the interval $[2,w-1]$.
It is not hard to see $f(x)$ has the minimum value $w^2-2$ at $x=2$ or $x=w-1$, which leads to
\begin{equation}\label{eq:UB-any-w-4}
f(x)\geq w^2-2 \geq w(w-1)
\end{equation}
because of $w\geq 2$.
Hence, it follows from \eqref{eq:UB-any-w-3} that
\begin{equation}\label{eq:UB-any-w-5}
w(w-1)|\C| \leq \frac{wM(L-1)}{2}+M(M-1)L,
\end{equation}
as desired.

Note that in~\eqref{eq:UB-any-w-4}, $f(x)>w(w-1)$ if $w\geq 3$, leading to a strict inequality in~\eqref{eq:UB-any-w-5}.
This concludes that the equality~\eqref{eq:UB-any-w} holds only when all involved codewords are either in $\C_1$ or in $\C_2$, i.e., the packets must either be on the same channel, or else they have to be distributed across different channels.

\bibliography{Ref_MCAC}
\bibliographystyle{IEEEtran}

\end{document}